\theoremstyle{plain}
\newtheorem{theorem}{Theorem}
\newtheorem{lemma}[theorem]{Lemma}
\newtheorem{proposition}[theorem]{Proposition}
\newtheorem{corollary}[theorem]{Corollary}
\theoremstyle{definition}
\newtheorem{definition}{Definition}
\newcommand{\ten}{\otimes}
\newcommand{\proj}[1]{\ket{#1}\!\!\bra{#1}}
\newcommand{\expA}[1]{\langle#1\rangle}
\renewcommand{\tr}{\mathrm{tr}}
\newcommand{\C}{\mathbb{C}}
\renewcommand{\l}{\left}
\renewcommand{\r}{\right}
\renewcommand{\a}{\alpha}
\newcommand{\HS}{\mathcal{H}}
\newcommand{\id}{\mathbb{I}}
\newcommand{\vecP}{\vec{P}}
\newcommand{\A}{{\rm A}}
\newcommand{\B}{{\rm B}}
\newcommand{\tC}{{\rm C}}
\newcommand{\AB}{{\rm AB}}
\newcommand{\BA}{{\rm BA}}
\newcommand{\PA}{P_\A}
\newcommand{\PB}{P_\B}
\newcommand{\PAB}{P_\AB}
\renewcommand{\L}{\mathcal{L}}
\newcommand{\Q}{\mathcal{Q}}
\newcommand{\ICHSH}{I_\text{CHSH}}
\DeclareSymbolFont{stixletters}{LS1}{stix}{m}{it}
\DeclareMathAccent{\lrvec}{\mathord}{stixletters}{"95}
\newcommand{\Qstr}{\mathfrak{Q}}
\newcommand{\SQS}{\lrvec{\Qstr}}
\newcommand{\mMQS}{\Qstr^\star_{\Dm}}
\newcommand{\mMSQS}{\lrvec{\Qstr}^\star_{\Dm}}
\newcommand{\Psym}{\vecP_{\scalebox{0.6}{$\leftrightarrow$}}}
\newcommand{\BIsym}{\vec{\beta}_{\scalebox{0.6}{$\leftrightarrow$}}}
\newcommand{\Dm}{d_\text{min}}
\newcommand{\intv}[1]{[#1]}
\newcommand{\xmark}{\ding{55}}
\newcommand{\cmark}{\ding{51}}
\begin{document}
\title{Trading symmetry for Hilbert-space dimension in Bell-inequality violation}

\author{Hsin-Yu Hsu}
\affiliation{Department of Physics and Center for Quantum Frontiers of Research \& Technology (QFort), National Cheng Kung University, Tainan 701, Taiwan}

\author{Gelo Noel M. Tabia}
\affiliation{Hon Hai (Foxconn) Research Institute, Taipei, Taiwan}
\affiliation{Department of Physics and Center for Quantum Frontiers of Research \& Technology (QFort), National Cheng Kung University, Tainan 701, Taiwan}
\affiliation{Physics Division, National Center for Theoretical Sciences, Taipei 106319, Taiwan}

\author{Kai-Siang Chen}
\affiliation{Department of Physics and Center for Quantum Frontiers of Research \& Technology (QFort), National Cheng Kung University, Tainan 701, Taiwan}

\author{Mu-En Liu}
\affiliation{Universit\'e Paris-Saclay, CEA, CNRS, Institut de physique th\'eorique, 91191, Gif-sur-Yvette, France}
\affiliation{Department of Physics and Center for Quantum Frontiers of Research \& Technology (QFort), National Cheng Kung University, Tainan 701, Taiwan}

\author{Tam\'as V\'ertesi}
\affiliation{HUN-REN Institute for Nuclear Research, P.O. Box 51, H-4001 Debrecen, Hungary}

\author{Nicolas Brunner}
\affiliation{Department of Applied Physics, University of Geneva, 1211 Geneva 4, Switzerland}

\author{Yeong-Cherng Liang}
\email{ycliang@mail.ncku.edu.tw}
\affiliation{Department of Physics and Center for Quantum Frontiers of Research \& Technology (QFort), National Cheng Kung University, Tainan 701, Taiwan}
\affiliation{Physics Division, National Center for Theoretical Sciences, Taipei 106319, Taiwan}
\affiliation{Perimeter Institute for Theoretical Physics, Waterloo, Ontario, Canada, N2L 2Y5}

\begin{abstract}
In quantum information, asymmetry, i.e., the lack of symmetry, is a resource allowing one to accomplish certain tasks that are otherwise impossible. Similarly, in a Bell test using any given Bell inequality, the maximum violation achievable using quantum strategies respecting or disregarding a certain symmetry can be different.  In this work, we focus on the symmetry involved in the exchange of parties and explore when we have to trade this symmetry for a lower-dimensional quantum strategy in achieving the maximal violation of given Bell inequalities. For the family of symmetric Collins-Gisin-Linden-Massar-Popescu inequalities, we provide evidence showing that there is no such trade-off.  However, for several other Bell inequalities with a small number of dichotomic measurement settings, we show that symmetric quantum strategies in the minimal Hilbert space dimension can only lead to a suboptimal Bell violation. In other words, there exist symmetric Bell inequalities that can {\em only} be maximally violated by asymmetric quantum strategies of minimal dimension. In contrast, one can also find examples of asymmetric Bell inequalities that are maximally violated by symmetric correlations. The implications of these findings on the geometry of the set of quantum correlations and the possibility of performing self-testing therefrom are briefly discussed.
\end{abstract}

\date{\today}
\maketitle

\section{Introduction}
    
Quantum nonlocality, i.e., the quantum violation of a Bell inequality~\cite{Bell64}, manifests that quantum theory is incompatible with the notion of local causality~\cite{Bell04}. In particular, {\em no} local-hidden-variable (LHV) theory can reproduce all quantum {\em correlations} (measurement statistics) appearing in a Bell experiment. Apart from its foundational significance, the possibility of device-independent~\cite{Scarani12,Bell-RMP} (DI) quantum information processing (QIP) also arises as an important byproduct of investigating the general phenomenon of Bell nonlocality~\cite{Bell-RMP}. Notable examples of DIQIP protocols include quantum key distribution \cite{Ekert91,Mayers04, Barrett05, Vazirani14}, randomness expansion \cite{Pironio10,Colbeck11}, and various possibilities of black-box certification (see, e.g.,~\cite{Bell-RMP,Bancal15,Liang15,SLChen16,Sekatski2018,Wagner2020,Tavakoli2021,Chen2021robustselftestingof}).

Although the discussion of Bell nonlocality often centers around the lowest-dimensional qubit systems, the Bell violation of higher-dimensional (HD) quantum states has also been explored both theoretically~\cite{KGZ+00,DKZ01,CGLMP,ADG+02,BKP06,LCL07,Liang09,Lim2010,Vertesi2010PRL,SAT+17,KST+19,TBYL22} and experimentally~\cite{TAZ+PRL04,Dada2011Nphysics,Schwarz16,LLY+16}. In fact, it has long been recognized that HD quantum systems can lead to a stronger violation, and hence better resistance to (white) noise~\cite{KGZ+00,DKZ01,CGLMP} and losses~\cite{Vertesi2010PRL}. The local Hilbert space dimensions (HSDs) of the shared state, which reflect the complexity of the underlying degrees of freedom, thus serve as a {\em resource} for demonstrating Bell nonlocality.

However, for any Bell inequality, there is usually a finite-dimensional quantum strategy (QS)---consisting of the state shared by distant observers and their choice of local measurements---that can attain the maximal violation allowed in quantum theory. For example, it suffices~\cite{Masanes06} to consider qubits in maximizing the violation of any Bell inequality involving only two binary-outcome local measurements. The general problem of determining the {\em minimal} HSD required, nevertheless, is highly nontrivial, see, e.g.,~\cite{Pal09QB,PV10,PAC+2025}.

In the case when a Bell inequality comes with a certain symmetry, such as being party-permutation-invariant (PPI)~\cite{Bancal_2010, Bancal12,Liang15,FT17,Aloy2024}, the task of finding its maximal quantum violation can be simplified~\cite{Moroder13} by considering only quantum correlations, and hence QSs that respect the same symmetry. In particular, when combined with a semidefinite programming (SDP) characterization of the quantum set of correlations~\cite{TPBA24}, either with~\cite{Brunner08,DNPA14,DNPA15,NFA15} or without~\cite{NPA,NPA2008,Doherty08,Moroder13} a {\em dimension constraint}, the above observation leads to a significant reduction~\cite{Moroder13,rosset2018symdpoly,TRR19,Ioannou:2021qmh} in the number of optimization parameters. In finding an explicit QS that realizes this maximal violation, note, however, that this reduction to symmetry QSs may come at a price of an increase~\cite{Moroder13} in the required HSD.
    
Will it be possible to enjoy this symmetry reduction while keeping the required QS at its minimal HSD? Notice that correlations respecting any given symmetry represent a strict subset of all possible correlations. In this context, {\em asymmetry} is evidently a resource for Bell-inequality violation. Hence, if the answer to this question is negative for any given Bell inequality, there exists a {\em trade-off} between two kinds of resources---HSD and asymmetry---that one may employ to maximize its quantum violation. Here, we systematically explore and answer this question for Bell inequalities defined in several (mostly) bipartite Bell scenarios, including those with binary outcomes and up to four alternative measurements, as well as those with an arbitrary number of outcomes but only binary measurement choices.
   
We structure the rest of this paper as follows.	 In~\cref{Sec:Prelim}, we introduce our notations, recapitulate essential notions of Bell nonlocality~\cite{Bell-RMP}, and provide a more formal explanation of the notion of a minimal QS. After that, in~\cref{Sec:Symmetry}, we introduce the various definitions related to the symmetry of PPI and remind specifically in \cref{Sec:SymPurStrategy} how a QS giving rise to a symmetric correlation can {\em always} be converted into a purified, symmetric QS. Examples of Bell inequalities that can be maximally violated using a symmetric QS in the minimal dimension are then provided in \cref{sec:Minsymmetricstraetegy}. In contrast, examples where a trade-off exists are presented in~\cref{sec:Asymmetriccase}. In~\cref{Sec:Asym2Sym}, we give a general discussion of asymmetric QSs giving rise to symmetric correlations. Then, we discuss in \cref{Sec:GeometryandSymmetry} the implications of some of these examples on the geometry of the set of quantum correlations and self-testing.  Finally, we conclude in~\cref{Sec:Conclusion}. We provide further details about the numerical methods employed in \cref{App:Techniques} and other miscellaneous results in \cref{App:MISC}.

\section{Preliminaries}\label{Sec:Prelim}
    \subsection{Correlations in Bell scenarios}

    Consider the bipartite Bell scenario $(2,m,n)$, in which two parties, Alice and Bob, can both perform $m$ measurements, each resulting in $n$ outcomes.  Let $\intv{n}:=\{0,1,\cdots,n-1\}$. We label Alice's and Bob's settings by $x, y\in[m]$ and their outcomes by $a, b\in[n]$, respectively. The statistics of a Bell test yield the joint probability distribution (or \textit{correlation}) $\vecP_\AB = \{\PAB(a,b|x,y)\}$,  which manifests how well their measurement outcomes correlate. Throughout, when there is no risk of confusion, the subscripts $_\AB$ will be omitted for simplicity.

    We say that a correlation is \textit{Bell-local} (hereafter abbreviated as \textit{local}) if it admits an LHV description~\cite{Bell64}:
    \begin{equation}
    \label{eq:belllocality}
        \PAB(a,b|x,y)\overset{\mathcal{L}} = \sum_{\lambda}q(\lambda)f_\A(a|x,\lambda)f_\B(b|y,\lambda),
    \end{equation}
    where $\lambda$ is the local-hidden variable (equivalently, shared randomness), $q(\lambda)$ is its distribution weight, and $f_i(\cdot|\cdot,\lambda)=0,1$ with $i\in\{A,B\}$ are local response functions. Note that $\L$---the set of local correlations---forms a convex polytope, called the Bell polytope, which consists of finitely many extreme points, each corresponding to a local deterministic strategy. 
    
    When the outcomes are binary, i.e., $n=2$, we can also conveniently express a correlation $\vecP$ through the expectation values (or \textit{correlators}) $\langle A_x \rangle$, $\langle B_y \rangle$, and $\langle A_xB_y \rangle$, defined as:
    \begin{equation}\label{Eq:Correlators}
        \begin{aligned}
            &\langle A_x \rangle = \sum_{a=0,1}  (-1)^{a} P_\A(a|x), \,\, \langle B_y \rangle = \sum_{b=0,1}  (-1)^{b} P_\B(b|y), \\
            &  
            \langle A_xB_y \rangle =  \sum_{a,b=0,1}(-1)^{a+b}\PAB(a,b|x,y),
        \end{aligned}
    \end{equation}
    where $A_x$ ($B_y$) is the outcome of Alice (Bob) for her (his) $x$-th ($y$-th) measurement, while $P_\A$ and $P_\B$ are obtained from $\PAB$ via the marginalization over $B$ and $A$, respectively. For example, $P_\A(a|x) = \sum_b P_\AB(a,b|x,y)$ for all $a,x$. 
    
    In contrast with those compatible with an LHV description, a correlation $\vecP$ is said to be \textit{quantum} if it arises from locally measuring a shared quantum state $\rho_\AB$, say, acting on $\mathbb{C}^D\otimes\mathbb{C}^D$. Explicitly, a quantum correlation $\vecP$ associated with a {\em quantum strategy} (QS) 
    \begin{equation}\label{Eq:QS}
    	\Qstr=\{\rho_\AB,\{M_{a| x}^{A}\}_{a,x},\{M_{b|y}^{B}\}_{b,y}\}
    \end{equation}	 
	of local HSD upper bounded by $D$ may be computed from Born's rule as:
    \begin{equation}
    \label{Eq:Born}
        \PAB(a,b\vert x,y) 
        \overset{\mathcal{Q}}{=} \mathrm{tr}(\rho_\AB M_{a\vert x}^{A}\otimes M_{b\vert y}^{B}).
    \end{equation}
    Here, $\{M_{a\vert x}^{A}\}_a$ (respectively $\{M_{b\vert y}^{B}\}_b$) is a positive operator-valued measure (POVM) for Alice's $x$-th (Bob's $y$-th) measurement, i.e., $\sum_a M_{a\vert x}^{A} = \id_D$ and $M_{a|x}^A\succeq 0$ for all $a,x$ where $\id_D$ is the identity operator acting on $\mathbb{C}^D$. Henceforth, we denote by 
    $\Q$ the set of quantum correlations (for any given Bell scenario). A celebrated discovery by Bell~\cite{Bell64} is that not all $\vecP\in\Q$ can be cast in the form of \cref{eq:belllocality}, viz. some quantum $\vecP$ lies outside of $\L$.

    Mathematically, a Bell polytope $\L$ can equivalently be described in terms of a minimal set of halfspaces, called Bell inequalities.
    In the $(2,m,n)$ Bell scenario, a general linear Bell inequality reads as:
    \begin{equation}
    \label{eqn:bellineq}
        I_{\vec{\beta}}
        = \vec{\beta} \cdot \vec{P}
        = \sum_{x,y=0}^{m-1}\sum_{a,b=0}^{n-1}\beta_{ab}^{xy}\PAB(a,b\vert x,y)\overset{\mathcal{L}}{\le} L_{\vec{\beta}},
    \end{equation}
    where $\vec{\beta} = \{\beta_{ab}^{xy}\}$ is a vector of real coefficients that determine the corresponding local bound $L_{\vec{\beta}}$. By definition, a Bell inequality, cf. \cref{eqn:bellineq}, is satisfied by {\em all} $\vecP\in\L$. Hence, the fact that a $\vecP\not\in\L$, i.e., the correlation $\vecP$ is \textit{nonlocal}, can be witnessed from its {\em violation} of a Bell inequality.
    
    As an explicit example, recall from~\cite{Clauser69} the simplest nontrivial Bell inequality, i.e., the CHSH Bell inequality, which admits a PPI form: 
        \begin{equation}\label{Eq:CHSH}
                    \ICHSH = \sum_{x,y=0}^{1}\sum_{a,b=0}^{1} (-1)^{xy+a+b} \PAB(a,b\vert x,y)
                    \overset{\mathcal{L}}{\le} 2.
        \end{equation}
    Often, this is also written, in the notations of~\cref{Eq:Correlators}, as
       \begin{equation}\label{Eq:CHSH2}
           \begin{aligned}
               \ICHSH =\langle A_0 B_0\rangle+\langle A_0 B_1\rangle+\langle A_1 B_0\rangle-\langle A_1 B_1\rangle \overset{\L}{\le} 2.
           \end{aligned}
       \end{equation}

    \subsection{Minimal QS for maximal quantum violation}

	For any given Bell inequality $I$, it is natural to wonder the extent to which it can be violated quantum-mechanically and identify the QSs that result in this maximal violation. In particular, from a resource-theoretic perspective, it is of interest to determine the {\em minimal}, i.e., the {\em smallest} local HSD $\Dm$ capable of realizing such a quantum violation. For concreteness, we refer to any QS in $\Dm$ that realizes the maximal quantum violation of a Bell inequality as a {\em minimal maximizing QS}, which we denote by $\mMQS$.
	
	For example, for Bell inequalities defined in an $(N,2,2)$ Bell scenario (with $N$ being any integer larger than or equal to $2$), it is known~\cite{Masanes06} that $\Dm=2$ and we may take $\mMQS$ to be an $N$-qubit pure state along with projection-valued measures (PVMs), i.e., projective measurements. In particular, a well-known two-qubit strategy achieving the maximal violation of the CHSH Bell inequality \cite{Clauser69} of \cref{Eq:CHSH2} consists of the following state and observables:
    \begin{equation}\label{eq:max_CHSH_strategy}
        \begin{gathered}
            \ket{\Phi^+}_\AB = \frac{1}{\sqrt{2}}[{\ket{00}}+\ket{11}],\\
            \hat{A}_0 = \sigma_z,\quad \hat{A}_1 = \sigma_x,\\
            \hat{B}_k = \frac{1}{\sqrt{2}} \big[ \sigma_z + (-1)^k \sigma_x \big],\quad k=0,1,
        \end{gathered}
    \end{equation}
    where $\sigma_i$ with $i\in\{x,y,z\}$ are Pauli matrices, while $\hat{A}_x$ and $\hat{B}_y$ are  Alice's  $x$-th and Bob's $y$-th observable, which are related to their POVMs by $\hat{A}_x=\sum_{a} (-1)^aM^A_{a|x}$ and $\hat{B}_y=\sum_{b} (-1)^bM^B_{b|y}$.

    Beyond CHSH, surprisingly little is known about the $\Dm$ of various Bell inequalities. Results from \cite{ADG+02}, \cite{NPA2008}, and \cite{Ioannou:2021qmh} show that for the family of CGLMP inequalities~\cite{CGLMP} $I_d$ with integer $d\le 8$, their minimal dimension $\Dm\le d$. For the specific case of $d=3$, a dimension bound deduced from a negativity~\cite{VW02} lower bound~\cite{Moroder13}  further shows that this bound is tight, likewise for the cases of $d=4$ and $5$, as we show in \cref{App:CGLMP}.
    
    On the other hand, for the $I_{3322}$ inequality from~\cite{Collins04}, its $\Dm$ could well be {\em infinite}, see~\cite{PV10}. Beyond these, the maximal Bell violation of various other Bell inequalities in the bipartite~\cite{Pal09QB,Liang09,Schwarz16,Zambrini19_4422} and multipartite~\cite{Grandjean2012} Bell scenarios has also been investigated. Again, each explicit QS attaining the quantum maximum provides an upper bound on the corresponding $\Dm$. Some other, more general upper bounds on $\Dm$ have also been recently established in \cite{PAC+2025}.

    \section{The symmetry of party-permutation invariance}
    \label{Sec:Symmetry}

     In Moroder \textit{et al.}~\cite{Moroder13}, it has been shown that {\em any} PPI quantum correlation, which is sufficient for maximizing the quantum value of a PPI Bell inequality, can always be realized by a QS involving a PPI state and the same set of local measurements performed by each party. Consequently, when there is no restriction in HSD, the maximal quantum violation of a PPI Bell inequality is always attainable~\cite{Moroder13} using a PPI QS. In this section, we introduce several definitions pertaining to the symmetry of PPI, which is the only symmetry considered in this work.
     
    \subsection{Symmetric correlations, Bell inequalities, and strategies}

    \begin{definition}[Symmetric correlation]
    \label{def: Symmetric correlation}
        A correlation $\vecP_\AB$ arising from a bipartite scenario is said to be \textit{symmetric} if it remains invariant under the exchange of parties, $\A\leftrightarrow\B$, i.e., the simultaneous exchange of the parties' settings $x\leftrightarrow y$ and outcomes $a\leftrightarrow b$:
    \begin{equation}\label{Eq:Psym}
            \PAB(a,b\vert x,y) = \PAB(b,a\vert y,x),  \quad \forall\,\, a, b, x, y.
    \end{equation}
    \end{definition}
    
    \begin{definition}[Symmetric Bell inequality]
    \label{def:Symmetric Bell inequality}
        Consider a Bell inequality, cf.~\cref{eqn:bellineq}, characterized by the coefficients, $\vec{\beta} =\{\beta^{ab}_{xy}\}$.
        We say that a Bell inequality is {\em symmetric} if the coefficients satisfy 
            \begin{equation}\label{eq:sym_ineq}
                \begin{aligned}
                    &\beta^{ab}_{xy} = \beta^{ba}_{yx},
                    && \forall\,\, a,b,x,y.
                \end{aligned}
            \end{equation}
    \end{definition}
    Throughout, we use $\Psym$ ($\BIsym$) to denote a symmetric correlation (Bell inequality) whenever we want to emphasize that it satisfies \cref{Eq:Psym} [\cref{eq:sym_ineq}].  Notably, synchronous correlations~\cite{Paulsen16_ChromaticNumber} arising from the maximally entangled states~\cite{RL17} are symmetric. Synchronous correlations are characterized by the extra constraint: $\PAB(a,b\neq a|x,x) = 0, \forall\,\, x$, which means that in the context of a nonlocal game~\cite{CLeve2004}, the two players must return the same answer upon receiving identical inputs.
   
    As originally remarked in~\cite{Moroder13}, symmetric correlations are particularly relevant in maximizing the (quantum) violation of a symmetric Bell inequality.
    \begin{proposition}\label{Prop:SufficiencySymP}
        In maximizing the (quantum) value of a symmetric Bell inequality, it suffices to consider a symmetric (quantum) correlation.
    \end{proposition}
    \begin{proof}
    For simplicity, we provide a proof applicable to the bipartite quantum case. Let $I$ be a symmetric Bell inequality specified by $\BIsym$ and $\vecP_\AB^\star$ be a quantum maximizer of $I$ that is {\em asymmetric}. The asymmetric nature of $\vecP_\AB^\star$ implies that upon the action of the permutation $V_\AB:\A\leftrightarrow\B$, we obtain $\vecP_\BA:=V_\AB\vecP_\AB^\star\neq\vecP_\AB^\star$. Since $\vecP_\AB^\star$ is a maximizer of $I$ and $\BIsym$ is symmetric (i.e., $V_\AB\BIsym = \BIsym$), we see that 
		\begin{equation}
    		\begin{aligned}
    			\max_{\vecP\in\Q} I &= \max_{\vecP\in\Q} \BIsym\cdot\vecP = \BIsym\cdot\vecP_\AB^\star \\
    			&= (V_\AB\BIsym)\cdot(V_\AB\vecP_\AB^\star) = \BIsym\cdot\vecP_\BA,
    		\end{aligned}
		\end{equation}
		where the third equality follows from the fact that a simultaneous permutation on the Bell coefficients $\BIsym$ and the correlation $\vecP_\AB^\star$ leaves their inner product unchanged.
		Hence, $\vecP_\BA$ must also be a maximizer of $I$. Finally, using the fact that $I$ is linear in $\vecP$, we see that the symmetrized quantum correlation
		\begin{equation}\label{Eq:SymmetrizingP}
			\Psym:= \frac{1}{2}(\vecP_\AB+ \vecP_\BA)
		\end{equation}
		must also violate $I$ maximally.
    \end{proof}
	      
    Let us now recall from \citep[Proposition 1]{Moroder13} a particular kind of QSs that yield a symmetric correlation $\Psym$.
    \begin{definition}[Symmetric QS]
    \label{def:Symmetric strategy}
        In a bipartite Bell scenario, a symmetric QS  (abbreviated as SQS), 
        is a QS where both parties perform the same measurements (i.e., $M^{B}_{a|x} = M^{A}_{a|x}$  $ \forall$ $a,x$) on a shared PPI state $\rho_\BA  = \rho_\AB$, where 
        \begin{equation}\label{Eq:rhoBA}
        		\rho_\BA := S\rho_\AB S^{\dagger}, 
	\end{equation}		
    and 
    \begin{equation}\label{Eq:Swap}
	S:= \sum_{ij} \ketbra{i}{j}_\text{A}\otimes \ketbra{j}{i}_\text{B}
    \end{equation}
    is the swap (unitary) operator and $\{\ket{i}\}_i$ is an orthonormal set of basis vectors.
    \end{definition}
    
    From \cref{Eq:Born} and \cref{def:Symmetric strategy}, one can see that an SQS {\em must} give rise to a symmetric quantum correlation $\Psym$:
    \begin{equation}\label{eqn: symmetric strategy}
        \begin{aligned}
            \PAB(a,b\vert x,y) 
            &= \mathrm{tr}(\rho_\AB M_{a\vert x}^{A}\otimes M_{b\vert y}^{B})\\
            &= \mathrm{tr}(S\rho_\AB S^\dag SM_{a\vert x}^{A}\otimes M_{b\vert y}^{B}S^\dag)\\
            &= \mathrm{tr}(\rho_\BA M_{b\vert y}^{B}\otimes M_{a\vert x}^{A})\\
            &= \mathrm{tr}(\rho_\AB M_{b\vert y}^{A}\otimes M_{a\vert x}^{B})\\
            &= \PAB(b,a\vert y,x),
        \end{aligned}
    \end{equation}
    where we have used the unitarity of $S$ and the cyclic property of trace to arrive at the second equality, \cref{Eq:rhoBA,Eq:Swap} to arrive at the third equality, and \cref{def:Symmetric strategy} to arrive at the fourth equality. Thus, an SQS yields a symmetric correlation, cf. \cref{def: Symmetric correlation}.   
      
    Nevertheless, a $\Psym$ is not necessarily realized by an SQS. For example, the maximal-CHSH-violating two-qubit strategy of~\cref{eq:max_CHSH_strategy}---with the two parties performing different measurements---is clearly not symmetric, even though it gives the evidently symmetric Tsirelson correlation $\vec{P}_T$ \cite{Tsirelson1980}:
    \begin{equation}\label{eq:Tsirelson pt}
            \vec{P}_T \coloneqq 
            \left\{ 
                \PAB(a,b|x,y) = \frac{1}{4}
                \left[ 
                    1 + \frac{(-1)^{xy+a+b}}{\sqrt{2}}
                \right]
            \right\}.
    \end{equation}

    \subsection{Symmetrization and purification of a QS} 
    \label{Sec:SymPurStrategy}     
      
    While a symmetric quantum correlation $\Psym$ need not originate from an SQS, when there is no restriction in the Hilbert space dimension,  any {\em asymmetric} QS giving a symmetric correlation can always be transformed into an SQS producing the same $\Psym$. We now recall from \cite{Moroder13} such a transformation. Given an arbitrary $\Qstr=\{\rho_\AB ,\{M_{a|x}^{A}\}_{a,x},\{M_{b|y}^{B}\}_{b,y}\}$ realizing a symmetric correlation, its \textit{symmetrized} version \begin{equation}\label{Eq:SQS}
      		\SQS:=\{\lrvec{\rho}_\AB , \{N_{a|x}\}_{a,x}, \{N_{b|y}\}_{b,y}\}
    \end{equation}	 
	can be constructed by introducing ancillary projectors $\{\ketbra{0}{0},\ketbra{1}{1}\}$ associated with each party (see~\citep[Eqs.~(17,18)]{Moroder13}):
    \begin{align}\label{eq:mixed-symmetrization}
              \lrvec{\rho}_\AB  &= \frac{1}{2} \l [ \rho_\AB  \ten \ket{01}_{\A'\B'}\!\bra{01} +\rho_\BA   \ten \ket{10}_{\A'\B'}\!\bra{10} \r], \nonumber\\
              N_{a|x} & = M_{a\vert x}^{A} \ten \ketbra{0}{0} + M_{a\vert x}^{B} \ten \ketbra{1}{1},
    \end{align}
    where $\A'$ ($\B'$) is a label of Alice's (Bob's) ancillary space.
    By construction, both parties employ the same POVMs given by $\{N_{a|x}\}_{a|x}$, whilst the PPI nature of $\lrvec{\rho}_\AB$ can also be straightforwardly verified.
	To see that both $\Qstr$ and $\SQS$ give the same symmetric correlation, let us denote by $\lrvec{P}_\AB(a,b|x,y)$ components of the correlation arising from \cref{Eq:SQS} and notice from~\cref{Eq:Born,eq:mixed-symmetrization} that
    \begin{equation}
            \begin{aligned}
            	&\lrvec{P}_\AB(a,b|x,y) =\ \tr \Big[ \big( N_{a|x} \ten N_{b|y} \big) \lrvec{\rho}_\AB \Big]\\
                =\ &\frac{1}{2} \tr \Big[ \big( M^A_{a|x} \ten M^B_{b|y} \big) \rho_\AB + \big( M^B_{a|x} \ten M^A_{b|y} \big) \rho_\BA\Big]\\
                =\ &\frac{ \PAB(a,b|x,y) + \PAB(b,a|y,x) }{2} 
                = \PAB(a,b|x,y),
            \end{aligned}
    \end{equation}
    where the last equality follows from the {\em assumed} symmetry of $\vecP$.
    Thus, this symmetrization embeds the original $\Qstr$ acting on $\HS\otimes\HS$ to an SQS acting on $[\HS\ten \mathbb{C}^2]\otimes[\HS\ten \mathbb{C}^2]$, doubling the original local HSD, while preserving the produced correlation $\vecP$. 
    Importantly, as we see below, the symmetrization of a QS may also be achieved, in some cases, via a local unitary transformation.
    
    On the other hand, it is also well-known that for any given $\Qstr$ realizing a correlation $\vecP$, one can obtain, through Naimark dilation (\citep[Section 9]{HP16Notes}) and quantum state purification, a {\em purified} QS (PQS)
    \begin{equation}\label{Eq:PQS}
	\begin{gathered}
    	\Qstr'=\{\ket{\psi}_\AB\bra{\psi}, \{\Pi_{a|x}^{A}\}_{a,x},\{\Pi_{b|y}^{B}\}_{b,y}\},\\
		(\Pi_{a|x}^{A})^2=\Pi_{a|x}^{A}\,\,\forall\,\,a,x,\quad (\Pi_{b|y}^{B})^2=\Pi_{b|y}^{B}\,\,\forall\,\,b,y,
	\end{gathered}	
    \end{equation} 
    that realizes $\vecP$. 
    In particular, if the state defined in $\Qstr$ is a (multipartite) PPI density operator, one can follow the method described in~\citep[Section 4.2]{Renner06_SecurityQKD} to purify it into a pure state lying in the {\em symmetric subspace}. Hence, by concatenating the symmetrization procedure of~\cite{Moroder13} and the purification procedure, one can always obtain a purified SQS (PSQS) realizing any given $\Psym\in\Q$. Alternatively, from $\Qstr'$, we can also obtain a PSQS 
    \begin{equation}\label{Eq:PSQS}
    	\tilde{\Qstr}:=\{\tilde{\ket{\phi}}_\AB\tilde{\bra{\phi}}, \{\tilde{\Pi}_{a|x}\}_{a,x}, \{\tilde{\Pi}_{b|y}\}_{b,y}\}
	\end{equation} 
	via
    \begin{equation}\label{eq:pure-symmetrization}
        \begin{aligned}
            \tilde{\ket{\phi}}_\AB 
            & = \frac{1}{\sqrt{2}} \l [ \ket{\psi}_\AB  \ket{01}_{\A'\B'} +\ket{\psi}_\BA    \ket{10}_{\A'\B'} \r], \\
              \tilde{\Pi}_{a|x}
              & = \Pi_{a\vert x}^{A} \ten \ketbra{0}{0} + \Pi_{a\vert x}^{B} \ten \ketbra{1}{1}.
        \end{aligned}
    \end{equation}
    
    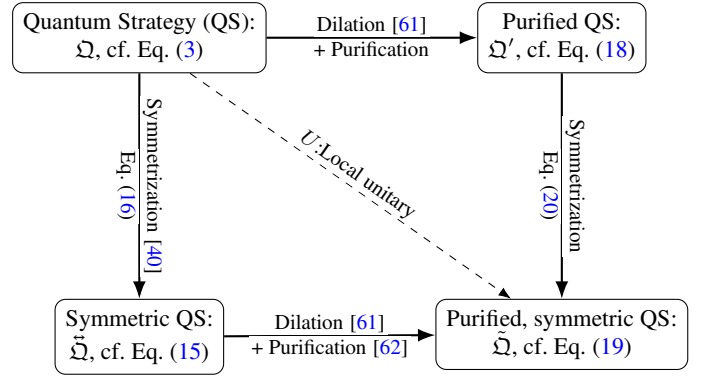
\begin{figure}[t!]
        \centering
            \begin{tikzpicture}[
      node distance=3.0cm and 2.8cm,
      every node/.style={font=\small},
      box/.style={draw, rounded corners, align=center, inner sep=4pt, minimum width=2.0cm},
      lab/.style={font=\footnotesize, inner sep=1pt, fill=white, fill opacity=.0, text opacity=1},
      arr/.style={-Latex, thick},
      darr/.style={-Latex, dashed},
      dotarr/.style={-Latex, dotted}
    ]

    \node[box] (Q) {Quantum Strategy (QS):\\  $\Qstr$, cf.~\cref{Eq:QS}};
    \node[box, right=of Q] (UV) {Purified QS:\\ $\Qstr'$, cf.~\cref{Eq:PQS}};
    \node[box, below=of Q] (SQS) {Symmetric QS:\\ $\SQS$, cf.~\cref{Eq:SQS}};
    \node[box, below=of UV] (PSQS) {Purified, symmetric QS:\\ $\tilde{\Qstr}$, cf.~\cref{Eq:PSQS}};
    
    \draw[arr] (Q) -- (UV)
      node[midway,above,lab] {Dilation~\cite{HP16Notes}}
      node[midway,below,lab] {+ Purification};
    
    \draw[arr] (Q) -- (SQS)
      node[midway,lab,sloped,above] {Symmetrization~\cite{Moroder13}}
      node[midway,lab,sloped,below] {\cref{eq:mixed-symmetrization}};

    \draw[arr] (SQS) -- (PSQS)
      node[midway,above,lab] {Dilation~\cite{HP16Notes}}
      node[midway,below,lab] {+ Purification~\cite{Renner06_SecurityQKD}};
    
    \draw[arr] (UV) -- (PSQS)
      node[midway,lab,sloped,above] {Symmetrization}
      node[midway,lab,sloped,below] {\cref{eq:pure-symmetrization}};
      
    \draw[darr] (Q) -- (PSQS)
      node[pos=0.5,above,lab, sloped] {$U$:Local unitary};
    \end{tikzpicture}
    \caption{\label{Fig:TowardPSQS} Schematic showing different pathways to obtain a purified, symmetric quantum strategy (PSQS) $\tilde{\Qstr}$ from any quantum strategy (QS) $\Qstr$ producing a symmetric correlation $\Psym$. In general, this involves performing step 1. Naimark dilation + purification (horizontal solid arrow), and 2. symmetrization (vertical solid arrow) in either order. In both cases, the local HSD is at least {\em doubled}. However, if the initial strategy $\Qstr$ consists of a pure state and PVMs, it may even be possible to obtain a PSQS via a local unitary transformation (dashed arrow), which preserves the local HSD: an example being the transformation of the strategy of \cref{eq:max_CHSH_strategy} to that of \cref{eq:CHSH ss strategy}.}
    \end{figure}
        
    Hence, we have shown that the following Proposition holds.
    \begin{proposition}\label{Prop:PurifiedSymStra2}
        In a bipartite Bell scenario, a symmetric correlation, see~\cref{def: Symmetric correlation}, can always be realized using an SQS consisting of a PPI bipartite pure state and with both parties performing the same local PVMs.
    \end{proposition}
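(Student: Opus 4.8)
The plan is to obtain the claimed PSQS by composing two correlation-preserving operations already recalled above: Naimark dilation together with state purification, cf.~\cref{Eq:PQS}, and the explicit symmetrization of~\cref{eq:pure-symmetrization}. Since $\Psym\in\Q$ by hypothesis, I would first fix any QS $\Qstr$ realizing it and apply the dilation-plus-purification step to produce a PQS $\Qstr'=\{\ket{\psi}_\AB\bra{\psi},\{\Pi_{a|x}^{A}\}_{a,x},\{\Pi_{b|y}^{B}\}_{b,y}\}$ that reproduces the same $\Psym$. Feeding $\Qstr'$ into~\cref{eq:pure-symmetrization} then yields the candidate $\tilde{\Qstr}$ of~\cref{Eq:PSQS}, and it remains to check that $\tilde{\Qstr}$ enjoys all three advertised properties and still realizes $\Psym$.

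The verification I envisage treats each property separately. The state $\tilde{\ket{\phi}}_\AB$ is pure by construction. To see that it is PPI, I would act with the global swap that simultaneously exchanges the main registers---sending $\ket{\psi}_\AB\mapsto\ket{\psi}_\BA$---and the ancilla qubits $\A',\B'$, sending $\ket{01}\mapsto\ket{10}$; this merely interchanges the two summands in~\cref{eq:pure-symmetrization} and hence leaves $\tilde{\ket{\phi}}_\AB$ invariant. Next, both parties are assigned the \emph{same} operator $\tilde{\Pi}_{a|x}$, and since $\proj{0}$ and $\proj{1}$ are orthogonal, one gets $\tilde{\Pi}_{a|x}^2=(\Pi_{a|x}^{A})^2\ten\proj{0}+(\Pi_{a|x}^{B})^2\ten\proj{1}=\tilde{\Pi}_{a|x}$ together with $\sum_a\tilde{\Pi}_{a|x}=\id$, so the $\{\tilde{\Pi}_{a|x}\}_a$ form a bona fide local PVM common to both parties.

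Finally, I would verify invariance of the correlation by the same bookkeeping already carried out for~\cref{eq:mixed-symmetrization}: evaluating $\tr[(\tilde{\Pi}_{a|x}\ten\tilde{\Pi}_{b|y})\,\tilde{\ket{\phi}}_\AB\tilde{\bra{\phi}}]$, the terms mixing the orthogonal ancilla sectors $\ket{01}$ and $\ket{10}$ drop out, and the surviving diagonal contributions collapse to $\tfrac{1}{2}[\PAB(a,b|x,y)+\PAB(b,a|y,x)]$, which equals $\PAB(a,b|x,y)$ by the assumed symmetry~\cref{Eq:Psym} of $\Psym$. This confirms that $\tilde{\Qstr}$ realizes $\Psym$.

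I expect the conceptual crux to lie not in any of these routine checks but in the ordering of the two operations. Were one instead to symmetrize first via~\cref{eq:mixed-symmetrization} and purify afterwards, the delicate point would be purifying the resulting PPI \emph{mixed} state into a pure state lying in the symmetric subspace, for which one must invoke the technique of~\cite{Renner06_SecurityQKD} while preserving its PPI character. Performing purification \emph{before} symmetrization, as above, sidesteps this subtlety entirely, because~\cref{eq:pure-symmetrization} manufactures a PPI pure state directly; the only price is the doubling of the local HSD through the added qubit ancillae.
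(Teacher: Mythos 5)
Your proposal is correct and follows essentially the same route as the paper: it instantiates the second pathway of \cref{Fig:TowardPSQS}, namely Naimark dilation plus purification to obtain the PQS of \cref{Eq:PQS}, followed by the pure-state symmetrization of \cref{eq:pure-symmetrization}, and your verifications of purity, PPI invariance under the global swap, the projector/completeness properties of $\{\tilde{\Pi}_{a|x}\}$, and the collapse of the correlation to $\tfrac{1}{2}[\PAB(a,b|x,y)+\PAB(b,a|y,x)]=\PAB(a,b|x,y)$ are exactly the checks the paper's construction relies on. Your closing remark about why purifying \emph{before} symmetrizing avoids the need to invoke the symmetric-subspace purification of~\cite{Renner06_SecurityQKD} also matches the paper's presentation of the two orderings.
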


    Notice that Proposition~\ref{Prop:PurifiedSymStra2} can be seen as a strengthening of the observation given in~\cite{Moroder13}, in that we may not only take the QS reproducing {\em any} given $\Psym$ to be symmetric, but also purified. Evidently, \cref{def: Symmetric correlation,def:Symmetric Bell inequality,def:Symmetric strategy} can be naturally generalized to an $N$-partite Bell scenario (with $N>2$) by demanding PPI for all possible permutations of parties. In this regard, we remark that the above symmetrization procedures, and hence Proposition~\ref{Prop:PurifiedSymStra2}, can be generalized to give the following result.
    \begin{proposition}\label{Prop:SCor2SQS}
    	In a multipartite Bell scenario, a PPI correlation can always be realized using a PPI QS consisting of a pure state lying on the {\em symmetric subspace} and with all parties performing the same local PVMs.
    \end{proposition}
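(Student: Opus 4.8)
The plan is to lift the two-party symmetrization of \cref{eq:mixed-symmetrization} to the full symmetric group $\mathfrak{S}_N$ and then reuse the purification-and-dilation steps that established \cref{Prop:PurifiedSymStra2}. Let $\Qstr=\{\rho,\{M^{(k)}_{a|x}\}_{a,x}\}$ be any QS on $\HS^{\ten N}$ realizing a PPI correlation $P$, where $M^{(k)}_{a|x}$ denotes the POVM element of the $k$-th party. For each permutation $\sigma\in\mathfrak{S}_N$, let $W_\sigma$ be the unitary permuting the $N$ tensor factors (generalizing the swap $S$ of \cref{Eq:Swap}), fixed by the convention $W_\sigma\ket{i_1,\dots,i_N}=\ket{i_{\sigma^{-1}(1)},\dots,i_{\sigma^{-1}(N)}}$. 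First I would adjoin to each party an ancillary register of dimension $N$---generalizing the qubit flags of the bipartite case---and define the symmetrized state
\begin{equation}\label{eq:Nparty-symmetrization}
    \lrvec{\rho}=\frac{1}{N!}\sum_{\sigma\in\mathfrak{S}_N} W_\sigma\,\rho\,W_\sigma^\dagger\,\ten\,\bigotimes_{k=1}^{N}\ket{\sigma^{-1}(k)}_k\!\bra{\sigma^{-1}(k)},
\end{equation}
together with the single POVM $N_{a|x}=\sum_{j=1}^{N}M^{(j)}_{a|x}\ten\proj{j}$, which every party now employs identically.

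The key step is to verify that this SQS reproduces $P$. Reading off the ancilla of party $k$ projects onto the ``role'' $\sigma^{-1}(k)$, so the $\sigma$-term applies $M^{(\sigma^{-1}(k))}_{a_k|x_k}$ at location $k$ of $W_\sigma\rho W_\sigma^\dagger$; conjugating by $W_\sigma$ and relabeling the product then yields $\tr[\rho\,\bigotimes_k M^{(k)}_{a_{\sigma(k)}|x_{\sigma(k)}}]$, i.e.\ the correlation $P$ with its party labels permuted by $\sigma$. Since $P$ is PPI by hypothesis, each of the $N!$ terms equals $P$, and the uniform average collapses to $P$. That $\lrvec{\rho}$ is itself PPI follows from parallel bookkeeping: conjugating \cref{eq:Nparty-symmetrization} by the joint permutation $W_\tau$ sends the $\sigma$-term to the $\tau\sigma$-term, and the reindexing $\sigma\mapsto\tau\sigma$ leaves the sum invariant. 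This index manipulation, rather than any deep new idea, is where the main care is required.

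To finish, I would purify and dilate exactly as in the bipartite construction. Because $\lrvec{\rho}$ is a PPI density operator, the procedure of~\cite{Renner06_SecurityQKD} purifies it into a pure state lying in the symmetric subspace while preserving $P$; and because all parties already share the common POVM $\{N_{a|x}\}$, a single Naimark dilation applied identically to each party replaces these POVMs with PVMs without breaking permutation symmetry. The only obstacle is to confirm that purification and dilation remain compatible with the $\mathfrak{S}_N$-symmetry \emph{simultaneously}---but since both act in the same way on every (now interchangeable) party, this reduces directly to the two-party argument already established for \cref{Prop:PurifiedSymStra2}.
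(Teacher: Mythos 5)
Your proposal is correct and takes essentially the same route the paper intends: its stated justification for this Proposition is precisely that the bipartite symmetrization of \cref{eq:mixed-symmetrization}~\cite{Moroder13} generalizes to $N$ parties (your $\mathfrak{S}_N$-averaged state with $N$-dimensional ancilla flags and the common flagged POVM), followed by the symmetric-subspace purification of~\cite{Renner06_SecurityQKD} and a party-identical Naimark dilation. Your explicit checks---that each $\sigma$-term yields the party-permuted correlation (hence, by PPI, the correlation itself) and that conjugation by a joint permutation merely reindexes the sum---supply exactly the bookkeeping the paper leaves implicit.
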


    In \cref{Fig:TowardPSQS}, we provide a summary of the different pathways leading from a general QS for a PPI correlation $\Psym$ to a PSQS. Note that although~\cite{Moroder13} asserts that the multipartite QS realizing a PPI correlation may be symmetrized, an explicit procedure was only given for the bipartite scenario. In contrast, an explicit procedure for generating a strategy that respects translational invariance (rather than PPI) in the multipartite scenario was provided in~\cite{Tura_2014}. For completeness, we provide in~\cref{App:Multipartite}  the explicit symmetrization procedure corresponding to Proposition \ref{Prop:SCor2SQS}, which generalizes~\cref{Eq:PSQS,eq:pure-symmetrization} beyond the bipartite scenario.

\section{Examples where SQS in the minimal dimension can be maximizing}
    \label{sec:Minsymmetricstraetegy}
    We now give some examples of {\em symmetric} Bell inequalities where we observe no trade-off between symmetry and dimension in achieving their maximal violation. In other words, for these inequalities, one can indeed find a $\mMQS$ that is also an SQS, which we shall denote by $\mMSQS$.

    \subsection{The CHSH Bell inequality}
        \label{sec:symmetricCHSH}
    
    Our first example involves the CHSH Bell inequality of \cref{Eq:CHSH}, which is---modulo the freedom in relabeling~\cite{Collins04} of settings and outcomes--- known~\cite{Fine:PRL:1982} to be the only nontrivial facet-defining Bell inequality in the simplest bipartite Bell scenario.
    The maximal quantum violation of CHSH can be attained using the asymmetric QS defined in~\cref{eq:max_CHSH_strategy}, giving the {\em symmetric} Tsirelson correlation $\vecP_T$ of \cref{eq:Tsirelson pt} and the so-called Tsirelson bound~\cite{Tsirelson1980} of $2\sqrt{2}$. However, as mentioned  
    in~\citep[Section IV.A]{Bancal15} (see also~\cite{Xingyao16}), this quantum value of $\ICHSH$ can also be obtained using an SQS.
    
    To this end, it suffices
    \footnote{
    Note that the same transformation is obtained by first rotating by 
    $-\frac{\pi}{4}$ about the $y$-axis, then by $\pi$ about the $z$-axis on the Bloch sphere.
    } 
    to apply on Bob's qubits a $\theta = \pi$ rotation $R_{\hat n}(\theta) = \exp(-\frac{i \theta}{2} \hat n \cdot \vec\sigma)$ about the $\hat n$-axis of the Bloch sphere, where $\hat n = (\sin\frac{\pi}{8},0,\cos\frac{\pi}{8})$ and  $\vec \sigma = (\sigma_x, \sigma_y, \sigma_z)$.
    The resulting shared state, which is indeed symmetric, can be written as
    \begin{subequations}\label{eq:CHSH ss strategy}
        \begin{equation}\label{Eq:CHSHSymState}
            \begin{aligned}
                \ket{\psi}_\AB  
                &= \mathbb{I}_A \ten R_{\hat n}(\pi)\ket{\Phi^+}_\AB \\
                &= -i\Big[ \cos(\frac{\pi}{8})\ket{\Phi^-} +\sin(\frac{\pi}{8})\ket{\Psi^+}\Big],
            \end{aligned}
        \end{equation}
        where $\ket{\Phi^-} = \frac{1}{\sqrt{2}}[\ket{00}-\ket{11}]$ and $\ket{\Psi^+} = \frac{1}{\sqrt{2}}[\ket{01}+\ket{10}]$.
        Moreover, after the same unitary transformation, Alice's and Bob's observables become
        \begin{equation}\label{Eq:CHSHSymObs}
            \hat{A}_0=\hat{B}_0=\sigma_z, \quad \hat{A}_1=\hat{B}_1=\sigma_x,
        \end{equation}
    \end{subequations}
    which are clearly symmetric. 
    As we see next in~\cref{CGLMP}, the SQS of \cref{eq:CHSH ss strategy} can be seen as the $d=2$ instance of a more general family of SQSs.
        
    \subsection{Inequalities in the $(2,2,n)$ Bell scenarios}\label{CGLMP}

    The CGLMP inequality~\cite{CGLMP} (see also~\cite{Kaszlikowski02}), denoted by $I_{d}$, constitutes a facet-defining inequality of the Bell polytope~\cite{Masanes02_Tight_CGLMP} in the $(2,2,n)$ Bell scenarios\footnote{For $n\ge 3$, the outcome-lifted~\cite{Pironio_Lifting_05} CHSH is another facet-defining Bell inequality in this scenario, but we know from the results of~\cite{Jeba:PRR:2019} that it cannot exhibit a trade-off.} for $d=n\ge 2$. Specifically, for $d=2$, it may be rewritten as the CHSH Bell inequality of~\cref{Eq:CHSH2} after the relabeling $B_1\leftrightarrow B_2$. Although $I_d$ in its original form~\cite{CGLMP} is not PPI, it can be recast into a symmetric form (cf. \cref{def:Symmetric Bell inequality}) by, e.g.,  relabeling all of Alice's outcomes from $a$ to $d-a$. 
    
    Alternatively, as demonstrated in~\citep[Appendix B.1.1]{Liang:PhDthesis}, by applying the no-signaling conditions~\cite{Popescu:FP:1994,Barrett_05}, normalization constraints, and appropriate relabeling of the outcomes, the CGLMP inequality can be converted to the $I_{22dd}$ inequality~\cite{Collins04}, which is manifestly symmetric:
    \begin{align}\label{eq: symmetry 22dd}
        I_{22dd} 
        &= \sum_{a,b\in\triangle_\le}  \PAB(a,b|0,0)- \sum_{a,b\in\triangle_\ge} \PAB(a,b|1,1) \nonumber\\
        &+ \sum_{a,b\in\triangle_\ge} \left[ \PAB(a,b|0,1) + \PAB(a,b|1,0)  \right] \\
        &- \sum_{a=0}^{d-2} \PA(a|0)
        - \sum_{b=0}^{d-2} \PB(b|0) \overset{\L}{\le} 0,\nonumber
    \end{align}
    where $\triangle_\le$ and $\triangle_\ge$ refer, respectively, to the set of $a,b\in\intv{d-1}$ such that $a+b\le d-2$ and $a+b\ge d-2$.
    Moreover, for any given $\Qstr$, the Bell value of $I_d$ and that of $I_{22dd}$ are related~\cite{Liang:PhDthesis} by 
    \begin{equation}
        \label{eq:22ddequ}
        I_{d} = \frac{2d}{d-1}I_{22dd} + 2.
    \end{equation}
    
    The best-known Bell violation of the CGLMP inequality $I_d$ (or equivalently $I_{22dd}$) can be achieved by locally measuring a partially entangled two-qudit state~\cite{ADG+02} in the judiciously chosen ``Fourier-transformed'' bases~\cite{CGLMP} (implementable via a multiport beam splitter in a photonic setup, see, e.g.,~\cite{KGZ+00,DKZ01}).  Numerical optimizations from~\cite{ADG+02,NPA2008} confirm this QS to be optimal for dimensions $2\le d \le 8$. In particular, for $d < 5$, a sum-of-squares decomposition in the symmetric form provides an analytic bound on the maximal quantum violation~\cite{Ioannou:2021qmh}.
    
    However, the optimal measurements considered in~\cite{CGLMP} are not PPI, even after we incorporate the outcome-relabeling needed to cast $I_d$ in a PPI form.
    In the following, we present a family of SQSs that match the best known $I_{22dd}$ (and hence $I_d$) Bell-inequality violation. 
    For this purpose, we specify the POVMs (with $M_{a|x}^{B} = M_{a|x}^{A}\,\,\forall\,\,a,x$) as:
    \begin{subequations}\label{Eq:CGLMP-Opt}
        \begin{equation} 
            M_{a|0}^{A} = \ketbra{a}{a}, \quad M_{a|1}^{A} =  U\ketbra{a}{a}U^{\dagger},\quad U:=TW,
        \end{equation}
        where $\{\ket{a}\}_{a =0}^{d-1}$ is the computational basis, and the $d\times d$ unitary operator $U$ is defined via\footnote{To see that the operator $U$ define as such is indeed unitary, see~\cref{App:Unitarity}.}
        \begin{equation}\label{Eq:TW}
            T:= -1 \bigoplus \mathbb{I}_{d-1} ~\text{  and  }~
            W:= \!\!\!\sum_{i,j\in\intv{d}}\frac{\ket{i}\bra{j}}{d\sin{[(i-j-\frac{1}{2})\frac{\pi}{d}}]}. 
        \end{equation}
    \end{subequations}        
    Accordingly, the quantum state to be measured $\ket{\psi_d}$ is obtained by solving the eigenvector corresponding to the largest eigenvalue of the symmetric $I_{22dd}$ Bell operator~\cite{Braunstein1992}.
    
    As an explicit example, the optimal QS for $d=3$ consists of measuring
    \begin{align}
        \ket{\psi_3}=
        \gamma
        &\Big[ \alpha(\ket{00}-\ket{12}-\ket{21}) +\left(\ket{01}+\ket{10}+\ket{22}\right)\nonumber\\
        &\left. -\frac{\alpha+1}{2}(\ket{02}+\ket{20}-\ket{11})\right],
    \end{align}
    where $\alpha = \frac{5-\sqrt{33}}{2}$ and $\gamma=\frac{2}{3}\sqrt{\frac{2}{55-9\sqrt{33}}}$ is a normalization factor. Modulo a local change of basis, $\ket{\psi_3}$ is exactly the optimal state for $I_{d=3}$ found in \cite{ADG+02} (see also~\cite{Bancal15}).
    More generally, the optimal quantum state obtained via the Bell operator defined by the PPI measurement strategy of \cref{Eq:CGLMP-Opt} is easily seen to be PPI, too. Furthermore, for $d=2$ to $19$, our computation shows that---to six or more significant digits---the SQS formed by these states and measurements indeed gives an $I_{22dd}$ (and hence $I_{d}$) value that coincides with the upper bound obtained from the SDP hierarchy of Navascu\'es-Pironio-Ac\'in (NPA). See~\cref{App:CGLMP} for details.
    
    Naturally, given the above observation, one may wonder whether the same conclusion also holds for a special family of the Salavrakos-Augusiak-Tura-Wittek-Ac\'{\i}n-Pironio~\cite{SAT+17} (SATWAP) Bell inequalities that are applicable in the $(2,m,d)$ Bell scenarios, known to be violated maximally using the optimal measurements for the CGLMP inequality. Here, we focus on the $m=2$ case of this family of inequalities, which was extensively studied in~\cite{Sarkar:2021aa}. 
    
    To this end, following~\cite{SAT+17,Sarkar:2021aa}, it is expedient to generalize the definition of correlators from~\cref{Eq:Correlators} to
    \begin{equation}\label{Eq:Correlators:d}
        \begin{aligned}
            &\langle A^k_x \rangle = \sum_{a=0}^{d-1}  \omega^{ak} P_\A(a|x), \,\, \langle B^\ell_y \rangle = \sum_{b=0}^{d-1}  \omega^{b\ell} P_\B(b|y), \\
            &  
            \langle A^k_xB^\ell_y \rangle =  \sum_{a,b=0}^{d-1} \omega^{ak+b\ell}\PAB(a,b|x,y),
        \end{aligned}
    \end{equation}
    where $\omega:={\rm e}^{\frac{2\pi i}{d}}$ is the $d$-th root of unity and $k,l \in \{0,\cdots,d-1\}$. Effectively, this means we represent Alice's $x$-th measurement outcomes $A_x$ and Bob's $y$-th measurement outcomes by $B_y$ by integer powers of $\omega$.
    
    Starting from Eq. (8) of~\cite{Sarkar:2021aa}, it can be verified that the SATWAP family of Bell inequalities can be cast in a symmetric form via the following steps: (1) simultaneously swap\footnote{This step is unnecessary for symmetrization. However, it allows us to specify the optimal measurements using \cref{Eq:CGLMP-Opt} as it is.} both parties' setting, i.e.,  $x,y: 1 \leftrightarrow 2$, (2) permute Alice's $x$-th measurement outcome so that $a\to (-a-x-1)\!\mod d$. Explicitly, after symmetrization and simplification using $\omega^d=1$, the family of symmetric inequalities reads as:
    \begin{align}\label{Eq:symSATWAP}
            \mathcal{\tilde{I}}_{d} 
            = \sum_{k = 1}^{d-1} 
            \big[
            & \omega^{-2k}\big(a_k\langle A_{1}^{-k}B_{1}^{-k}\rangle +  a_k^* \langle A_{1}^{-k}B_{2}^{-k}\rangle \\
            &+  a_k^*  \langle A_{2}^{-k}B_{1}^{-k}\rangle\big) +  \omega^{-3k}a_k\langle A_{2}^{-k}B_{2}^{-k}\rangle\big]\overset{\L}{\le} \beta_L,\nonumber
    \end{align}
    where $a_k=\frac{(1-i)}{2}\omega^{\frac{k}{4}}$, $(\cdot)^*$ denotes complex conjugation, and $\beta_L = \frac{1}{2}[3\cot{\frac{\pi}{4d}} -\cot{\frac{3\pi}{4d}}]-2$ is the local upper bound.

    Our computation results show that for the symmetric inequalities of~\cref{Eq:symSATWAP}, the measurement of \cref{Eq:CGLMP-Opt}, in conjunction with a symmetric maximally entangled two-qudit state, indeed reproduces the maximal quantum violation~\cite{SAT+17} of $2(d-1)$ for $d\le 100$. Together with the fact~\cite{Sarkar:2021aa} that this maximal quantum violation self-tests a maximally entangled two-qudit state,  we know that the symmetric family of SATWAP inequalities, \cref{Eq:symSATWAP}, also cannot exhibit a trade-off for $d\le 100$. Moreover, we conjecture that this holds true for any higher $d$ as well.

    Incidentally, all the Bell inequalities analyzed above fall under the $(2,2,n)$ Bell scenarios, involving only two measurement settings. In the following, we switch to Bell inequalities defined for the Bell scenarios $(2,m,2)$ with $m\ge 3$, i.e., with more measurement settings, but only binary measurement outcomes. We shall see that this modification allows us to easily identify Bell inequalities where any SQS in the minimal dimension is necessarily suboptimal.
        
\section{Examples where SQS in the minimal dimension must be suboptimal}    
    \label{sec:Asymmetriccase}        
    
    Next, we present some examples of {\em symmetric} Bell inequalities where, for the purpose of achieving their maximal violation, a trade-off between symmetry and dimension has been found.
    
    \subsection{Inequalities in the $(2,3,2)$ Bell scenario}
        \label{Sec:3322}
       
    The complete set of facet-defining Bell inequalities for this Bell scenario was first determined by Froissart~\cite{Froissart1981}, then independently rediscovered by Pitowsky and Svozil~\cite{PS01}, Sliwa~\cite{Sliwa03}, as well as Collins and Gisin~\cite{Collins04}. In this case, apart from the (input-lifted~\cite{Pironio_Lifting_05}) CHSH inequality of~\cref{Eq:CHSH}, there is also the so-called $I_{3322}$ inequality, commonly known in an asymmetric form~\citep[Eq.~(19)]{Collins04}. However, via an appropriate relabeling, we can also rewrite this inequality in a PPI form: see~\cite{Sliwa03} and~\cite{Brunner2008} for a symmetric form of this inequality written, respectively,  in terms of correlators and probabilities, cf.~\cref{eqn:bellineq}. Since $\Dm$ for this inequality could well be infinite~\cite{PV10}, and one can always follow the procedure outlined in~\cref{Fig:TowardPSQS} to obtain an SQS that is still infinite dimensional, there is little hope to exhibit a trade-off between symmetry and dimension for this Bell inequality.

    Since there is no hope of finding a trade-off in this Bell scenario using a facet-defining Bell inequality, naturally, one may wonder whether the same conclusion also holds for non-facet-defining Bell inequalities. At first glance, the symmetric $I_{3322}$-like Bell inequality of \citep[Eq. (27)]{Goh2018} may seem like an example exhibiting a trade-off, since it was shown therein that its maximal violation can be attained using a family of PPI correlations $\Psym$ that arise from {\em asymmetric} two-qubit strategies. However, a closer inspection reveals that these strategies---as with~\cref{eq:max_CHSH_strategy}---can be transformed into a symmetric form via a local unitary, see~\cref{App:I3322LikeSQS} for details. 
    
    In contrast, consider for $\alpha\in[1.5,3]$ the following family of symmetric Bell inequalities in correlator form:
       \begin{align}
       \label{eq:I_S}
           {I}_{S}(\alpha) &=  
           \langle A_{0}\rangle +\langle B_{0}\rangle +\langle A_{1}\rangle +\langle B_{1}\rangle +\alpha\l(\langle A_{2}\rangle +\langle B_{2}\rangle \r ) \nonumber \\
           &\quad+\langle A_{0}B_{2}\rangle+  \langle A_{2}B_{0}\rangle - \langle A_{2}B_{1}\rangle - \langle A_{1}B_{2}\rangle \nonumber\\
            &\quad-\expA{A_2B_2}-2\l (\langle A_{1}B_{0} \rangle + \langle A_{0}B_{1} \rangle + \langle A_{1}B_{1} \rangle \r )  \nonumber\\ 
           & \overset{\L}{\le} 2\alpha+5,
       \end{align}
        where the local bound is easily verified to be saturated by the symmetric local deterministic strategy
       \begin{equation}
           A_x=(-1)^x,\quad B_y=(-1)^y\quad\forall\,\,x,y\in\{0,1,2\}.
       \end{equation}
          
       For concreteness, consider now the special case of $\alpha=2$. Quantum mechanically, it can be verified that an {\em asymmetric} two-qubit strategy (see \cref{App:Is} for details), which involves a degenerate measurement for $\hat{B}_2$, leads to the value $\frac{1}{3}{\left(13+4\sqrt{13}\right)}\simeq 9.1407$. Moreover, this quantum violation arising from the resulting {\em asymmetric} correlation agrees with the quantum upper bound on $I_S(2)$ obtained from the NPA hierarchy to a precision better than $10^{-8}$. Hence, $\Dm=2$ for this inequality. However, if we, instead, only consider SQS facilitated by qubits and PVMs,\footnote{Note that it suffices~\cite{CLeve2004,Liang:PRA:2007} to consider PVMs in maximizing the Bell violation of {\em all} two-outcome Bell inequalities.} then our computation---using the method described in~\cref{App:SQSDBound}---shows that such strategies cannot even violate the Bell inequality of \cref{eq:I_S} at all. In other words, there is a trade-off between symmetry and dimension in getting the maximal Bell violation of $I_S$.
       
       More generally, as we illustrate in~\cref{fig:MaxI_S}, even though an asymmetric {\em qubit} strategy involving a degenerate measurement for $\hat{B}_2$ can lead to the maximal quantum violation of $I_S(\alpha)$ (up to numerical precision) for all $\alpha\in[1.5,3]$, the maximal quantum violation attainable by a symmetric qubit QS is always suboptimal. In fact, for $\alpha \in (1.975, 3]$, these latter strategies can, at best, give the local bound of $2\alpha+5$.
           
       \begin{figure}[!t]
           \centering
           \includegraphics[width=0.95\linewidth]{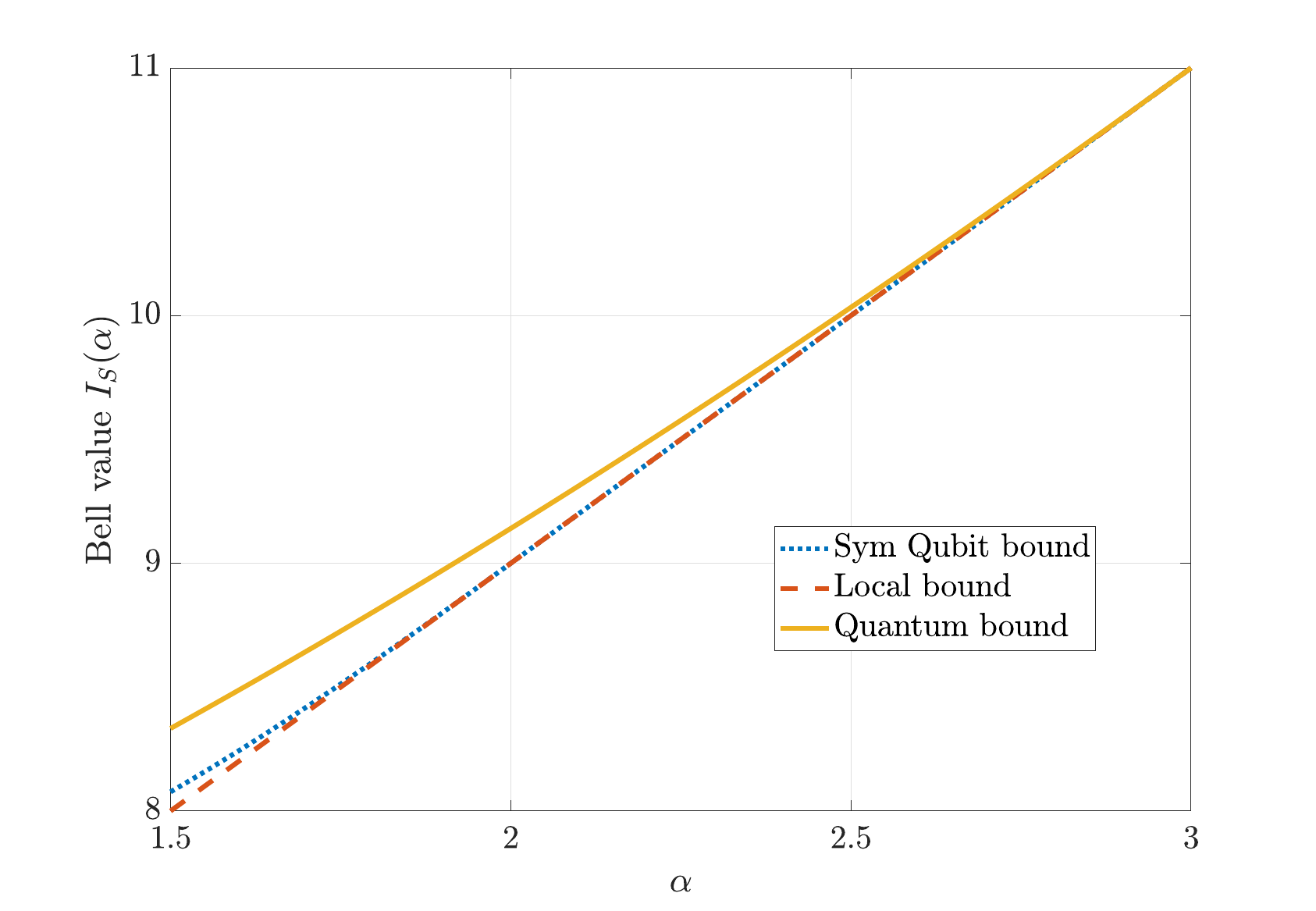}
           \caption{Maximal Bell value of $I_S(\alpha)$ for $\alpha\in[1.5,3]$ under various constraints. From bottom to top, we have, respectively, the local bound of $2\alpha+5$ (red, dashed), the symmetric qubit upper bound (blue, dotted) computed using the method described in \cref{App:Techniques}, and the general quantum bound (yellow, solid), attainable using a two-qubit QS with a degenerate observable for one of the parties.}
           \label{fig:MaxI_S}
       \end{figure}      
    \subsection{Inequalities in the $(2,4,2)$  Bell scenario}
        \label{Sec:242}
    
    \begin{table*}
        \centering
        \begin{tabular}{c|c|cc|c|c|c|ccc}
            &~\bf Quantum bound~ & \multicolumn{2}{c|}{~\bf SQS bound~} &\multicolumn{2}{c|}{}&~\bf Quantum bound~&\multicolumn{3}{c}{~\bf SQS bound ~}\\
            \hline \hline 
            &$\Dm=2$&$d=2$&$d=3$&\multicolumn{2}{c|}{}&$\Dm=3$& $d=3$& $d=4$&$d=5$\\
            \hline
            $I_{4422}^{4}$&0.4142&0.2500&0.3846&\multirow{2}{*}{$I_{4422}^{8}$}&LB&\multirow{2}{*}{0.4878}&\multirow{2}{*}{0.4843}&0.4843 & 0.4843\\
            $I_{4422}^{13}$&0.4349&0.2500&0.3466& &UB& & &0.4878 & 0.4878\\ 
            \cline{5-10} 
            $I_{4422}^{15}$ &0.4349&0.3913 &0.4067&\multirow{2}{*}{$I_{4422}^{19}$}&LB&\multirow{2}{*}{0.4972} &\multirow{2}{*}{0.4514}&0.4832 &0.4836\\
            $J_{4422}^{42}$&0.6722 &0.5682&0.5682&&UB& & &0.4972 &0.4972\\ 
            \cline{5-10}
            $J_{4422}^{86}$&0.7559 &0.7500&0.7500&\multirow{2}{*}{$J_{4422}^{13}$}&LB&\multirow{2}{*}{0.6927} &\multirow{2}{*}{0.6671}&0.6671&0.6722\\
            $J_{4422}^{113}$&0.8484&0.8195&0.8195& &UB& & &0.6927&0.6927\\
            \hline 
        \end{tabular}
        \caption{\label{tbl: asymmetric strategy}
        List of symmetric facet-defining Bell inequalities in the $(2,4,2)$ Bell scenario, where there exists a trade-off between dimension and symmetry in their quantum violation. The first three block columns on the left give, respectively, the list of six inequalities whose $\Dm=2$, their maximal quantum violation (obtained from NPA level 3 and a matching QS of dimension $\Dm$), and their SQS bound for $d\in[\Dm,2\Dm)$. The three block columns on the right give the analogous results for three inequalities whose $\Dm=3$. Note, however, that for these inequalities, the best lower bound (LB) we have found for $d=4, 5$ falls short of the corresponding SDP upper bound (UB).}
    \end{table*}
        
	As we increase the number of settings to four, it is known~\cite{Deza:2016aa,Zambrini19_4422, Oudot_2019} that the Bell polytope is completely characterized by 175 facet-defining classes of Bell inequalities. Among them, 55 are known~\cite{Bancal_2010} to admit a symmetric representation. Several of these, namely, the trivial positivity facet, the CHSH inequality of~\cref{Eq:CHSH}, and the $I_{3322}$ inequality, are liftings~\cite{Pironio_Lifting_05} of facet-defining Bell inequalities from simpler Bell scenarios. 

    For the remaining 52 {\em symmetric} Bell inequalities with four settings, except for $J_{4422}^{62}$ (whose $\Dm$ remains unknown~\cite{PV10}) and 8 others 
    (see~\cref{tbl: asymmetric strategy} as well as \cref{tbl:symmetry class,tbl:HighD-symmetryclass} of~\cref{App:4422}), all the rest can be maximally violated using qubit strategies, i.e., having $\Dm=2$. Moreover, our computation results show that for 42 out of these 52 inequalities, there is no trade-off, i.e., there exists a $\mMSQS$. For ease of reference, we provide this list of 42 inequalities and their maximal quantum violation in~\cref{tbl:symmetry class} of~\cref{App:4422}.

    After discounting $J_{4422}^{62}$, we are left with 9 symmetric inequalities, where we observe a trade-off between dimension and symmetry; see~\cref{tbl: asymmetric strategy} for details. It is worth noting that for all these 9 inequalities, as with $I_S(\alpha)$, the $\mMSQS$ we have found always involves projectors of unequal ranks between Alice and Bob, i.e., $\text{rank}(M_{a|x}^B) \neq \text{rank}(M_{a|x}^A)$ for some $a,x$, thereby making these QSs evidently {\em asymmetric}. In fact, a closer inspection reveals that the resulting correlation is also asymmetric.

    Also worth noting is that for the $J_{4422}^{42}$ inequality, we further observe a gap between the maximal quantum value attainable using qubit SQS and symmetric correlations that may also arise from asymmetric strategies. Indeed, for these two classes of strategies, the maximal quantum values attainable are $0.5682$ and $0.6012$, respectively.
    More explicitly, the latter value can be achieved by adopting the following asymmetric QS consisting of the shared state
    \begin{subequations}\label{eq:J42_strategy}
        \begin{equation}\label{eq:J42_state}
           \ket{\psi} =-\sin{\a} \ket{00}+\cos{\a}\ket{11}, 
        \end{equation}
        with $\alpha = 42.5092^\circ$ and the mirror-symmetric observables 
        \begin{equation}
            \hat{A}_x=\vec{a}_{x}\cdot\vec{\sigma},\quad \hat{B}_y=\vec{b}_{y}\cdot\vec{\sigma}=(\vec{a}_{y}\cdot\vec{\sigma})^*, 
        \end{equation}
        where the measurement (Bloch) vectors $\vec{a}_k:= (\sin\theta_k\cos\phi_k,\sin\theta_k\sin\phi_k,\cos\theta_k)$,
    	\begin{equation}
            \begin{aligned}
                &\theta_0 = 61.9767^\circ,&\quad\phi_0 &= 166.1570^\circ,\\
                &\theta_1 = 0^\circ,&\quad \phi_1 &\text{ arbitrary},\\ 
                &\theta_2 = 54.3423^\circ,&\quad \phi_2 &=  41.5892^\circ,\\
                &\theta_3 =52.2700^\circ,& \quad \phi_3 &= -71.170^\circ.
            \end{aligned}
    	  \end{equation}
    \end{subequations}     
    Equivalently, $\vec{b}_k$ may be obtained from $\vec{a}_k$ by the substitution $\phi_k\to-\phi_k$. For graphical representation of $\vec{a}_k$ and $\vec{b}_k$, see~\cref{fig:J42directions}. 
	  
    \begin{figure}[H]
       \centering
       \includegraphics[width=1\linewidth]{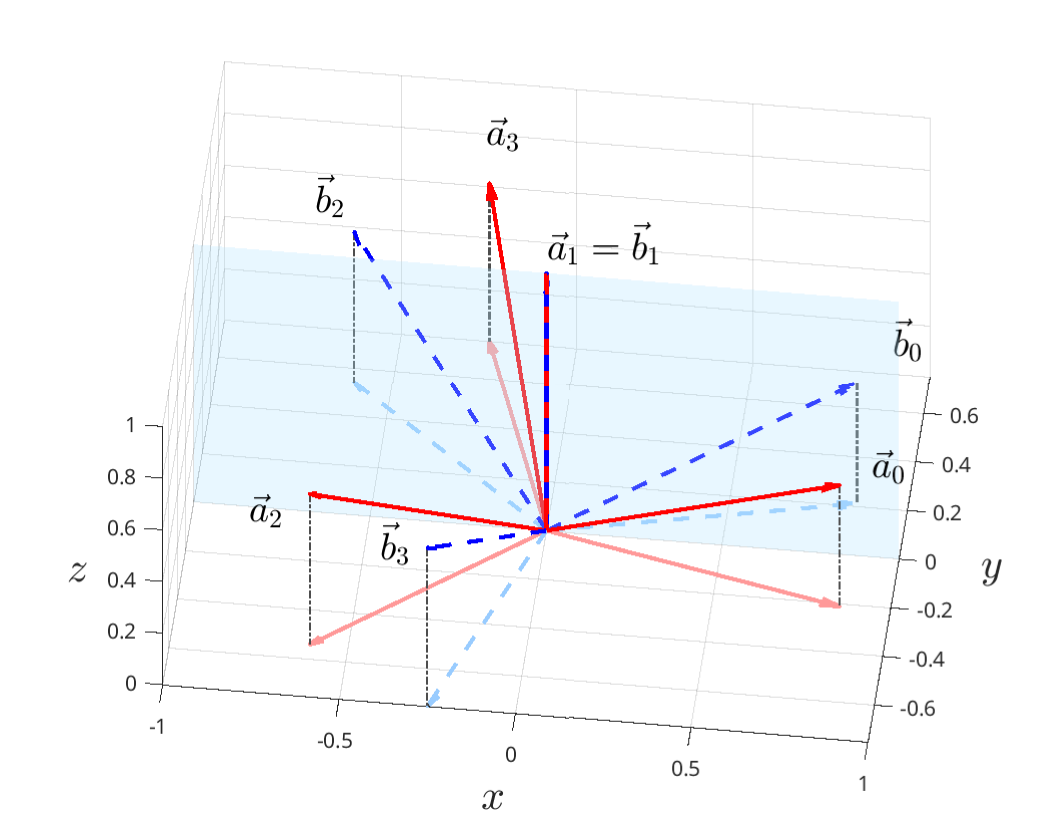}
       \caption{Measurement directions $\vec{a}_k$ (solid line) and $\vec{b}_k$ (dashed line) for $k=0,1,2,3$ corresponding to the QS described in \cref{eq:J42_strategy}. Notice that each $\vec{b}_k$ may be obtained from the corresponding $\vec{a}_k$ by performing a mirror reflection about the $x-z$ plane (the pale blue plane), making it evident that the QS is asymmetric. Surprisingly, the resulting correlation is {\em symmetric} and gives a Bell value $\approx 0.6012$ for the $J^{42}_{4422}$ inequality, higher than the 0.5682 bound achievable by {\em any} qubit SQS.}           
       \label{fig:J42directions}
    \end{figure}

    In the next section, we provide a proof showing that QSs of the kind given in \cref{eq:J42_strategy} not only produce a symmetric correlation, but also cannot be symmetrized via a local unitary transformation.

\section{Asymmetric strategies giving symmetric correlations}
    \label{Sec:Asym2Sym}
        
	\subsection{Symmetric correlations from mirror-symmetric strategies}
        \label{App:MirSym}

    To see that the correlation resulting from \cref{eq:J42_strategy} is indeed PPI, we remark that a more general class of mirror-symmetric strategies must also produce symmetric correlations. Indeed, \cref{eq:J42_state} is easily seen to be a special case of the following family of two-qubit pure states:
    \begin{equation}\label{Eq:State:RealSym_ImgAntiSym}
        \ket{\phi(\alpha)} = \cos{\alpha}\ket{\psi}_{\text{Sym}} + i\sin{\alpha} \ket{\Psi^-},
    \end{equation}
    where $\ket{\Psi^-} = \frac{1}{\sqrt{2}}(\ket{01}-\ket{10})$ is the singlet state and $\ket{\psi}_{\text{Sym}}$ is some real symmetric state (i.e., a real, linear combination of the remaining three Bell states $\ket{\Phi^\pm}$ and $\ket{\Psi^+}$). 
        
    Moreover, if we locally measure~\cref{Eq:State:RealSym_ImgAntiSym} for a pair of measurement directions $\{\vec{a}_k\}, \{\vec{b}_k\}$ mirror-symmetric with respect to the $x-z$ plane of the Bloch sphere, i.e.,
    \begin{equation}\label{Eq:MirrorMeas}
        \mathcal{M} \vec{a}_k = \vec{b}_k ~\forall\,\, k\quad\text{where}\quad \mathcal{M}=\text{diag}(1,-1,1),
    \end{equation}
    it follows that the resulting correlation $\vecP$ must be symmetric.
	\begin{theorem}\label{Thm:SymCor}
        Quantum correlation obtained by performing an arbitrary number of dichotomic measurements that are mirror-symmetric with respect to the $x-z$ plane, c.f. \cref{Eq:MirrorMeas}, on the asymmetric state of \cref{Eq:State:RealSym_ImgAntiSym} is symmetric.
    \end{theorem}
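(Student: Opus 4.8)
The plan is to reduce the statement to two families of scalar identities and then prove both from a single structural observation. For two-outcome qubit measurements, the correlation is completely fixed by the marginals $\langle A_x\rangle,\langle B_y\rangle$ and the correlators $\langle A_xB_y\rangle$, and the defining condition $\PAB(a,b|x,y)=\PAB(b,a|y,x)$ of \cref{def: Symmetric correlation} is equivalent, upon matching the coefficients of $(-1)^a$, $(-1)^b$, and $(-1)^{a+b}$, to requiring $\langle A_x\rangle=\langle B_x\rangle$ for all $x$ together with $\langle A_xB_y\rangle=\langle A_yB_x\rangle$ for all $x,y$. So it suffices to establish these two identities. The cornerstone linking them is a relation between the swap operator $S$ of \cref{Eq:Swap} and complex conjugation on the state $\ket{\phi(\alpha)}$.

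First I would note that, of the Bell basis, only the singlet is antisymmetric under party exchange, $S\ket{\Psi^-}=-\ket{\Psi^-}$, whereas the real symmetric component $\ket{\psi}_{\text{Sym}}$ is swap-invariant. Since both $\ket{\psi}_{\text{Sym}}$ and $\ket{\Psi^-}$ have real coefficients in the computational basis, applying $S$ to \cref{Eq:State:RealSym_ImgAntiSym} flips the sign of the antisymmetric term alone, which is exactly the effect of conjugating the single factor of $i$ that multiplies it. Hence $S\ket{\phi(\alpha)}=\ket{\phi(\alpha)}^*$, where $^*$ denotes entrywise complex conjugation in the computational basis. Next I would record that conjugation reflects a qubit observable's Bloch vector through the $x$-$z$ plane: because $\sigma_x^*=\sigma_x$, $\sigma_y^*=-\sigma_y$, $\sigma_z^*=\sigma_z$, one has $(\vec{a}_x\cdot\vec{\sigma})^*=(\mathcal{M}\vec{a}_x)\cdot\vec{\sigma}=\vec{b}_x\cdot\vec{\sigma}$ by the hypothesis \cref{Eq:MirrorMeas}, and likewise $(\vec{b}_y\cdot\vec{\sigma})^*=\vec{a}_y\cdot\vec{\sigma}$ since $\mathcal{M}$ is an involution.

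With these two facts I would close the argument as follows. Each correlator is the expectation value of a Hermitian operator, hence real and equal to its own conjugate; applying $\overline{\bra{\phi}X\ket{\phi}}=\bra{\phi}S^\dagger X^* S\ket{\phi}$ (using $\ket{\phi}^*=S\ket{\phi}$) to $X=(\vec{a}_x\cdot\vec{\sigma})\ten(\vec{b}_y\cdot\vec{\sigma})$ replaces the observables by $(\vec{b}_x\cdot\vec{\sigma})\ten(\vec{a}_y\cdot\vec{\sigma})$, and the conjugation by $S$ then exchanges the two tensor factors, so that $S^\dagger[(\vec{b}_x\cdot\vec{\sigma})\ten(\vec{a}_y\cdot\vec{\sigma})]S=(\vec{a}_y\cdot\vec{\sigma})\ten(\vec{b}_x\cdot\vec{\sigma})$, yielding $\langle A_xB_y\rangle=\langle A_yB_x\rangle$. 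The identical manipulation on the single-party term $(\vec{a}_x\cdot\vec{\sigma})\ten\id$ gives $\langle A_x\rangle=\langle B_x\rangle$, completing the reduction.

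The computation is elementary once the right viewpoint is fixed, so I do not anticipate an analytic obstacle; the only delicate point, and the conceptual heart of the proof, is recognizing that the two seemingly independent ingredients are precisely matched—the real-symmetric/imaginary-antisymmetric split in \cref{Eq:State:RealSym_ImgAntiSym} is arranged so that the swap coincides with complex conjugation on the state, while the reflection plane in \cref{Eq:MirrorMeas} is chosen so that the mirror operation coincides with complex conjugation on the observables. In writing it up I would emphasize that the argument applies pairwise to each $(x,y)$ and is therefore independent of the number of measurement settings, and that nothing beyond Hermiticity (hence reality of the expectation values) is used.
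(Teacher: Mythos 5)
Your proof is correct and follows essentially the same route as the paper's: both arguments hinge on the two observations that mirror reflection through the $x$-$z$ plane equals complex conjugation of the observables (the paper's Lemma~\ref{lem:MirrorPOVM}, stated there for POVM elements), and that the swap acts on $\ket{\phi(\alpha)}$ as complex conjugation, which are then combined with the reality of expectation values exactly as the paper does. The only cosmetic difference is that you first reduce the symmetry condition to the identities $\expA{A_x}=\expA{B_x}$ and $\expA{A_xB_y}=\expA{A_yB_x}$ and manipulate correlators, whereas the paper runs the identical conjugation-plus-swap computation directly on the probabilities $\PAB(a,b|x,y)$ via the POVM elements.
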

	
    To see that this is the case, we start by establishing the following Lemma.

    \begin{lemma}
        \label{lem:MirrorPOVM}
        If a pair of measurement bases $\vec{a}_k, \vec{b}_k$ are mirror-symmetric with respect to the $x-z$ plane, i.e., \cref{Eq:MirrorMeas} holds, then their POVM elements satisfy 
    	\begin{equation}\label{Eq:Meas:MirSym}
    	    M^B_{a|k} = (M^A_{a|k})^*\quad\forall\,\,a,k.
    	\end{equation}	
    \end{lemma}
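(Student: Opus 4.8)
The plan is to reduce the claim to the elementary behavior of the Pauli matrices under complex conjugation, so that the geometric mirror reflection $\mathcal{M}$ becomes, algebraically, nothing more than conjugation of a qubit observable. First I would write down the explicit form of the projection elements for a dichotomic qubit measurement along the Bloch direction $\vec{a}_k$. Adopting the convention that outcome $a\in\{0,1\}$ corresponds to the eigenvalue $(-1)^a$ of the observable $\vec{a}_k\cdot\vec\sigma$, one has
\begin{equation}
    M^A_{a|k} = \tfrac{1}{2}\l[\id + (-1)^a\, \vec{a}_k\cdot\vec\sigma\r],
\end{equation}
and likewise $M^B_{a|k} = \tfrac{1}{2}[\id + (-1)^a\, \vec{b}_k\cdot\vec\sigma]$ for Bob's directions $\vec{b}_k$.

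Next I would record how complex conjugation in the computational basis acts on the three Pauli matrices: $\sigma_x$ and $\sigma_z$ are real while $\sigma_y$ is purely imaginary, so $\sigma_x^*=\sigma_x$, $\sigma_z^*=\sigma_z$, and $\sigma_y^*=-\sigma_y$. Because each Bloch vector $\vec{a}_k$ has real entries, it follows immediately that
\begin{equation}
    (\vec{a}_k\cdot\vec\sigma)^* = (\mathcal{M}\vec{a}_k)\cdot\vec\sigma,
\end{equation}
that is, conjugating a qubit observable enacts precisely the mirror reflection $\mathcal{M}=\text{diag}(1,-1,1)$ of its Bloch vector about the $x$-$z$ plane---exactly the geometric operation appearing in the hypothesis \cref{Eq:MirrorMeas}.

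Combining these two observations, I would conjugate the expression for $M^A_{a|k}$ and invoke the assumption $\mathcal{M}\vec{a}_k=\vec{b}_k$:
\begin{equation}
    (M^A_{a|k})^* = \tfrac{1}{2}\l[\id + (-1)^a(\mathcal{M}\vec{a}_k)\cdot\vec\sigma\r] = \tfrac{1}{2}\l[\id + (-1)^a\,\vec{b}_k\cdot\vec\sigma\r] = M^B_{a|k},
\end{equation}
which is the claimed identity \cref{Eq:Meas:MirSym}. I do not expect any genuine obstacle here: the entire content of the lemma is the single algebraic fact that conjugation flips only the $\sigma_y$ component, mirroring $\vec{a}_k\mapsto\mathcal{M}\vec{a}_k$. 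The only care needed is to fix a consistent outcome-labeling convention so that the $(-1)^a$ factors match on both sides (and to note that the argument uses projective measurements, which by the footnote following \cref{eq:I_S} suffice for two-outcome inequalities). This relation $M^B_{a|k}=(M^A_{a|k})^*$ is exactly what will subsequently let me pull the conjugation through the Born-rule trace, together with the structure of the state in \cref{Eq:State:RealSym_ImgAntiSym}, to establish the symmetry of the correlation in \cref{Thm:SymCor}.
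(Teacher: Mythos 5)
Your proposal is correct and follows essentially the same route as the paper's proof: both express the dichotomic POVM elements in Bloch form $\tfrac{1}{2}[\id_2+(-1)^a\,\vec{a}_k\cdot\vec\sigma]$, observe that complex conjugation in the computational basis negates only the $\sigma_y$ component so that $(\vec{a}_k\cdot\vec\sigma)^*=(\mathcal{M}\vec{a}_k)\cdot\vec\sigma$, and then invoke $\mathcal{M}\vec{a}_k=\vec{b}_k$ to conclude $M^B_{a|k}=(M^A_{a|k})^*$. The only difference is cosmetic (you conjugate $M^A_{a|k}$ first, while the paper rewrites $\vec{b}_k\cdot\vec\sigma$ first), so nothing further is needed.
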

    \begin{proof}
        For dichotomic measurements, the POVM elements can be expressed as
        \begin{equation}
            M^A_{a|k} = \frac{1}{2}[\id_2 + (-1)^a \vec{a}_k \cdot \vec{\sigma}], ~ M^B_{b|k} = \frac{1}{2}[\id_2 + (-1)^b \vec{b}_k \cdot \vec{\sigma}],
        \end{equation}
        where $\vec{\sigma}$  is the vector of Pauli matrices, and $a,b \in \{0,1\}$.  
        Since $\mathcal{M} \vec{a}_k = \vec{b}_k$, we have
        \begin{equation}
            \vec{b}_k \cdot \vec{\sigma} = (\mathcal{M} \vec{a}_k) \cdot \vec{\sigma} = a_k^1 \sigma_x - a_k^2 \sigma_y + a_k^3 \sigma_z = (\vec{a}_k \cdot \vec{\sigma})^*.
        \end{equation}
        Therefore, $ M^B_{b|k} = \frac{1}{2}[\id_2 + (-1)^b (\vec{a}_k \cdot \vec{\sigma})^*] = (M^A_{b|k})^*$.
    \end{proof}
    
    The proof of \cref{Thm:SymCor} may now be completed as follows.

    \begin{proof}    
        Using Lemma~\ref{lem:MirrorPOVM}, the joint probability distribution $P_{AB}(a,b|x,y)$ arising from measuring the state $\ket{\phi(\alpha)}$ with the mirror-symmetric measurements reads as: 
        \begin{equation}
            \begin{aligned}
                \PAB(a,b|x,y) 
                &= \bra{\phi(\alpha)} M^A_{a|x} \ten M^B_{b|y} \ket{\phi(\alpha)}\\
                &= \bra{\phi(\alpha)} M^A_{a|x} \otimes (M^A_{b|y})^* \ket{\phi(\alpha)}.
            \end{aligned}
        \end{equation}
        Since $P_{AB}(a,b|x,y)$ is real-valued, we have
        \begin{equation}
            \begin{aligned}
                \PAB(a,b|x,y) 
                &= \PAB(a,b|x,y)^*\\
                &= \bra{\phi(\alpha)^*} (M^A_{a|x})^* \ten M^A_{b|y} \ket{\phi(\alpha)^*},
            \end{aligned}
        \end{equation}
        where $\ket{\phi(\alpha)^*} \coloneqq \cos\alpha\ket{\psi}_{\text{Sym}} - i\sin\alpha \ket{\Psi^-}$, i.e., the complex conjugation of $\ket{\phi(\alpha)}$.
        With the observation that $S\ket{\phi(\alpha)} = \ket{\phi(\alpha)^*}$, where $S$ is the swap operator, we can further reduce $\PAB(a,b|x,y)$ to
        \begin{equation}
            \begin{aligned}
                \PAB(a,b|x,y) 
                &= \bra{\phi(\alpha)} S^\dagger \big[ (M^A_{a|x})^* \ten M^A_{b|y} \big]S \ket{\phi(\alpha)}\\
                &= \bra{\phi(\alpha)} M^A_{b|y} \ten (M^A_{a|x})^* \ket{\phi(\alpha)}\\
                &= \bra{\phi(\alpha)} M^A_{b|y} \ten M^B_{a|x} \ket{\phi(\alpha)}\\ 
                &= \PAB(b,a|y,x),
            \end{aligned}
        \end{equation}
        where Lemma~\ref{lem:MirrorPOVM} is again utilized to arrive at the third equality.
        Hence, $\PAB(a,b|x,y)$ arising from this strategy satisfies the symmetry condition $\PAB(a,b|x,y) = \PAB(b,a|y,x)$ defined in~\cref{def: Symmetric correlation}.
    \end{proof}

    We now show that {\em no} local operations can symmetrize the measurement bases of the two players provided that their measurement directions are each {\em not} coplanar on the Bloch sphere.
    \begin{proposition}
    \label{Prop:Nonplanar}
        Let $\{\vec a_k\}_{k=1}^m \subset \mathbb R^3$ with $m\ge 3$ be non-coplanar, 
        i.e., $\dim(\text{span}\{\vec a_k\})=3$, and let $\{\vec b_k\}_{k=1}^m$ satisfy \cref{Eq:MirrorMeas}, where $\mathcal M\in O(3)$ is a reflection.
        Then there do not exist local unitaries $U_A, U_B\in SU(2)$ such that the resulting measurement directions satisfy $\vec a'_k=\vec b'_k$ for all $k$.
    \end{proposition}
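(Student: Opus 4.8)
The plan is to reduce the claim to an orientation (determinant) obstruction for orthogonal maps on $\mathbb{R}^3$. The key input is the standard two-to-one homomorphism $SU(2)\to SO(3)$: for every $U\in SU(2)$ there is a unique $R_U\in SO(3)$ with $U(\vec v\cdot\vec\sigma)U^\dagger=(R_U\vec v)\cdot\vec\sigma$ for all $\vec v\in\mathbb{R}^3$. Since a local unitary acts on a party's dichotomic observable by conjugation, applying $U_A$ to Alice's measurements sends each Bloch vector $\vec a_k\mapsto\vec a_k'=R_A\vec a_k$, and likewise $U_B$ sends $\vec b_k\mapsto\vec b_k'=R_B\vec b_k$, for some $R_A,R_B\in SO(3)$. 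In particular, \emph{both} induced maps are proper rotations, i.e.\ orientation-preserving.

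Next I would rewrite the target condition and use the spanning hypothesis. Substituting $\vec b_k=\mathcal M\vec a_k$ from \cref{Eq:MirrorMeas}, the requirement $\vec a_k'=\vec b_k'$ for every $k$ becomes $R_A\vec a_k=R_B\mathcal M\vec a_k$, that is, $(R_A-R_B\mathcal M)\vec a_k=0$ for all $k$. Since $\{\vec a_k\}_{k=1}^m$ is non-coplanar, it spans $\mathbb{R}^3$, so any linear map annihilating every $\vec a_k$ is the zero map; hence these per-$k$ equalities upgrade to the operator identity $R_A=R_B\mathcal M$ on all of $\mathbb{R}^3$.

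The contradiction then comes from comparing determinants. As $R_A,R_B\in SO(3)$ we have $\det R_A=\det R_B=+1$, whereas $\mathcal M$ is a reflection, so $\det\mathcal M=-1$; therefore $\det(R_B\mathcal M)=-1\neq+1=\det R_A$, contradicting $R_A=R_B\mathcal M$. Consequently no pair $U_A,U_B\in SU(2)$ can achieve $\vec a_k'=\vec b_k'$ for all $k$, which is the assertion of the Proposition.

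I do not expect a genuine obstacle here, as the whole argument is a short orientation count; the one conceptual point worth stressing is \emph{why} non-coplanarity cannot be dropped. Local unitaries generate only proper (orientation-preserving) rotations of the Bloch vectors, so they can never absorb the orientation-reversing mirror $\mathcal M$ once that map is forced to hold on the entire space. The spanning condition is precisely what forces $R_A=R_B\mathcal M$ globally rather than merely on a lower-dimensional subspace, where a proper and an improper map may coincide; for instance, if every $\vec a_k$ lay in the mirror plane, then $\mathcal M$ would fix them pointwise and symmetrization via $R_A=R_B=\id$ would be trivial.
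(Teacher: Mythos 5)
Your proof is correct and follows essentially the same route as the paper's: both invoke the $SU(2)\to SO(3)$ covering homomorphism, use the non-coplanarity (spanning) hypothesis to promote the per-vector equalities to a global operator identity, and derive the contradiction from the determinant mismatch between proper rotations and the reflection $\mathcal M$. The only cosmetic difference is that the paper first combines the two unitaries into $U=U_B^\dagger U_A$ and shows the composite $\mathcal R^T\mathcal M$ would have to equal $\mathbb{I}_3$, whereas you keep $R_A$ and $R_B$ separate and compare $\det R_A$ with $\det(R_B\mathcal M)$ directly; the substance is identical.
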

    \begin{proof}
        Suppose the contrary that there exists $U_A,U_B\in SU(2)$ such that 
        \begin{equation}
            \begin{aligned}
                U_A (\vec{a}_k \cdot \vec{\sigma}) U_A^\dagger 
                &= \vec{a}_k' \cdot \vec{\sigma} &&\forall\,\, k,\\
                U_B (\vec{b}_k \cdot \vec{\sigma}) U_B^\dagger 
                &= \vec{b}_k' \cdot \vec{\sigma} &&\forall\,\, k,
            \end{aligned}
        \end{equation}
        with $\vec a'_k=\vec b'_k$ for all $k$.
        
        Let $U \coloneqq U_B^\dagger U_A$. Since $\vec a'_k=\vec b'_k$ for all $k$, we have $U (\vec{a}_k \cdot \vec{\sigma}) U^\dagger = \vec{b}_k \cdot \vec{\sigma}~\forall\,\,k$. So we can focus on proving the existence of this single unitary $U$.
        It is well-known that for each $U\in SU(2)$ there exists a rotation $\mathcal R\in SO(3)$ such that $U(\vec v\cdot\vec\sigma)U^\dagger = (\mathcal R\vec v)\cdot\vec\sigma, \forall\,\, \vec v\in\mathbb R^3$. Hence, there exists some $\mathcal R$ such that $\mathcal R\vec a_k = \vec b_k~\forall\,\, k$.
        By assumption, we also have $\mathcal{M}\vec{a}_k = \vec{b}_k~\forall\,\,k$.
        Altogether, we have $\mathcal{R}^T \mathcal{M} \vec{a}_k = \vec{a}_k~\forall\,\,k$, using the fact that $\mathcal{R}^T = \mathcal{R}^{-1}$ for any orthogonal matrix $\mathcal{R}$.
        Define $\mathcal{Q}:= \mathcal{R}^T \mathcal{M}$.
        Since the set $\{\vec{a}_k\}$ are not coplanar, they span $\mathbb{R}^3$.
        And since $\mathcal{Q} \vec{a}_k = \vec{a}_k$ for all $k$, the multiplicity of the eigenvalue $1$ of $\mathcal{Q}$ must be $3$, whenever the cardinality of the set $\{\vec a_k\}_{k=1}^m$ satisfies $m \geq 3$.
        In other words, $\mathcal{Q} = \mathbb{I}_3$ (in some basis).
        However, this is impossible since $\det(\mathcal{Q}) = \det(\mathcal{R}^T) \det(\mathcal{M}) = \det(\mathcal{M}) = -1$, where the last equality follows from the fact that $\mathcal M$ is a reflection.
    \end{proof}

    In addition, as a special case of the following Proposition, we see that the only pure states that can generate symmetric correlations with mirror-symmetric qubit measurements are exactly those of the form defined in~\cref{Eq:State:RealSym_ImgAntiSym}.
    
    \begin{proposition}
        Let $\Qstr=\{ \ket{\psi}, \{M_{a|x}^A\}, \{M_{b|y}^B\} \}$ be a QS satisfying $M_{a|k}^B = (M_{a|k}^A)^*~\forall~a,k$ and that produces a symmetric correlation.
        Suppose further that the collection of $\{M_{a|x}^A\}_{a,x}$ over all possible $a,x$ spans the entire space that they act on.
        Then, apart from a global phase factor, the state $\ket{\psi}$ must be of the form 
        \begin{equation}\label{Eq:GeneralMirrorState}
        	\ket{\psi} = c_s \ket{\psi}_\text{Sym} + c_a \ket{\psi}_\text{ASym}.
        \end{equation}
        Here, $c_s \ket{\psi}_\text{Sym}$ is a real-valued symmetric (Sym) state and $c_a \ket{\psi}_\text{ASym}$ is a purely imaginary antisymmetric (ASym) state, i.e.,
        \begin{equation}\label{Eq:Def:Sym+ASym}
        	S\ket{\psi}_\text{Sym} = \ket{\psi}_\text{Sym},\quad S\ket{\psi}_\text{ASym} = -\ket{\psi}_\text{ASym},
        \end{equation}	
        where $S$ is the swap operator defined in~\cref{Eq:Swap}.
    \end{proposition}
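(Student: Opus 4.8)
The plan is to translate the two hypotheses---that the correlation is symmetric and that $M^B_{a|k}=(M^A_{a|k})^*$---into a single algebraic constraint on the coefficient matrix of $\ket{\psi}$, and then read off the asserted Hermitian structure. Writing $\ket{\psi}=\sum_{i,j}\psi_{ij}\ket{i}\ket{j}$ and collecting the amplitudes into a matrix $\Psi=(\psi_{ij})$, I would first record the elementary identity $\bra{\psi}(M\ten N)\ket{\psi}=\tr(\Psi^\dagger M\,\Psi\,N^{T})$. Inserting $M^B_{b|y}=(M^A_{b|y})^*$ exactly as in the proof of \cref{Thm:SymCor} then yields
\begin{equation}
\PAB(a,b|x,y)=\tr\!\big(\Psi^\dagger M^A_{a|x}\,\Psi\,M^A_{b|y}\big),
\end{equation}
where I used that the Hermitian POVM element obeys $\big((M^A_{b|y})^*\big)^{T}=(M^A_{b|y})^\dagger=M^A_{b|y}$, so that no stray transpose survives.

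The second step is to impose symmetry. The condition $\PAB(a,b|x,y)=\PAB(b,a|y,x)$ becomes $\tr(\Psi^\dagger M^A_{a|x}\Psi M^A_{b|y})=\tr(\Psi^\dagger M^A_{b|y}\Psi M^A_{a|x})$ for every choice of settings and outcomes. This is where the spanning hypothesis enters: the two sides are \emph{bilinear} in the inserted pair of operators, so since $\{M^A_{a|x}\}_{a,x}$ spans the full operator space, the identity extends by linearity to
\begin{equation}\label{eq:plan-bilin}
\tr(\Psi^\dagger M\,\Psi\,N)=\tr(\Psi^\dagger N\,\Psi\,M)\qquad\forall\,M,N .
\end{equation}
Evaluating \cref{eq:plan-bilin} on the elementary matrices $M=\ket{a}\!\bra{b}$ and $N=\ket{c}\!\bra{d}$ collapses it to the scalar constraints
\begin{equation}\label{eq:plan-C}
\psi_{bc}\,\overline{\psi_{ad}}=\psi_{da}\,\overline{\psi_{cb}}\qquad\forall\,a,b,c,d .
\end{equation}

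The final step is to extract the Hermitian form from \cref{eq:plan-C}. Setting $a=b,\ c=d$ gives $|\psi_{ij}|=|\psi_{ji}|$, so the support of $\Psi$ is transpose-symmetric; in particular $\psi_{pq}\neq0$ forces $\psi_{qp}\neq0$. Fixing one such nonzero pivot $(p,q)$ and reading \cref{eq:plan-C} with $(b,c)=(p,q)$ while letting $(a,d)$ range freely, I would solve for $\psi_{da}=\mu\,\overline{\psi_{ad}}$ with the \emph{single} constant $\mu:=\psi_{pq}/\overline{\psi_{qp}}$; the modulus relation then forces $|\mu|=1$, i.e. $\mu=e^{2i\theta}$. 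In matrix terms this reads $\Psi=e^{2i\theta}\Psi^\dagger$, so $e^{-i\theta}\Psi$ is Hermitian. Absorbing the global phase $e^{i\theta}$ into $\ket{\psi}$, the coefficient matrix is Hermitian, and its unique decomposition $\Psi=P+iQ$ into a real symmetric $P$ and a real antisymmetric $Q$ is exactly the split into a real symmetric state $\sum_{ij}P_{ij}\ket{ij}$, fixed by the swap $S$ of \cref{Eq:Swap}, and a purely imaginary antisymmetric state $\sum_{ij}iQ_{ij}\ket{ij}$, negated by $S$. This establishes \cref{Eq:GeneralMirrorState,Eq:Def:Sym+ASym}.

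I expect the only genuinely delicate point to be the passage to \cref{eq:plan-bilin}: one must use the spanning assumption at the level of the \emph{complex} operator span---the real span of the Hermitian POVM elements already exhausts all Hermitian operators on $\C^{D}$, and complexifying yields all of the matrix algebra, which is what licenses the substitution of the non-Hermitian elementary matrices $\ket{a}\!\bra{b}$. Care is likewise needed to track the two conjugations (one from the measurement relation $M^B=(M^A)^*$ and one implicit in $\Psi^\dagger$) so that \cref{eq:plan-C} comes out in the stated form. Once \cref{eq:plan-C} is in hand, the extraction is routine and, notably, requires only a single nonzero amplitude, so no case analysis over vanishing entries is necessary.
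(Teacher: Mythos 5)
Your proof is correct, and it reaches the paper's conclusion by a route that is mechanically different though logically parallel. The paper works at the operator level: it combines the realness of the Born probabilities with the conjugation relation and the symmetry hypothesis to show that $X=\ketbra{\psi^*}-S^\dagger\ketbra{\psi}S$ is trace-orthogonal to every $(M^A_{a|x})^*\otimes M^A_{b|y}$, cf.~\cref{Eq:X}, then invokes the spanning hypothesis to get $\tr(X^\dagger X)=0$, hence $X=0$ and the swap--conjugation relation $S\ket{\psi}=e^{i\theta}\ket{\psi^*}$ of \cref{Eq:SPsi}; the claimed form of the state follows by splitting $\ket{\psi}$ into its Sym/ASym components and absorbing a phase. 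You instead matricize the state, $\ket{\psi}\mapsto\Psi$, so that the conjugation relation collapses the Born rule to $\tr(\Psi^\dagger M^A_{a|x}\Psi M^A_{b|y})$ (which, incidentally, makes the paper's explicit appeal to realness unnecessary), and you exploit the spanning hypothesis by evaluating the resulting bilinear identity on elementary matrices, obtaining entrywise constraints whose pivot analysis yields $\Psi=e^{2i\theta}\Psi^\dagger$. That relation is exactly the coefficient-matrix form of \cref{Eq:SPsi}, since under matricization $S\ket{\psi}\leftrightarrow\Psi^{T}$ and $\ket{\psi^*}\leftrightarrow\Psi^{*}$; and your final step---the unique decomposition of the Hermitian matrix $e^{-i\theta}\Psi$ into a real symmetric part plus $i$ times a real antisymmetric part---is the standard-form counterpart of the paper's phase-adjustment argument on the Sym/ASym components. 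What each approach buys: yours is fully elementary and self-contained (no Hilbert--Schmidt-norm trick, no separate realness step), at the cost of fixing a basis and doing entrywise bookkeeping; the paper's is basis-free and dispatches the ``$X=0$'' step in one line. Both proofs use the spanning hypothesis in exactly the same way, namely through complex-bilinear extension from the Hermitian POVM elements to the full matrix algebra, a point you correctly flag as the one delicate step.
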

    \begin{proof}
        By assumption, $\ket{\psi}$ produces a symmetric correlation when measured with POVMs satisfying~\cref {Eq:Meas:MirSym}.
        For convenience, we write $\ket{\psi^*}:=(\ket{\psi})^*$, $\ket{\psi^*}_\text{Sym}:=(\ket{\psi}_\text{Sym})^*$, and $\ket{\psi^*}_\text{ASym}:=(\ket{\psi}_\text{ASym})^*$.
        To prove the Proposition, we employ the three properties: 
        \begin{enumerate}
            \item[(i)] the correlation is symmetric, i.e., $P_\AB(a,b|x,y) = P_\AB(b,a|y,x)$, 
            \item[(ii)] the correlation is real-valued, i.e., $P_\AB(a,b|x,y) = P_\AB(a,b|x,y)^*$, and 
        \item[(iii)] the POVMs satisfy \cref{Eq:Meas:MirSym}.
        \end{enumerate}
        Using (ii) and (iii), we arrive at the following:
        \begin{equation}
            \begin{aligned}
                &P_\AB(a,b|x,y) = \braket{\psi|M_{a|x}^A \ten M_{b|y}^B|\psi}\\
                &= \braket{\psi^*|(M_{a|x}^A)^* \ten (M_{b|y}^B)^*|\psi^*}\\
                &= \braket{\psi^*|(M_{a|x}^A)^* \ten M_{b|y}^A|\psi^*},
            \end{aligned}
        \end{equation}
        while with (i), (iii), and the defining property of the swap operator, we arrive at
        \begin{equation}
            \begin{aligned}
                &P_\AB(a,b|x,y) = \braket{\psi|M_{a|x}^A \ten M_{b|y}^B|\psi}\\
                &= \braket{\psi|M_{b|y}^A \ten M_{a|x}^B|\psi}\\
                &= \braket{\psi|S[(M_{a|x}^A)^* \ten M_{b|y}^A]S^\dagger|\psi}.
            \end{aligned}
        \end{equation}
        Subtracting the terms at the end of the last two lines yields
        \begin{equation}\label{Eq:X}
            \tr \Big[ (M_{a|x}^A)^* \ten M_{b|y}^A X \Big] = 0, \forall\,\, a,b,x,y,
        \end{equation}
        where $X:=\ketbra{\psi^*}{\psi^*} - S^\dagger\ketbra{\psi}{\psi} S$.
    	Now, since $\{M_{a|x}^A\}$ span the entire space that they act on, $\{(M_{a|x}^A)^* \ten M_{b|y}^A\}$ also span the entire composite Hilbert space.
        In particular, we can consider an appropriate linear combination of these operators, and hence of \cref{Eq:X}, to arrive at $\tr \Big( X^\dag X \Big) = 0$.
        However, the left-hand side of the last equation is simply the Hilbert-Schmidt norm of the operator $X$.
        This means that $X$ must be identically zero, and hence
        \begin{equation}\label{Eq:SPsi}
        	S \ket{\psi} = e^{i\theta} \ket{\psi^*}
        \end{equation} for some $\theta \in [0, 2\pi)$.
        
        Without loss of generality, we can express $\ket{\psi}$ as a linear combination of Sym and ASym components, cf.~\cref{Eq:GeneralMirrorState,Eq:Def:Sym+ASym}
        where 
    	$c_s, c_a \in \C$ while both $\ket{\psi}_\text{Sym}$ and $\ket{\psi}_\text{ASym}$ may be complex.
        Substituting \cref{Eq:GeneralMirrorState} into \cref{Eq:SPsi} 
        gives 
        \begin{equation}\label{Eq:c}
            c_s\ket{\psi}_\text{Sym} = e^{i\theta} c_s^*\ket{\psi^*}_\text{Sym},\quad -c_a\ket{\psi}_\text{ASym} = e^{i\theta} c_a^*\ket{\psi^*}_\text{ASym}, 
        \end{equation}
        where we have used the complex conjugation of~\cref{Eq:Def:Sym+ASym}.
        Let us define $\tilde{c}_s \coloneqq e^{-i\theta/2} c_s$ and $\tilde{c}_a \coloneqq e^{-i\theta/2} c_a$.
        Then, it follows from \cref{Eq:c} that 
        \begin{equation}
        	\tilde{c}_s\ket{\psi}_\text{Sym} = \tilde{c}_s^*\ket{\psi^*}_\text{Sym},\quad -\tilde{c}_a\ket{\psi}_\text{ASym} = \tilde{c}_a^*\ket{\psi^*}_\text{ASym}, 
        \end{equation}	
        indicating that $\tilde{c}_s\ket{\psi}_\text{Sym}$ is a real-valued symmetric vector, while $\tilde{c}_a\ket{\psi}_\text{ASym}$ is a purely imaginary anti-symmetric vector.
        Hence, we can write $\ket{\psi} = e^{i\theta/2} \l( \tilde{c}_s \ket{\psi}_\text{Sym} + \tilde{c}_a \ket{\psi}_\text{ASym} \r)$, which completes the proof. This is equivalent to the form defined in~\cref{Eq:State:RealSym_ImgAntiSym} up to some global phase.
    \end{proof}

    \subsection{More general construction of QS giving symmetric correlations}
    
    The discussion in~\cref{App:MirSym} makes evident that symmetric correlations may arise beyond SQSs. In fact, we may consider both SQSs and mirror-symmetric strategies discussed in \cref{App:MirSym} as special cases of a more general construction. To this end, let us note from \cref{Eq:Born} and \cref{def: Symmetric correlation} that for a symmetric correlation $\vecP=\Psym$, we have
        \begin{equation}\label{Eq:Born2}
            P_\AB(a,b|x,y) = \tr(\rho_\AB M^A_{a|x}\otimes M^B_{b|y})= P_\AB(b,a|y,x).
         \end{equation}
    By definition, we may also write the last term of \cref{Eq:Born2} as
        \begin{equation}\label{Eq:Born3}
            P_\AB(b,a|y,x)=P_\BA(a,b|x,y) = \tr(\rho_\BA M^B_{a|x}\otimes M^A_{b|y}).
         \end{equation}
    Consider now a linear positive map $O:\mathcal{B}(\mathcal{H} \otimes \mathcal{H}) \rightarrow \mathcal{B}(\mathcal{H} \otimes \mathcal{H})$ that leaves the trace in \cref{Eq:Born2} invariant, i.e.,  
        \begin{equation}\label{Eq:BornModified}
            P_\AB(a,b|x,y) = \tr\left[O(\rho_\AB) O(M^A_{a|x}\otimes M^B_{b|y})\right],
         \end{equation}
    for some $\Qstr=\{\rho_\AB,\{M_{a| x}^{A}\}_{a,x},\{M_{b|y}^{B}\}_{b,y}\}$. After substituting \cref{Eq:Born3,Eq:BornModified} into \cref{Eq:Born2}, we see that for the latter to hold, it suffices that the following conditions hold
    \begin{subequations}\label{Eq:SufficientConditions}
    \begin{gather}
    	O(\rho_\AB)=\rho_\BA,\\
    	O(M^A_{a|x}\otimes M^B_{b|y})=M^B_{a|x}\otimes M^A_{b|y}\,\,\forall\,\,a,b,x,y.
    \end{gather}
    \end{subequations}
    
    For example, if we take $O$ to be the identity map, then \cref{Eq:SufficientConditions} implies that:
    \begin{subequations}\label{Eq:SufficientConditions:SQS}
        \begin{gather}
        	\rho_\AB=\rho_\BA,\label{Eq:State:Sufficient} \\
        	M^A_{a|x}\otimes M^B_{b|y}=M^B_{a|x}\otimes M^A_{b|y}\,\,\forall\,\,a,b,x,y.\label{Eq:POVM:Sufficient}
        \end{gather}
    \end{subequations}
    By summing both sides of \cref{Eq:POVM:Sufficient}, e.g., over $b$ and using the normalization condition of a POVM, we see that \cref{Eq:SufficientConditions:SQS} is exactly the condition of an SQS, see~\cref{def:Symmetric strategy}.
    
    On the other hand, if we take $O$ to be the complex conjugation (or equivalently, transposition) operation, then \cref{Eq:SufficientConditions} becomes
    \begin{subequations}\label{Eq:SufficientConditions:Mirror}
        \begin{gather}
        	(\rho_\AB)^*=\rho_\BA,\\
        	(M^A_{a|x})^*\otimes (M^B_{b|y})^*=M^B_{a|x}\otimes M^A_{b|y}\,\,\forall\,\,a,b,x,y,
        \end{gather}
    \end{subequations}
    which is easily verified to hold for mirror-symmetric strategies discussed in \cref{App:MirSym}, see~\cref{Eq:Meas:MirSym,Eq:SPsi}.
    
    For an explicit example, if we take mirror-symmetric qubit measurements, then for any two-qubit state, we have
    \begin{equation}
        \rho
        = \frac{1}{4}\Big(
        \mathbb{I}\otimes\mathbb{I}
        + (\vec r \cdot \vec\sigma) \otimes \mathbb{I}
        + \mathbb{I} \otimes (\vec s \cdot \vec\sigma)
        + \sum_{i,j=1}^{3} T_{ij}\, \sigma_i \otimes \sigma_j
        \Big),
    \end{equation}
    where $\vec r = (r_x,r_y,r_z)$, $\vec s = (s_x,s_y,s_z)$ and $T$ is the correlation matrix. To satisfy~\cref{Eq:SufficientConditions:Mirror}, we require
    $\vec s = (r_x, -r_y, r_z)$ and
    \begin{equation}
        T = 
        \begin{pmatrix}
            T_{xx} & T_{xy} & T_{xz} \\
            -T_{xy} & T_{yy} & T_{yz} \\
            T_{xz} & -T_{yz} & T_{zz} 
        \end{pmatrix}.
    \end{equation}
    This gives the most general form of a two-qubit state that produces a symmetric correlation with mirror-symmetric measurements.
        
\section{Asymmetry and the geometry of the quantum set of correlations}
    \label{Sec:GeometryandSymmetry}
       
    \subsection{Maximal quantum violation of a symmetric Bell inequality by an asymmetric correlation}
                   
    As illustrated above, there exist symmetric Bell inequalities that can be maximally violated by an  {\em asymmetric} correlation. It turns out that this has a nontrivial implication on the geometry of the quantum set of correlations, as we summarize in the following proposition.
    
	\begin{proposition}\label{Prop:AsymPImpliesFlatness}
		If an asymmetric quantum correlation maximally violates a symmetric Bell inequality, then the set of quantum maximizers of this Bell inequality forms at least a one-dimensional flat region (boundary) in the space of correlations.
	\end{proposition}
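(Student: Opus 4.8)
The plan is to exploit the interplay between the symmetry of the inequality and the party-exchange structure of the quantum set $\Q$. Denote by $\vecP$ the asymmetric quantum correlation attaining the maximal quantum value $I^\star$ of the symmetric Bell inequality $\BIsym$, and introduce its \emph{party-swapped} counterpart $\vecP'$ with components $P'(a,b|x,y) := \PAB(b,a|y,x)$. The first step is to verify that $\vecP'$ is again a quantum correlation. If $\vecP$ arises from the QS $\{\rho_\AB,\{M^A_{a|x}\},\{M^B_{b|y}\}\}$, then inserting the swap operator $S$ of~\cref{Eq:Swap} into Born's rule shows that
\begin{equation}
    P'(a,b|x,y) = \tr\!\big(\rho_\BA\, M^B_{a|x}\ten M^A_{b|y}\big),
\end{equation}
so $\vecP'$ is realized by the state $\rho_\BA$ with Alice and Bob merely interchanging their measurement assignments. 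Hence $\vecP'\in\Q$.

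Next I would show that $\vecP'$ attains the same Bell value. Using the symmetry $\beta^{ab}_{xy}=\beta^{ba}_{yx}$ of~\cref{eq:sym_ineq} followed by a relabeling $(a,b,x,y)\leftrightarrow(b,a,y,x)$ of the dummy summation indices,
\begin{equation}
    \BIsym\cdot\vecP' = \sum_{x,y,a,b}\beta^{ab}_{xy}\,\PAB(b,a|y,x) = \sum_{x,y,a,b}\beta^{ba}_{yx}\,\PAB(b,a|y,x) = \BIsym\cdot\vecP = I^\star,
\end{equation}
so $\vecP'$ is also a maximizer. Crucially, because $\vecP$ is \emph{asymmetric}, cf.~\cref{def: Symmetric correlation}, we have $\vecP'\neq\vecP$, i.e.\ the two maximizers are genuinely distinct points of $\Q$.

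Finally, I would invoke the convexity of $\Q$ (which follows from mixing strategies via shared randomness). The set of maximizers is $\Q\cap H$, where $H=\{\vecP:\BIsym\cdot\vecP=I^\star\}$ is the supporting hyperplane at the maximum; this intersection is convex and, being contained in a supporting hyperplane, lies entirely on the boundary $\partial\Q$. Since both $\vecP$ and $\vecP'$ belong to it and are distinct, the entire segment $\{t\vecP+(1-t)\vecP':t\in[0,1]\}$ lies in $\Q$ by convexity and in $H$ by linearity of $\BIsym\cdot$, hence consists of maximizers. This nondegenerate segment is a one-dimensional flat subset of $\partial\Q$, which is precisely the claimed flat region.

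The conceptual content is concentrated entirely in the first step---the closure of $\Q$ under the party-swap map---while the remaining steps are routine convexity arguments. I therefore expect the only (mild) obstacle to be phrasing precisely why $\vecP'\in\Q$: one must exhibit the explicit relabeled strategy displayed above rather than merely appeal to a ``symmetry of $\Q$'', since individual correlations in $\Q$ need not be swap-invariant even though $\Q$ as a set is.
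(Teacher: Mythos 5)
Your proof is correct and follows essentially the same route as the paper's: permute the asymmetric maximizer, use the symmetry of the Bell coefficients to show the permuted correlation is also a maximizer, and conclude via convexity and linearity that the segment joining the two distinct maximizers is a one-dimensional flat region of the quantum boundary. The only differences are cosmetic—the paper states the argument for general $N$-partite permutations $\sigma\in S_N$ rather than just the bipartite swap, and it leaves implicit the closure of $\Q$ under party exchange, which you verify explicitly by exhibiting the swapped strategy.
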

	\begin{proof}
	The proof is very analogous to that of Proposition~\ref{Prop:SufficiencySymP}. However, instead of \cref{Eq:SymmetrizingP}, consider now	
        \begin{equation}
        \label{eq:symmetriccombination}
            \vecP_c = c\vecP_\AB^\star +(1-c)\vecP_\BA, \ c \in [0,1],
        \end{equation}
    	which is also a maximizer of $I$. Since $\Q$ is convex and $I$ is linear in $\vecP$, only those correlations $\vecP$ lying on the boundary of $\Q$ can be a maximizer of $I$ in $\Q$. In other words, the set of correlations defined by \cref{eq:symmetriccombination} forms a one-dimensional flat region of the quantum boundary, thus concluding the proof. 
	
	Note that the proof above can be straightforwardly generalized to an $N$-partite Bell scenario using the more general notations of $V_\sigma$ and $\vecP$, instead of $V_\AB$ and $\vecP_\AB$. Here, $V_\sigma$ is the permutation operator realizing the permutation $\sigma$ in the entries of $\vecP$, where $\sigma$ is an element of the symmetric group $S_N$ such that $\vecP_\sigma:=V_\sigma\vecP^\star\neq\vecP^\star$.
	
	\end{proof}
	An important consequence of Proposition~\ref{Prop:AsymPImpliesFlatness} is the following Corollary.
	\begin{corollary}\label{Prop:AsymPImpliesNoSelfTest}
		A symmetric Bell inequality $I$ admitting an asymmetric maximizer cannot be used to self-test any reference quantum strategy $\Qstr$ based on the observed maximal violation of $I$ alone.
	\end{corollary}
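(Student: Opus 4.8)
The plan is to argue by contradiction, using Proposition~\ref{Prop:AsymPImpliesFlatness} together with the defining feature of self-testing. First I would fix what it means to self-test a reference strategy $\Qstr$ \emph{from the maximal violation of $I$ alone}: namely, that every strategy $\Qstr'$ attaining this maximal value must be equivalent to $\Qstr$ up to a local isometry. The elementary fact I would then invoke is that local isometries leave the observed correlation invariant, so any two strategies so related reproduce \emph{identical} correlations via Born's rule, cf.~\cref{Eq:Born}. Consequently, a successful self-test would force all maximizing strategies to produce one and the same correlation $\vecP$; equivalently, the maximal Bell value would have to certify a unique maximizing correlation.

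Next I would invoke Proposition~\ref{Prop:AsymPImpliesFlatness} directly. By hypothesis, $I$ is symmetric and admits an asymmetric maximizer $\vecP^*$, so the proposition yields a one-dimensional flat region $\{c\vecP^* + (1-c)\vecP_\sigma : c\in[0,1]\}$ of quantum maximizers, with $\vecP_\sigma := V_\sigma\vecP^* \neq \vecP^*$ a nontrivial party-permutation of $\vecP^*$. In particular, $\vecP^*$ and $\vecP_\sigma$ are two \emph{distinct} correlations that both attain the maximal violation, so the scalar Bell value manifestly fails to single out a unique maximizing correlation. (I would also note in passing that a party swap is \emph{not} a local isometry, so $\vecP^*$ and $\vecP_\sigma$ genuinely correspond to strategies that cannot be identified by local operations.)

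Combining the two steps closes the contradiction: if $\Qstr$ were self-tested by the maximal violation of $I$, then the maximizing strategies producing $\vecP^*$ and $\vecP_\sigma$ would both have to be locally isometric to $\Qstr$, forcing $\vecP^* = \vecP_\sigma$ and contradicting their distinctness. Hence no such self-test can exist. The step demanding the most care is the first one, namely stating the self-testing hypothesis precisely enough that \enquote{self-test from the scalar Bell value} cleanly implies \enquote{unique maximizing correlation}; I would be careful to distinguish this from the more common notion of self-testing from the full observed correlation, and to emphasize that since the flat region already defeats uniqueness at the level of correlations, it \emph{a fortiori} defeats any certification based on the Bell value alone.
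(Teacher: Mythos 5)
Your proposal is correct and follows essentially the same route as the paper: both arguments reduce self-testing from the Bell value alone to the necessary condition that the maximizing correlation be unique, and then invoke Proposition~\ref{Prop:AsymPImpliesFlatness} to exhibit two distinct maximizers $\vecP^*$ and $\vecP_\sigma$, yielding the contradiction. Your only addition is to spell out \emph{why} uniqueness of the correlation is necessary (local isometries preserve the Born-rule statistics), a point the paper simply asserts.
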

	\begin{proof}
        Self-testing~\cite{Supic19} of a reference strategy $\Qstr$ based on an observed Bell value refers to the possibility of deducing from this observation that the underlying strategy, modulo irrelevant local degrees of freedom, must be $\Qstr$. A necessary condition for this possibility is that there is only a unique quantum correlation leading to this Bell value. By assumption, the symmetric inequality $I$ admits an asymmetric maximizer, thus it follows from Proposition~\ref{Prop:AsymPImpliesFlatness} that its maximizer is not unique, thus rendering it impossible to perform self-testing from the maximal violation of $I$ alone.
	\end{proof}
	
	Notice, however, that the above Corollary does not preclude the possibility of self-testing the underlying state (but not the measurements) from the observed Bell value, like the examples discussed in~\cite{Jeba:PRR:2019,Kaniewski:PRR:2020,Gigena:PRA:2022}. Also noteworthy is that flat regions of the quantum boundary have been extensively discussed in~\cite{Goh2018}  (see also~\cite{Barizien2024}). However, most of the examples presented therein concern flat regions that are described by a trivial Bell inequality, i.e., one that is not violated by quantum theory. In contrast, the flat region corresponding to $I_S(\alpha)$  of \cref{eq:I_S}, as with that for the inequalities listed in~\cref{tbl: asymmetric strategy}, does not intersect with the Bell polytope.
	
    \subsection{Maximal quantum violation of an asymmetric Bell inequality by a symmetric correlation}
	
	While a symmetric Bell inequality can be maximally violated by a symmetric correlation, we generally cannot hope that the same conclusion holds for an asymmetric Bell inequality. To see this, consider, e.g., the CHSH inequality obtained from \cref{Eq:CHSH2} by the relabeling $B_0\leftrightarrow B_1$:
	\begin{equation}\label{Eq:CHSH-asym}
               \ICHSH =\expA{A_0 B_0} + \expA{A_0 B_1} - \expA{A_1 B_0} + \expA{A_1 B_1} \overset{\L}{\le} 2.
    \end{equation}
	If a correlation $\vecP$ violates this inequality, one would have 
	\begin{equation}\label{Eq:CHSH-asym-violate}
        \expA{A_0 B_0} + \expA{A_0 B_1} - \expA{A_1 B_0} + \expA{A_1 B_1} > 2.
    \end{equation}
    If the $\vecP$ is also PPI, then the correlation remains invariant under the transformation $A_k\leftrightarrow B_k$ for all $k=0,1$, which means that it must also satisfy, from~\cref{Eq:CHSH-asym-violate}, after a simultaneous relabeling for $\vecP$ and $\vec{\beta}$:
	\begin{equation}\label{Eq:CHSH-asym-violate2}
        \expA{A_0 B_0} + \expA{A_1 B_0} - \expA{A_0 B_1} + \expA{A_1 B_1} > 2.
    \end{equation}
	However, the conjunction of \cref{Eq:CHSH-asym-violate,Eq:CHSH-asym-violate2} contradicts the fact~\cite{LHB+10} that any correlation $\vecP$ can violate at most one version of the CHSH Bell inequality. In other words, no symmetric correlation can violate the asymmetric CHSH Bell inequality of~\cref{Eq:CHSH-asym-violate}, let alone maximally. 
	
	In contrast, consider the following two-parameter family of Bell inequalities:
	\begin{subequations}\label{Eq:TsirelsonIneqInspired}
        \begin{align}
	        I_{r_0,r_1} =\,\, &r_0\left[\frac{A_0+A_1}{\sqrt{2}}-B_0\right]+r_1\left[\frac{A_0-A_1}{\sqrt{2}}-B_1\right]\nonumber\\
	        &+\frac{1}{2\sqrt{2}}\beta_\text{CHSH}\overset{\L}{\le} g(r_0, r_1)
        \end{align}
        for
    	\begin{equation}
    		r_1\neq-(\sqrt{2}+1)r_0,	
    	\end{equation}
	\end{subequations}
	and the pair $(r_0,r_1)$ could otherwise take {\em any} value within the {\em interior} of the octagon spanned by the 8 vertices:
    \begin{equation}\label{Eq:Vertices}
        \pm(\frac{1}{2}-\zeta,0),\,\pm(\zeta,\zeta),\, \pm(0,\frac{1}{2}-\zeta),\, \pm(\zeta,-\zeta),\,
    \end{equation}
	with $\zeta:= \frac{1}{\sqrt{2}}-\frac{1}{2}$. For completeness, we provide the local bound $g(r_0,r_1)$ of this two-parameter family of inequalities in \cref{Eq:g} of \cref{App:LocalBound}.
	It is easy to verify that the Bell inequalities of \cref{Eq:TsirelsonIneqInspired} cannot be written in a PPI form, even after any relabeling of the settings and outcomes. At the same time, it has been shown in~\cite{Barizien2024} that they are all maximally violated by the Tsirelson correlation $\vecP_T$ of \cref{eq:Tsirelson pt}, which is PPI. In other words, the family of asymmetric Bell inequalities given by \cref{Eq:TsirelsonIneqInspired} are all maximally violated by the same symmetric correlation $\vecP_T$.

\section{Conclusion}
    \label{Sec:Conclusion}

Due to the arbitrariness in the classical labeling (of settings, outcomes, and parties) as well as the degeneracy arising from the normalization and no-signaling constraints, a Bell inequality may be rewritten in various forms. However, even after incorporating these degrees of freedom, {\em not all} Bell inequalities can be cast in a symmetric PPI form. In this work, we have focused on (facet-defining) Bell inequalities that indeed admit a symmetric representation. 
	
In particular, we have explored the possibility of finding, for symmetric Bell inequalities, a symmetric quantum strategy (SQS) of {\em minimal} dimension that maximizes their Bell violation. Our results show that, for integer $d\in[2,19]$, there exists an SQS in dimension $d$ maximizing the Bell violation of the family of symmetric $I_{22dd}$ inequalities (known~\cite{Liang:PhDthesis} to be equivalent to the CGLMP Bell inequalities). Moreover, for $d\le5$, no QS of a smaller Hilbert dimension can attain the same maximal value.

However, for a family of symmetric Bell inequalities in the $(2,3,2)$ Bell scenario and 9 of the nontrivial, symmetric, genuine four-setting facet-defining Bell inequalities applicable to the $(2,4,2)$ Bell scenario, we observe a gap between the maximal quantum violation achievable and that arising from an SQS of minimal dimension. In other words, there exists a {\em trade-off} between symmetry and dimension for achieving the maximal quantum violation of these inequalities---either we opt for an {\em SQS} that is HD, or we go for an asymmetric QS that is of {\em minimal dimension}, but not both. Note, however, that we do not know if one can find an example of such a trade-off in a simpler Bell scenario, a problem that may deserve further investigation.

Interestingly, since the optimal, {\em minimal}-dimension QS for all these symmetric inequalities gives rise to an {\em asymmetric} correlation, it follows from Proposition~\ref{Prop:AsymPImpliesFlatness} that the quantum maximizers of each of these Bell inequalities must define a {\em flat} region of the quantum boundary. As explained in~\cref{Sec:GeometryandSymmetry}, this feature has negative implications on the possibility of performing self-testing using the observed quantum violation of these inequalities. 
	
Also worth noting is that among the inequalities showing a trade-off, the maximal (qubit) violations of $J_{4422}^{42}$ under different constraints satisfy:
\begin{equation}
     \max_\text{Qubit SQS} J_{4422}^{42} <\max_\text{Qubit $\Psym$} J_{4422}^{42} <\max_\text{Qubit $\vecP$} J_{4422}^{42} = \max_{\vecP\in\Q} J_{4422}^{42}.
\end{equation}
Hence, with the assumption that the employed QS is a qubit strategy, one may certify that the underlying strategy is asymmetrical, even if the observed correlation is symmetric, i.e., 
\begin{equation}
      J_{4422}^{42} \overset{\text{Qubit SQS}}{\le}  0.5682
\end{equation}
serves as a semi-DI~\cite{Liang2011} witness for an asymmetric qubit QS. In fact, the trade-off is so strong that even if we allow arbitrary symmetric correlations attainable with two-qubit QSs, there remains a gap with the maximal quantum violation of $J_{4422}^{42}$. In contrast, we present in~\cref{App:I9} an example of a $9$-setting Bell inequality that again shows a gap between the maximal two-qubit SQS violation and that achievable with two-qubit symmetric correlations, but where the latter already achieves the quantum maximum. It is worth noting that these witnesses may alternatively be interpreted as a special kind of {\em dimension} witnesses~\cite{Brunner08,Moroder13} (see also~\cite{Lim2010,KST+19}), that are only applicable to SQSs.

\begin{table}
    \centering
    \begin{tabular}{c|c|c|c|c||c}
        Bell Inequality&Facet&$\BIsym$&$\Dm$&$\mMSQS$&(SQS, $\Psym$, $\vecP^\star$)\\ 
        \hline
        CHSH~\cite{Clauser69}:~\cref{Eq:CHSH2}&\cmark&\cmark&2&\cmark&\cref{eq:max_CHSH_strategy}:(\xmark, \cmark, \cmark)\\
        CHSH~\cite{Clauser69}:~\cref{Eq:CHSH2}&\cmark&\cmark&2&\cmark&\cref{eq:CHSH ss strategy}:(\cmark, \cmark, \cmark)\\
        $I_{2233}$~\cite{Collins04}:~\cref{eq: symmetry 22dd}&\cmark&\cmark&3&\cmark&\cref{Eq:CGLMP-Opt}:(\cmark, \cmark, \cmark)\\ 
        \hline
        $I_{3322c}$~\cite{Goh2018}:~\cref{Eq:CorI3322}&\xmark&\cmark&2&\cmark&\cref{Eq:I3322cStrategy}:(\cmark, \cmark, \cmark)\\
        $I_{S}(\alpha)$:~\cref{eq:I_S}&\xmark&\cmark&2&\xmark&\cref{eq:OptimalQS:Is}:(\xmark, \xmark, \cmark)\\
        \hline
        $J_{4422}^{42}$\cite{Pal09QB}:~\cref{Ineq:J42}&\cmark&\cmark&2&\xmark&\cref{eq:J42_strategy}:(\xmark, \cmark, \xmark)\\
        \hline
        $I_{r_0,r_1}$~\cite{Barizien2024}:~\cref{Eq:TsirelsonIneqInspired}&\xmark&\xmark&2&\cmark&\cref{eq:CHSH ss strategy}:(\cmark, \cmark, \cmark)\\
    \end{tabular}
    \caption{\label{tab:summary} Summary of some of the examples presented in this work. From left to right, we list the Bell inequality considered (together with the equation number where we list its explicit form), its facet-defining property, its symmetric property (PPI or not), the dimension $\Dm$ of its minimal maximizing quantum strategy $\mMQS$, and whether there exists a $\mMSQS$, i.e., a $\mMQS$ that is also symmetric (its absence, marked by \xmark, indicates the existence of a trade-off between symmetry and dimension). Further to the right, we list the equation number of an explicit QS considered, and indicate whether it is an SQS, whether the resulting correlation is symmetric ($\Psym$), and whether it maximizes the corresponding Bell-inequality violation ($\vecP^\star$).}
\end{table}

Of course, by considering a nontrivial, asymmetric Bell inequality, one may also hope to drop the dimension assumption and go for a fully DI witness for asymmetry. Indeed, for {\em all} facet-defining Bell inequalities in the $(2,m,2)$ and $(2,2,n)$ Bell scenarios with $m\le 4$ and $n\le 5$, we observe that there always exists an {\em asymmetric} representation that {\em cannot} be violated by symmetric quantum correlations. It will certainly be interesting to see if this feature holds true in general, specifically, for all Bell scenarios. For a summary of some other results we have obtained, see~\cref{tab:summary}. 

An observant reader may have also noticed that we have obtained results of rather different natures in the $(2,m,2)$ and $(2,2,n)$ Bell scenarios---trade-offs in many of the former but none in the latter. Naturally, it will be interesting to investigate what happens in more complicated Bell scenarios $(N,m,n)$ where $N\ge 2$ and both $m,n\ge 3$. Some natural candidates to consider include the symmetric Bell inequalities discussed in~\cite{Liang09,Lim2010,Bancal_2010,Bancal12,KST+19}, as well as inequalities from~\cite{BKP06,Schwarz16,Deza:2016aa,SAT+17,Cope19} that may be cast in a symmetric form. Throughout, we have relied on SDP tools described in~\cref{App:Techniques} to determine whether a trade-off can be demonstrated for any given Bell inequality. 
Evidently, it will be desirable to devise analytic criteria to determine whether such a trade-off can exist before running any numerical optimizations, a challenging problem we shall leave for future consideration.

Note also that the realizability of symmetric distributions by a symmetric model has also been explored in the context of network nonlocality. In fact, it has recently been shown~\cite{WRB+2025} that there exist symmetric distributions in the triangle network (under exchange of parties) that cannot be created via a symmetric model assuming {\em only} no-signaling, regardless of the dimension or nature of the system. Moreover, such intriguing distributions can be created via a local asymmetric model. 

Finally, note also that while we have focused on the symmetry of (full) permutation invariance---see~\cref{App:Multipartite} for some further results in this context---there are clearly interesting alternatives to consider. For instance, one may consider Bell inequalities that exhibit translationally (also called cyclic-permutation) invariance \cite{Grandjean2012,Tura_2014,Wang17}. Could one find a similar kind of trade-off between dimension and symmetry? We leave the exploration of this and other interesting questions for future work.

\begin{acknowledgments}
    We thank Jean-Daniel Bancal, Istv\'an M\'arton, K\'aroly F. P\'al, Gilles P\"utz, William Slofstra, and Elie Wolfe for helpful discussions. HYH thanks Bo-An Tsai for sharing his code, which led to some of the results presented here. KSC thanks the hospitality of the Institut Néel, where part of this work was completed.
    MEL acknowledges support from the Ministry of Education of Taiwan and Université Paris-Saclay through the Taiwan–Université Paris-Saclay Scholarship Program.
    This work was supported by the National Science and Technology Council, Taiwan (Grants No. 109-2112-M-006-010-MY3, 112-2628-M-006-007-MY4, 113-2917-I-006-023, 113-2918-I-006-001), the Foxconn Research Institute, Taipei, Taiwan, in part by the Higher Education Sprout Project, Ministry of Education, to the Headquarters of University Advancement at the National Cheng Kung University (NCKU), and in part by the Perimeter Institute for Theoretical Physics. Research at Perimeter Institute is supported by the Government of Canada through the Department of Innovation, Science, and Economic Development, and by the Province of Ontario through the Ministry of Colleges and Universities. T.~V.~acknowledges the support of the European Union (CHIST-ERA MoDIC), the National Research, Development and Innovation Office NKFIH (Grants No.~2023-1.2.1-ERA\_NET-2023-00009 and No.~K145927) and the `Frontline' Research Excellence Program of the NKFIH (No.~KKP133827). 
\end{acknowledgments}
        
\appendix
        
\crefname{appendix}{appendix}{appendices}
\Crefname{appendix}{Appendix}{Appendices}
\crefalias{section}{appendix} 
\crefalias{subsection}{appendix} 
\crefalias{subsubsection}{appendix} 
        
\section{Bounding quantum values of Bell inequality allowed by symmetric correlations and symmetric strategies}
    \label{App:Techniques}

	Let $\mathcal{B}(\mathcal{H})$ be the set of operators acting on $\mathcal{H}$.
	Given a bipartite Bell inequality specified by the coefficients $\vec{\beta}$, our goal is to obtain its maximal value over all possible symmetric quantum strategies (SQSs) of local HSD upper bounded by $d\le D$, i.e., 
	
    \begin{align}\label{Eq:Opt:SQS}
        \max_{\rho_\AB, \{M_{a|x}\} } \quad &\qquad\  \vec{\beta} \cdot \vecP\nonumber\\
        \text{such that}\quad & \ \PAB(a,b| x,y)=\mathrm{tr}(\rho_\AB M_{a|x}\otimes M_{b|y}),\nonumber\\
        &\ \tr(\rho_\AB)=1,\quad \rho_\AB = S\rho_\AB S^{\dagger} \succeq 0,\nonumber\\
        &\ \rho_\AB\in\mathcal{B}(\mathbb{C}^D\otimes\mathbb{C}^D), \\
        &\ \sum_{a} M_{a|x} = \id_D\ \forall \ x,\quad M_{a|x}\succeq 0,\ \forall \ a,x, \nonumber
    \end{align}
	where $S$ defined in \cref{Eq:Swap} is the swap operator.
	Since an SQS necessarily gives a symmetric correlation, see~\cref{eqn: symmetric strategy}, this problem may be relaxed, e.g., by optimizing over symmetric correlations attainable by QSs of the same HSD, i.e.,
	\begin{align}\label{Eq:Opt:SymmCorr}
        \max_{\rho_\AB, \{M^A_{a|x}\}, \{M^B_{b|y}\} } \,\, &\qquad\  \vec{\beta} \cdot \vecP\nonumber\\ 
        \text{such that}\qquad & \ \PAB(a,b| x,y)=\PAB(b,a|y,x),\nonumber\\
        &\ \PAB(a,b| x,y)=\mathrm{tr}(\rho_\AB M^A_{a|x}\otimes M^B_{b|y}),\nonumber\\
        &\ \tr(\rho_\AB)=1,\,\, \rho_\AB \succeq 0,\,\, \nonumber\\
        &\ \rho_\AB\in\mathcal{B}(\mathbb{C}^D\otimes\mathbb{C}^D), \\
        &\ \sum_{a} M^A_{a|x} = \id_D\ \forall \ x,\quad M^A_{a|x}\succeq 0,\ \forall \ a,x, \nonumber\\
        &\ \sum_{b} M^B_{b|y} = \id_D\ \forall \ y,\quad M^B_{b|y}\succeq 0,\ \forall \ b,y. \nonumber
    \end{align}
	In other words, the optimum value of \cref{Eq:Opt:SymmCorr} is always larger than or equal to that of \cref{Eq:Opt:SQS}.		
		
    \subsection{Upper bounding the quantum violation of a Bell inequality}
        \subsubsection{Without dimensional constraints}
       
        When there is no dimension constraint, i.e., if $D=\infty$, there is no distinction between the Bell value achievable by SQSs and symmetric correlations, see Proposition~\ref{Prop:SCor2SQS}. Then, an upper bound on the maximal quantum violation by SQS can be obtained by solving any SDP hierarchy that outer approximates the quantum set $\Q$, such as that due to Navascu\'es-Pironio-Ac\'in (NPA)~\cite{NPA,NPA2008} and Moroder {\em et al.}~\cite{Moroder13}, but now with the {\em additional constraints} that all the moments are invariant under the action of the symmetry group $G$. For example, in the case of a bipartite Bell scenario, see \cref{def: Symmetric correlation}, we require that all moments remain invariant under the exchange $A\leftrightarrow B$, which means that all moments of the kind $\expA{M^A_{a_1|x_1}\cdots M^A_{a_k|x_k}M^B_{b_1|y_1}\cdots M^B_{b_\ell|y_\ell}}$ can be taken to be the same as $\expA{M^A_{b_1|y_1}\cdots M^A_{b_\ell|y_\ell}M^B_{a_1|x_1}\cdots M^B_{a_k|x_k}}$. To this end, one may first take the moment matrix $\Gamma$ at any given level of these hierarchies and perform the twirling,
        \begin{equation}\label{Eq:Twirling}
            \bar{\Gamma} = \frac{1}{|G|}\sum_{\sigma\in G}U_\sigma\Gamma U_\sigma^\dag,
        \end{equation}
        where $U_\sigma$ is the unitary operator that effects the appropriate row-permutation of $\Gamma$ corresponding to the action of $\sigma$. Clearly, $\bar{\Gamma}$ consists of fewer independent moments compared with $\Gamma$. A symmetric upper bound is then obtained by maximizing the Bell value subject to $\bar{\Gamma}\succeq 0$ instead of $\Gamma\succeq 0$. For further information, see the detailed discussion given in~\cite{rosset2018symdpoly,Ioannou:2021qmh}.
        
        \subsubsection{With dimensional constraints}\label{App:SQSDBound}

        When a local dimension bound $d\le D$ is present, a powerful analog of the NPA hierarchy was proposed in~\cite{DNPA15,NFA15}. In particular, the software package $\texttt{QDimSum}$ discussed in \cite{TRR19} has been developed to implement precisely this and other related finite-dimensional optimizations. To this end, notice that whenever a symmetric Bell inequality is provided to $\texttt{QDimSum}$ as an objective function, \cref{Eq:Twirling} is implemented with respect to the symmetric group $G$ that leaves the Bell inequality invariant. Hence, whenever we use $\texttt{QDimSum}$ to maximize the value of a PPI Bell inequality, we naturally obtain an upper bound on~\cref{Eq:Opt:SymmCorr}, i.e., the maximal value allowed by symmetric correlations arising from dimension-bounded (not necessarily symmetric) QSs.
        
        To obtain a (tighter) upper bound applicable to SQS, i.e.,~\cref{Eq:Opt:SQS}, we modify the sampling procedure described in~\cite{DNPA15,NFA15} for generating random moment and localizing matrices. Specifically, instead of sampling normalized random pure states from $\mathcal{H}_A\otimes\mathcal{H}_B= \mathbb{C}^{D}\otimes\mathbb{C}^{D}$, we randomly sample symmetric (antisymmetric) pure states from $\mathcal{H}_A\otimes\mathcal{H}_B$ by considering random, but normalized linear combination of vectors from the symmetric (antisymmetric) subspace of $\mathcal{H}_A\otimes\mathcal{H}_B$. Moreover, instead of sampling random POVMs for Alice and Bob independently, we randomly sample Alice's POVM and make Bob's POVMs identical to those of Alice.

    \subsection{Lower bounding the quantum violation of a Bell inequality}
        \subsubsection{Over symmetric quantum strategies}
            \label{meth:fminunc}

        	To obtain a lower bound for the maximization problem of~\cref{Eq:Opt:SQS}, we may:
        	\begin{enumerate}
            	\item Without loss of generality, take $\rho_\text{AB}=\proj{\psi}$ to be a pure state and choose $\ket{\psi}$ to be an arbitrary, normalized linear combination of basis states in the (anti)symmetric subspace.
            	\item Set each $M_{a|x}$ to be a projector 
                	\begin{equation}
                		M_{a|x}=\Pi_{a|x}=\Pi_{a|x}^2=\sum_{j=1}^{r_{a|x}} \proj{\phi_{a,x,j}}
                	\end{equation}
                	such that the sum of their rank $\sum_a r_{a|x} = D$ for all $x$, where $\{\ket{\phi_{a,x,j}}\}_{a,j}$ may be taken as an orthonormal set of column vectors forming a $D\times D$ unitary matrix $U_x$.
            	\item Optimize over the the parameters defining $\ket{\psi}$ and the unitary operators $\{U_x\}_x$.
            \end{enumerate}
            Notice that for step (1), a general symmetric and antisymmetric state can be expressed, respectively, as a linear combination of $\frac{D(D+1)}{2}$ symmetric and $\frac{D(D-1)}{2}$ antisymmetric {\em basis} states. Moreover, for step (2), the $\texttt{QLib}$ package \cite{Machnes2007qlib} provides a convenient parametrization of any element of SU($D$) via $D^2-1$ real parameters. All these parameters can then be optimized over, e.g., using the $\mathsf{fminunc}$ function in MATLAB to obtain a local maximum of \cref{Eq:Opt:SQS}. Since these are high-dimensional optimizations over $O((m+1)D^2)$ real parameters, with $m$ being the number of measurement settings, it is usually necessary to perform multiple optimizations to get a good lower bound.

        \subsubsection{Over symmetric quantum correlations}\label{meth:seesaw}
        
        For the maximization problem of~\cref{Eq:Opt:SymmCorr}, which provides an upper bound to that of~\cref{Eq:Opt:SQS}, we may adapt the (see-saw) algorithm described in~\cite{Liang:PRA:2007,Pal09QB,Liang09} to iteratively optimize over
        \begin{enumerate}
        \item Alice's POVM $\{M^A_{a|x}\}_{a,x}$ for given state $\rho_\text{AB}$ and Bob's POVM $\{M^B_{b|y}\}_{b,y}$;
        \item\label{OptOverB} Bob's POVM $\{M^B_{b|y}\}_{b,y}$ for given state $\rho_\text{AB}$ and Alice's POVM $\{M^A_{a|x}\}_{a,x}$;
        \item $\rho_\text{AB}$ for given Alice's POVM $\{M^A_{a|x}\}_{a,x}$ and Bob's POVM $\{M^B_{b|y}\}_{b,y}$.
        \end{enumerate}
        Note that each of these optimization problems is an SDP and remains so even if we include the symmetry requirement of~\cref{Eq:Psym}, cf. the first constraint of~\cref{Eq:Opt:SymmCorr}. Moreover, if the optimized strategy turns out to be symmetric, the optimum value would also be a legitimate lower bound of~\cref{Eq:Opt:SQS}. 
        Also, instead of Step~\ref{OptOverB}, one may manually set Bob's POVM to be the same as that of Alice before running step (3) above [while imposing~\cref{Eq:Psym}] to ensure that the resulting strategy is an SQS. However, with this modification, the iterative algorithm is {\em no longer} guaranteed to converge.

\section{Miscellaneous Details}\label{App:MISC}
    \subsection{CGLMP Inequalities and their quantum violation}
        \subsubsection{Unitarity of the operators specified in \cref{Eq:TW}}
        \label{App:Unitarity}

        Here, we provide further details to illustrate the unitarity of the operators given in~\cref{Eq:TW}, and hence $U=TW$.
        In the case of $T$, since it is a diagonal matrix having only eigenvalues $1$ and $-1$, it is unitary.

        To see that $W$ is unitary, it suffices to show that its rows are orthonormal, which from \cref{Eq:TW}, amounts to demanding
        \begin{equation}\label{Eq:Orthonormal}
        	\frac{1}{d^2}\sum _{k=0}^{d-1} \csc \left[\tfrac{ \left(i-k-\frac{1}{2}\right)\pi}{d}\right] \csc \left[\tfrac{ \left(\ell-k-\frac{1}{2}\right)\pi}{d}\right]=\delta_{i,\ell}.
        \end{equation}
        We now give a proof that \cref{Eq:Orthonormal} holds.
        \begin{proof}
        We start by noting the trigonometric identity
        \begin{equation}\label{Eq:Main}
        	\sum_{k=0}^{d-1} \csc\left[x+\frac{k\pi}{d}\right]\csc\left[y+\frac{k\pi}{d}\right] = d\frac{\cot(dx)-\cot(dy)}{\sin(y-x)},
        \end{equation}
        which holds for any real $x,y$ and any integer $d\ge 2$. \cref{Eq:Main} can be easily shown from the more familiar identities:
        \begin{equation}
            \begin{gathered}
            	\csc u \csc v = \frac{\cot u - \cot v}{\sin(v-u)},\quad
            	\sum_{k=0}^{d-1}\cot\left[x+\frac{k\pi}{d}\right]=d\cot(dx),
            \end{gathered}
        \end{equation}
        by setting $u=x+\frac{k\pi}{d}$, $v=y+\frac{k\pi}{d}$, and summing both sides of the first identity over $k$.

        Next, let $x=\frac{\pi}{d}(i+\frac{1}{2}-d)$, $y=\frac{\pi}{d}(\ell+\frac{1}{2}-d)$
        and substitute these into \cref{Eq:Main}. Then, the LHS of \cref{Eq:Main} becomes
        $\sum_{k=0}^{d-1} \csc\left[(i+\frac{1}{2}-d+k)\frac{\pi}{d}\right]\csc\left[(\ell+\frac{1}{2}-d+k)\frac{\pi}{d}\right]$. 
        The value of this expression remains unchanged if we replace the summation index $k$ by $d-1-k$, thus giving, modulo a factor of $d^2$, exactly the LHS of \cref{Eq:Orthonormal}. Incorporating this factor, we may then rewrite the LHS of \cref{Eq:Orthonormal}, via \cref{Eq:Main}, as
        \begin{equation}\label{Eq:Second}
        	\frac{1}{d}\frac{\cot[(i+\frac{1}{2}-d)\pi]-\cot[(\ell+\frac{1}{2}-d)\pi]}{\sin[(\ell-i)\frac{\pi}{d}]}.
        \end{equation}
        
        Consider now the case where $i\neq\ell$, then the denominator of \cref{Eq:Second} never vanishes for $i,\ell\in[d]$. However, since $(i+\frac{1}{2}-d)\pi$ is a half-integer multiple of $\pi$ [likewise for $(\ell+\frac{1}{2}-d)\pi$], their cotangent vanishes, and hence the numerator also vanishes, i.e., \cref{Eq:Second}, and thus the LHS of \cref{Eq:Orthonormal} vanishes whenever $i\neq\ell$. To evaluate \cref{Eq:Second} for the case of $i=\ell$, we apply the l'H\^{o}pital rule to obtain
        \begin{equation}\label{Eq:Third}
        	\csc^2[(i+\frac{1}{2}-d)\pi] =1\quad\forall\,i,\ell\in[d],
        \end{equation}
        since $(i+\frac{1}{2}-d)\pi$ is a half-integer multiple of $\pi$. In other words, the LHS of \cref{Eq:Orthonormal} indeed becomes unity if $i=\ell$. Hence, \cref{Eq:Orthonormal} holds as claimed.
\end{proof}

        \subsubsection{Optimal quantum violation and the minimal dimension $\Dm$}
            \label{App:CGLMP}

            In~\cref{tab:CGLMP-NPA}, we summarize our results concerning the maximal quantum violation of the CGLMP inequality~\cite{CGLMP} $I_d$ for $d=2$ to $19$, obtained via the SQS described in~\cref{Eq:CGLMP-Opt}. Notice that these results generalize those presented in~\cite{ADG+02,NPA2008} for $d=2$ to $8$.
            \begin{table}[H]
                \centering
                \begin{tabular}{|c|c|c|c|c|c|c|c|}
                \hline
                    $d$ & Quantum value & Difference&$d$ & Quantum value & Difference\\ \hline
                     2    & 2.82842718 & $\le$ 5.9e-8& 11     &  3.15549968 & $\le$ 1.9e-7\\
                     3    & 2.91485425 & $\le$ 4.1e-8&12      &  3.16979224 & $\le$ 2.6e-7\\
                     4    & 2.97269840 & $\le$ 1.5e-7&13      &  3.18274300 & $\le$ 2.7e-7\\
                     5    & 3.01571048 & $\le$ 7.7e-9&14      &  3.19456537 & $\le$ 3e-8\\
                     6    & 3.04970041 & $\le$ 1.8e-10&15     &  3.20542659 & $\le$ 1.3e-7\\
                     7    & 3.07764831 & $\le$ 2.4e-9&16      &  3.21546005 & $\le$ 1.34e-6\\
                     8    & 3.10128058 & $\le$ 1.1e-10&17     &  3.22477378 & $\le$ 2.5e-7\\
                     9    & 3.12168442 & $\le$ 1e-8&18        &  3.23345644 & $\le$ 4e-8\\
                    10    & 3.13958741 & $\le$ 1.6e-7&19      &  3.24158164 & $\le$ 9e-8\\
                     \hline
                \end{tabular}
                \caption{Comparison of the quantum violation of the CGLMP inequality $I_d$ based on the SQS of~\cref{Eq:CGLMP-Opt} and the quantum upper bound obtained from NPA Level $1+AB$. Shown in the table is the value of the parameter $d$ from 2 to 19, the quantum value of $I_d$, and its difference from the NPA upper bound, which falls within the numerical precision of the solver.}
                \label{tab:CGLMP-NPA}
            \end{table}

            Next, we show in \cref{tab:CGLMP-Neg} our results for bounding the minimal dimension $\Dm$ required to get the maximal violation (or the maximal-violating correlation) of the CGLMP inequality for $d=2$ to $7$. 

            \begin{table}[!h]
                \centering
                \begin{tabular}{|c|c|c|c|c|c|c|}
                \hline
                    $d$ & Neg ($\ket{\psi_d}$) & Neg($I_d^\star$) & Neg($\vecP^\star$) & Level & Size & $\Dm$  \\ 
                    \hline
                    2 & 0.5    & 0.5000 & 0.5000 & 2  & (25; 31) & 2 \\
                    3 & 0.9836 & 0.9835 & 0.9836 & 2  & (169; 1,003) & 3 \\
                    4 & 1.4561 & 1.4559 & 1.4561 & 2  & (625; 14,797)  & 4 \\
                    5 & 1.9203 & 1.8905 & 1.8909 & 2  & (1,681; 116,702) & 5 \\ 
                    \hline
                    6 & 2.3778 & 1.7578 & 1.7603 & 1+ & (2,025; 118,155) & $\ge$ 5 \\
                    7 & 2.8298 & 1.5896 & 1.5919 & 1+ & (2,809; 97,423) & $\ge$ 5 \\
                    \hline
                \end{tabular}
                \caption{\label{tab:CGLMP-Neg} Negativity and dimension lower bound from the maximal violation of the CGLMP inequality. From left to right, we list the parameter $d$ specifying the CGLMP inequality $I_d$, negativity~\cite{VW02} of the optimal quantum state $\ket{\psi_d}$ obtained from \cref{Eq:CGLMP-Opt}, negativity lower bound based on observing the maximal violation $I_d^\star$ of the CGLMP inequality, negativity lower bound based on observing the maximal $I_d$-violating correlation, the level of the Moroder hierarchy~\cite{Moroder13} involved in the computation of Neg($I_d^\star$) and Neg($\vecP^\star$), the size of the SDP (size of the moment matrix, number of independent moments after symmetrization), and the corresponding lower bound on $\Dm$ for $I_d$ (obtained by comparing against the maximum value of $\frac{d-1}{2}$ attainable for a two-qu\emph{d}it system).}
            \end{table}

            \begin{table*}
            \centering
            \begin{tabular}{|c|c|c|c|c|c|c|c|c|}
                \hline
                \textbf{~Inequality~} & $\Dm$ & \textbf{~Quantum Bound~} & \textbf{~Inequality~} & $\Dm$ & \textbf{~Quantum Bound~} & \textbf{~Inequality~} & $\Dm$ & \textbf{~Quantum Bound~} \\
                \hline \hline
                $A_{5}$  &2& 0.4353  
                & $I_{4422}^{20}$  &4& 0.4677 
                & $J_{4422}^{61}$  &2& 0.8175
                \\
                
                $A_{6}$  &4& 0.3004 
                & $J_{4422}^{12}$  &4& 0.7262 
                & $J_{4422}^{85}$  &2& 0.9763
                \\
                
                $A_{7}\,\bigl(I_{4422}^1\bigr)$ &3& 0.2879 
                & $J_{4422}^{17}$  &2&0.6380 
                & $J_{4422}^{90}$  &2&0.8398
                \\
                
                $AII_{1}$  &2& 0.6056 
                & $J_{4422}^{18}$  &2& 1.0130  
                & $J_{4422}^{91}$  &2& 1.2993
                \\
                
                $AS_{1}$  &2& 0.5413 
                & $J_{4422}^{19}$  &2& 0.6742 
                & $J_{4422}^{92}$  &2& 1.0648
                \\
                
                $AS_{2}$  &2& 0.8785 
                & $J_{4422}^{22}$  &2& 0.8156 
                & $J_{4422}^{97}$  &2& 1.1584
                \\
                
                $I_{4422}^9$  &2& 0.4617 
                & $J_{4422}^{26}$  &2& 0.6402 
                & $J_{4422}^{102}$ &2&0.6651
                \\
                
                $I_{4422}^{10}$ &2& 0.6139 
                & $J_{4422}^{27}$  &2& 0.8972 
                & $J_{4422}^{105}$ &2& 1.0742
                \\
                
                $I_{4422}^{11}$  &2& 0.6383 
                & $J_{4422}^{28}$  &2& 0.7500 
                & $J_{4422}^{108}$ &2& 0.9677
                \\
                
                $I_{4422}^{12}$  &2& 0.6188 
                & $J_{4422}^{29}$  &2& 1.0246 
                & $J_{4422}^{109}$ &2& 1.7261
                \\
                
                $I_{4422}^{14}$  &2& 0.4794 
                & $J_{4422}^{32}$  &2& 0.5901 
                & $J_{4422}^{110}$ &2& 0.9457
                \\
                
                $I_{4422}^{16}$  &2& 0.4142 
                & $J_{4422}^{41}$  &2& 0.7596 
                & $J_{4422}^{125}$ &2& 1.0000
                \\
                
                $I_{4422}^{17}$  &2& 0.6714 
                & $J_{4422}^{58}$  &2& 0.8814 
                & $S_{242}^{51}$  &2& 1.0135 
                \\
                
                $I_{4422}^{18}$  &3& 0.6430 
                & $J_{4422}^{60}$  &2& 0.5923 
                & $S_{242}^{52}$  &2& 0.8704\\
                
                \hline
            \end{tabular}
            \caption{\label{tbl:symmetry class} List of {\em all} symmetric {\em facet-defining} four-setting, two-outcome Bell inequalities whose maximal quantum violation (denoted as ``Quantum Bound'') is achievable using an SQS of minimal dimension $\Dm$. In other words, to attain the maximal quantum violation of these inequalities, there is no trade-off between symmetry and dimension.
            }
            \end{table*}
        
    \subsection{SQS maximally violating an $I_{3322}$-like inequality}
        \label{App:I3322LikeSQS}
        Consider the Bell inequality from~\citep[Eq. (27)]{Goh2018}
           \begin{align}\label{Eq:CorI3322}
                   I_{3322c} = &\expA{A_0B_1}+\expA{A_0B_2}+\expA{A_1B_0}+\expA{A_2B_0}\\
                    +& \expA{A_0B_0}+\expA{A_1B_1}-\expA{A_1B_2} - \expA{A_2B_1}\overset{\L}{\le} 4,\nonumber
           \end{align}
        which can be seen, after relabeling, as keeping only the correlation part of the $I_{3322}$ inequality. The optimal QS presented in~\cite{Goh2018} is {\em not} PPI. However, we can turn it into an SQS by first applying $\sigma_z$, followed by $R_{\hat{x}}(\alpha)$ on Bob's Hilbert space.
        After this rotation, the strategy becomes
        \begin{equation}\label{Eq:I3322cStrategy}
           \begin{aligned}
                \ket{\psi} &=i\sin\frac{\alpha}{2}\ket{\Phi^+}-\cos\frac{\alpha}{2}\ket{\Psi^+},\\
               \hat{A}_0 &= \hat{B}_0 = \frac{1}{2}\big( 2 \cos{\frac{\pi}{6}}\, \sigma_x + \cos{\alpha}\, \sigma_y + \sin{\alpha}\, \sigma_z \big),\\
               \hat{A}_1 &= \hat{B}_1 = \frac{1}{2}\big( 2 \cos{\frac{\pi}{6}}\, \sigma_x - \cos{\alpha}\, \sigma_y - \sin{\alpha}\, \sigma_z \big),\\
               \hat{A}_2 &= \hat{B}_2 = \sigma_y,
           \end{aligned}
        \end{equation}
        which is easily verified to give a quantum value of $5$ for all $\alpha \in [0,2\pi]$. Notice that, by construction, despite its dependence on $\alpha$, the two-qubit state of \cref{Eq:I3322cStrategy} is always maximally entangled.
    \subsection{Optimal quantum strategy for \texorpdfstring{$I_{S}(\alpha)$}{IS}}
        \label{App:Is}
    
    An {\em asymmetric} two-qubit QS that gives the maximal violation of the family of Bell inequalities of~\cref{eq:I_S} consists of Alice and Bob sharing the two-qubit state
    \begin{subequations}\label{eq:OptimalQS:Is}
        \begin{equation}\label{eq:psip}
            \ket{\psi} = \sqrt{p}\ket{00}+\sqrt{1-p}\ket{11},   
        \end{equation}
        and measuring the dichotomic observables:
        \begin{equation}\label{eq:obs-Is}
            \begin{gathered}
                \hat{A}_0 = s\,\sigma_x + \sqrt{1-s^2}\sigma_z,\quad
                \hat{A}_1 = \sigma_x,\quad \hat{A}_2 = \sigma_z,\\
                \hat{B}_{y} = t\,\sigma_x +(-1)^y \sqrt{1-t^2}\sigma_z\,\,\text{for } y=0,1, \text{ and }\,\,
                \hat{B}_2 = \mathbb{I}_2.
            \end{gathered}
        \end{equation}
        For $\alpha = 1.5$, the actual values of these parameters are
        \begin{equation}\label{eq:para_ab:1.5}
            p =\tfrac{16}{21},\quad s = \tfrac{8}{17},\quad
            t = -\tfrac{\sqrt{5}}{3},
        \end{equation}
        thereby giving $I_S(1.5) = 8\frac{1}{3}$, while those for $\alpha = 2$ are
        \begin{equation}\label{eq:para_ab:2}
            p =\tfrac{2(1+\sqrt{13})}{3\sqrt{13}},\quad 	s = \tfrac{10\sqrt{13}-18}{61},\quad
            t = -\sqrt{\tfrac{11-\sqrt{13}}{18}}.
        \end{equation}
    \end{subequations}

    \subsection{Symmetric bipartite facet-defining Bell inequalities with four binary-outcome measurements}
        \label{App:4422}

    Among the complete list of 175 facet-defining Bell inequalities of the Bell scenario $(2,4,2)$, 55 of them can be cast in a symmetric form.
    Apart from the positivity facet, the CHSH Bell inequality of~\cref{Eq:CHSH}, the (symmetric) $I_{3322}$ Bell inequality~\cite{Sliwa03,Brunner2008}, the 9 inequalities listed in~\cref{tbl: asymmetric strategy}, and the $J^{42}_{4422}$ inequality,
    the remaining 42 symmetric four-setting facet-defining Bell inequalities, listed in~\cref{tbl:symmetry class}, show no trade-off between symmetry and dimension.
    
    For completeness, we provide here the explicit form of the $J^{42}_{4422}$ Bell inequality:
    \begin{align}\label{Ineq:J42}
        J^{42}_{4422} = 
        &\sum_{i ={\A,\B}}- \left[P_i(0|0) +P_i(0|3) +4P_i(0|1)\right]\nonumber\\
        &+\sum_{x = 0}^2\PAB(0,0|x,\text{mod}(2(x+1),3))\\
        &+  \sum_{x\neq 1} 2\left[\PAB(0,0|1,x) + \PAB(0,0|x,1)\right]\nonumber\\
        &+  \sum_{x= 0,2}\left[ \PAB(0,0|3,x) + \PAB(0,0|x,3)\right]\nonumber\\
        &-2\sum_{x=2}^3 \PAB(0,0|x,x)-3\PAB(0,0|0,0)
         \, \overset{\L}{\le} 0. \nonumber
    \end{align}

	For the various SQS (and symmetric-correlation) bounds of those inequalities from~\cref{tbl:symmetry class} with $\Dm>2$, see~\cref{tbl:HighD-symmetryclass}.
	
        \begin{table}[h!]
        \centering
            \begin{tabular}{|c|c|c|c|c|c|c|c|}
                \hline
                \textbf{~Inequality~} &$\Dm$&$d = 2$ &$d = 3$& \textbf{~Quantum Bound~} \\
                \hline \hline
                $A_{6}$  &4&0.2990&0.2990& 0.3004 \\
                $A_{7}\,\bigl(I_{4422}^1\bigr)$ &3&0.2500&0.2879 &0.2879 \\
                $I_{4422}^{18}$  &3&0.4676&0.6430 & 0.6430 \\
                $I_{4422}^{20}$  &4&0.3056&0.3662 (0.4362) & 0.4677 \\
                $J_{4422}^{12}$ &4&0.6719&0.6830 & 0.7262 \\
                \hline
            \end{tabular}
        \caption{\label{tbl:HighD-symmetryclass} Subset of inequalities from~\cref{tbl:symmetry class} whose minimal dimension $\Dm>2$. The third and fourth columns show, respectively, the SQS bound for $d=2$ and $d=3$, which also coincides with the symmetric correlation bound derived from~\cite{TRR19}, except for $I_{4422}^{20}$, where the latter bound is included in parentheses. 
        }
        \end{table}
        
    \subsection{Quantum violation of $I_9$}
    \label{App:I9}
    
    In this Appendix, we shall present a symmetric Bell inequality $I_9$ defined for the $(2,9,2)$ Bell scenario such that there is again a trade-off between symmetry and dimension when one maximizes its Bell violation. In particular, even though the maximal quantum violation of $I_9$ can be achieved using a symmetric correlation arising from a symmetric two-qubit state and a mirror-symmetric measurement strategy, the maximal quantum value attainable using a two-qubit SQS is always suboptimal. 
    
    Let us first recall the self-testing results shown in~\cite{Bowles18}.
    There, the authors considered the following Bell inequality (introduced in~\cite{APVW16}):
    \begin{equation}
        \begin{aligned}
            \text{CHSH}_3 
            &= \text{CHSH}(1,2;4,5) + \text{CHSH}(1,3;6,7)\\
            &+ \text{CHSH}(2,3;8,9),
        \end{aligned}
    \end{equation}
    where, see~\cref{Eq:CHSH2},  
    \begin{equation}
        \text{CHSH}(x_1,x_2;y_1,y_2) \coloneqq \braket{A_{x_1} (B_{y_1}+B_{y_2})} + \braket{A_{x_2}  (B_{y_1}-B_{y_2})}.   
    \end{equation}
    
    The maximal quantum violation of $\text{CHSH}_3$ is easily seen to be $3\times2\sqrt{2}=6\sqrt{2}$, which one can attain by locally measuring the Bell state $\ket{\Phi^+}$ with the following dichotomic observables:
    \begin{equation}\label{Eq:A-B}
        \begin{gathered}
            \hat{A}_1 = \sigma_z,\quad \hat{A}_2 = \sigma_x,\quad \hat{A}_3 = -\sigma_y,\\
            \hat{B}_4 = \frac{\sigma_z + \sigma_x}{\sqrt{2}}, 
            \quad \hat{B}_5 = \frac{\sigma_z - \sigma_x}{\sqrt{2}},\\
            \hat{B}_6 = \frac{\sigma_z + \sigma_y}{\sqrt{2}}, 
            \quad \hat{B}_7 = \frac{\sigma_z - \sigma_y}{\sqrt{2}},\\
            \hat{B}_8 = \frac{\sigma_x + \sigma_y}{\sqrt{2}}, 
            \quad \hat{B}_9 = \frac{\sigma_x - \sigma_y}{\sqrt{2}}.
        \end{gathered}
    \end{equation}
    In fact, it was shown in~\cite{Bowles18} that the maximal quantum value of $6\sqrt{2}$ self-tests the two-qubit maximally entangled state $\ket{\Phi^+}$ and, up to complex conjugation, the Pauli observables $\hat{A}_x$ for $x=1,2,$ and $3$.
    Now consider the following Bell inequality in the scenario of $(2,9,2)$:
    \begin{equation}
        I_9 = \text{CHSH}_3 + \text{CHSH}_3' + \sum_{k=1}^3 \braket{A_k  B_k},
    \end{equation}
    where $\text{CHSH}_3'$ is obtained from $\text{CHSH}_3$ by swapping the roles of Alice and Bob, i.e., $\text{CHSH}_3' = \text{CHSH}(4,5;1,2) + \text{CHSH}(6,7;1,3) + \text{CHSH}(8,9;2,3)$.
    The inequality $I_9$ is symmetric according to~\cref{def:Symmetric Bell inequality}.

    \begin{proposition}
        The maximal quantum value of $I_9$ cannot be achieved by any symmetric qubit quantum strategy.
    \end{proposition}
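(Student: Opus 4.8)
The plan is to first determine $\max_{\vecP\in\Q}I_9$ exactly, and then show that a symmetric qubit strategy (SQS) necessarily falls short of it. Writing $I_9=\text{CHSH}_3+\text{CHSH}_3'+\sum_{k=1}^3\expA{A_kB_k}$, Tsirelson's bound~\cite{Tsirelson1980} gives $\text{CHSH}_3,\text{CHSH}_3'\le 6\sqrt2$, and each $\expA{A_kB_k}\le1$, so the elementary termwise bound
\begin{equation}
  I_9\le 6\sqrt2+6\sqrt2+3=12\sqrt2+3
\end{equation}
holds. I would then show it is tight via a mirror-symmetric two-qubit strategy: measure $\ket{\Phi^+}$ with directions obeying $\vec b_k=\mathcal M\vec a_k$, $\mathcal M=\mathrm{diag}(1,-1,1)$, where $\{\vec a_1,\vec a_2,\vec a_3\}$ is the Pauli triad of~\cref{Eq:A-B} and $\{\vec a_4,\dots,\vec a_9\}$ the corresponding bisectors. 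On $\ket{\Phi^+}$ one has the convenient identity $\expA{A_jB_k}=\vec a_j\cdot\vec a_k$, which at once yields $\sum_{k=1}^3\expA{A_kB_k}=3$ and (since the dot product is symmetric in $j,k$) $\text{CHSH}_3=\text{CHSH}_3'$, equal to $6\sqrt2$ by a direct check. Hence $\max_{\vecP\in\Q}I_9=12\sqrt2+3$; by~\cref{Thm:SymCor} this maximizer produces a symmetric correlation even though it is plainly not an SQS.

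For the converse, the key observation is that attaining $I_9=12\sqrt2+3$ forces each summand to its own maximum, in particular $\text{CHSH}_3=6\sqrt2$. As $\text{CHSH}_3$ is a sum of three CHSH expressions, each at most $2\sqrt2$, every one must equal $2\sqrt2$. Invoking the rigidity of the qubit CHSH game (equivalently, the self-test of $\text{CHSH}_3$ in~\cite{Bowles18}), a value $2\sqrt2$ forces the shared state to be pure and maximally entangled and the two participating observables $\vec a_{x_1}\cdot\vec\sigma,\ \vec a_{x_2}\cdot\vec\sigma$ to anticommute, i.e.\ $\vec a_{x_1}\cdot\vec a_{x_2}=0$. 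Applying this to the pairs $(1,2),(1,3),(2,3)$ appearing in $\text{CHSH}_3$, I conclude that $\{\vec a_1,\vec a_2,\vec a_3\}$ is an orthonormal basis of $\mathbb R^3$ while the state is maximally entangled.

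Now I would use the defining SQS property $\hat B_k=\hat A_k$, hence $\vec b_k=\vec a_k$. With $T_{ij}=\tr(\rho_\AB\,\sigma_i\otimes\sigma_j)$ the correlation matrix of the maximally entangled state, $\expA{A_kB_k}=\vec a_k^\top T\vec a_k$, and because $\{\vec a_k\}_{k=1}^3$ is a complete orthonormal basis,
\begin{equation}
  \sum_{k=1}^3\expA{A_kB_k}=\sum_{k=1}^3\vec a_k^\top T\vec a_k=\tr(T).
\end{equation}
The structural fact I would establish is that the correlation matrix of any maximally entangled two-qubit state is orthogonal with $\det T=-1$: writing $\ket{\psi}=(\id\otimes U)\ket{\Phi^+}$ gives $T=\mathcal M R_U$ with $R_U\in SO(3)$. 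Such an improper rotation has eigenvalues $\{-1,e^{\pm i\theta}\}$, so $\tr(T)=-1+2\cos\theta\le 1$. Therefore any SQS obeys $\sum_{k=1}^3\expA{A_kB_k}\le 1<3$, whence $I_9\le 12\sqrt2+1$, strictly below the quantum maximum $12\sqrt2+3$; this contradiction proves the claim.

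I expect the main obstacle to be the rigidity step: one must argue that maximal $\text{CHSH}_3$ simultaneously forces maximal entanglement \emph{and} an orthonormal observable triple, rather than constraining each CHSH subgame only in isolation. This is in fact consistent—all three subgames demand the same property ``the state is maximally entangled,'' and the three anticommutation conditions are jointly satisfiable and precisely force an orthonormal triple—but making it rigorous for qubits is where care is needed; I would either apply elementary qubit-CHSH anticommutation rigidity to all three pairs directly, or lean on the established self-test of~\cite{Bowles18} to discharge it cleanly. The remaining ingredients, the identity $\sum_k\vec a_k^\top T\vec a_k=\tr(T)$ and the bound $\tr(T)\le1$ for improper rotations, are routine linear algebra.
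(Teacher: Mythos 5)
Your proposal is correct, and its skeleton matches the paper's: establish $\max_{\vecP\in\Q}I_9=12\sqrt{2}+3$ via the mirror-symmetric (complex-conjugated) strategy on $\ket{\Phi^+}$, note that attaining this value forces every summand of $I_9$ to its own maximum, and use rigidity of $\text{CHSH}_3$ to conclude that the shared two-qubit state is maximally entangled and that $\vec a_1,\vec a_2,\vec a_3$ form an orthonormal triad. Where you genuinely diverge is the final contradiction. The paper turns each saturated diagonal correlator $\expA{A_iB_i}=1$ into the operator identity $(\hat A_i)^{\mathrm{T}}=U^\dagger \hat A_i U$ via a Cauchy--Schwarz saturation argument, reads this as the reflection $\mathcal M=\text{diag}(1,-1,1)$ agreeing with a rotation on three non-coplanar Bloch vectors, and invokes its Proposition~\ref{Prop:Nonplanar} (a determinant mismatch). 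You instead compute $\sum_{k=1}^3\expA{A_kB_k}=\tr(T)$ for the correlation matrix $T=\mathcal M R_U$ of the maximally entangled state and bound the trace of an improper orthogonal matrix by $1$. Both arguments ultimately exploit $\det\mathcal M=-1$, but yours is self-contained (no appeal to Proposition~\ref{Prop:Nonplanar}) and more quantitative: it shows the three diagonal correlators of \emph{any} symmetric qubit strategy on a maximally entangled state sum to at most $1$, not merely that they cannot all equal $1$; the paper's version, conversely, reuses machinery it already developed for mirror-symmetric strategies and exhibits the structural obstruction directly. Two points to tighten in your write-up: (i) in the rigidity step, explicitly rule out degenerate qubit observables (an observable $\pm\id_2$ caps the corresponding CHSH expression at $2$), so that each $\hat A_i$ may be written $\vec a_i\cdot\vec\sigma$ with $|\vec a_i|=1$; and (ii) the rigidity claim itself---that a qubit realization of CHSH value $2\sqrt{2}$ forces anticommuting observables and the unique maximally entangled top eigenvector of the Bell operator---should either be proved via the standard identity $\mathcal B^2=4\,\id\otimes\id-[\hat A_1,\hat A_2]\otimes[\hat B_1,\hat B_2]$ or delegated to the self-testing result of~\cite{Bowles18}, exactly as the paper does.
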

    \begin{proof}
        The maximal quantum value of $I_9$, namely $12\sqrt{2} + 3$, can be attained by locally measuring $\ket{\Phi^+}$ with the observables $\{\hat{A}_x\}_{x=1}^3$, $\{\hat{B}_y\}_{y=4}^9$ specified in \cref{Eq:A-B} and their complex conjugation, i.e., $\hat{A}_k:=(\hat{B}_k)^*$ for $k=4$ to 9 and $\hat{B}_k:=(\hat{A}_k)^*$ for $k=1$ to 3.
        In particular, any realization achieving the maximal value of $I_9$ also achieves the maximal value of $\mathrm{CHSH}_3$.
        According to the self-testing result of~\cite{Bowles18}, the underlying state $\ket{\psi}$ is therefore locally isometric to a two-qubit maximally entangled state, $\ket{\Phi^+}$.
        In particular, the state $\ket{\psi}$ in any qubit strategy reaching the maximal value of $I_9$ can be written as $\ket{\psi} = \I \ten U \ket{\Phi^+}$ for some $U \in SU(2)$.
        The identity $V \ten \I \ket{\Phi^+} = \I \ten V^T \ket{\Phi^+}$ allows us to consider local unitaries operating on only one side without loss of generality.
        
        Suppose, for contradiction, that there exists some symmetric qubit strategy achieving the maximal value of $I_9$, i.e., $A_i = B_i$, for $i=1,2,3$.
        To reach $12\sqrt{2}+3$, the three diagonal correlators $\braket{A_i  B_i}$ must all be $1$.
        Then for each $i\in\{1,2,3\}$, we have
        \begin{equation}
            1 
            = \braket{\psi   | \hat{A}_i \ten \hat{A}_i | \psi} 
            = \braket{\Phi^+ | \hat{A}_i \ten U^\dagger \hat{A}_i U | \Phi^+} 
            \leq 1,
        \end{equation}
        where the last inequality follows from the Cauchy-Schwarz inequality. 
        Since the inequality is saturated, we must have
        $
        {\ket{\Phi^+} 
        = \hat{A}_i \ten U^\dagger \hat{A}_i U \ket{\Phi^+}}$ $=\id_2 \ten (\hat{A}_i)^\text{\tiny T} U^\dagger \hat{A}_i U \ket{\Phi^+}
        $, where $(.)^\text{\tiny T}$ denotes transposition.
        Applying $\bra{0} \ten \id_2$ and $\bra{1} \ten \id_2$ on both sides, we infer that $(\hat{A}_i)^\text{\tiny T} U^\dagger \hat{A}_i U = \id_2$, which further implies that
        \begin{equation}\label{Eq:A_T}
            (\hat{A}_i)^\text{\tiny T} = U^\dagger \hat{A}_i U\quad\text{for } i=1,2,3,
         \end{equation}
         where we have used the fact a qubit observable $\hat A_i$ [and hence $(\hat{A}_i)^\text{\tiny T}$] squares to the identity operator.
        Equivalently, \cref{Eq:A_T} means that there exists some rotation $\mathcal{R} \in SO(3)$ satisfying
        \begin{equation}\label{Eq:mirror-rot}
            (\mathcal{M} \vec{a}_i ) \cdot \vec{\sigma} = (\mathcal{R} \vec{a}_i) \cdot \vec{\sigma},
        \end{equation}
        where $\hat{A}_i=\vec{a}_i\cdot\vec\sigma$ and $\mathcal{M}=\text{diag}(1,-1,1)$.
        Heuristically, it asserts the existence of some rotation $\mathcal{R}$ that can simultaneously map all $\vec{a}_i$s to each of their own reflection $\vec{a}_i' = \mathcal{M} \vec{a}_i, i = 1,2,3$.
        
        However, at the maximal $\mathrm{CHSH}_3$ value, the observables $A_1,A_2,A_3$ correspond to three mutually orthogonal Bloch vectors~\cite{Bowles18} $\vec{a}_1, \vec{a}_2, \vec{a}_3$, which cannot be coplanar.
        Hence, as explained in Proposition~\ref{Prop:Nonplanar}, there cannot be a rotation $\mathcal{R}$ that satisfies \cref{Eq:mirror-rot}, thus leading to a contradiction.
        Therefore, no symmetric qubit strategies could reach the maximal quantum value of $I_9$.
    \end{proof}

    What then is the largest value of $I_9$ achievable with a qubit SQS? To this end, the best violation that we have found using a qubit SQS is $6\sqrt{3} +9\simeq 19.3923 < 12\sqrt{2} + 3$, where the SQS consists of measuring 
    the Bell state $\ket{\Phi^+}$ with the observables
    \begin{equation}
        \begin{gathered}
            \hat{A}_1 = \hat{A}_9 = \sigma_z,\quad \hat{A}_8 = \sigma_x,\\
            \hat{A}_2 = \hat{A}_6 = \frac{\sqrt{3}\sigma_x +\sigma_z}{2},\\
            \hat{A}_3 = -\hat{A}_5 = \frac{\sqrt{3}\sigma_x -\sigma_z}{2},\\
            \hat{A}_4 = \frac{\sigma_x +\sqrt{3}\sigma_z}{2},\quad
            \hat{A}_7 = \frac{-\sigma_x +\sqrt{3}\sigma_z}{2},
        \end{gathered}
    \end{equation}
    and $\hat{B}_i = \hat{A}_i$ for $i = 1 $ to $9$. Note that all these observables are real, and thus give rise to measurement (Bloch) vectors $\vec{a}_k =\vec{b}_k$ lying on the $x-z$ plane.

    \subsection{Local bound for the Bell inequalities of \cref{Eq:TsirelsonIneqInspired}}
    \label{App:LocalBound}
    
    The local upper bound $g(r_0,r_1)$ for the Bell expression defined in~\cref{Eq:TsirelsonIneqInspired} is, for any given pair $(r_0,r_1)$:
    \begin{equation}\label{Eq:g}
    	\max \left\{ \frac{1}{\sqrt{2}} \pm r_0 \pm (\sqrt{2}-1) r_1, \frac{1}{\sqrt{2}} \pm r_1 \pm (\sqrt{2}-1) r_0 \right\},
    \end{equation}
    where the $\pm$ expression in each term allows {\em all} combinations of signs. The actual bound depends on the octagon slice [spanned by $(0,0)$ and the vertices of \cref{Eq:Vertices}] to which the point $(r_0,r_1)$ belongs. 
    
    For instance, consider the case where $(r_0,r_1)$ belongs to the octagon slice spanned by $(0,0)$, $(\frac{1}{2}-\zeta,0)$, and $(\zeta,\zeta)$, with $\zeta:= \frac{1}{\sqrt{2}}-\frac{1}{2}$.
    Since $r_0\ge r_1\ge 0$, evaluating the maximum from \cref{Eq:g} gives 
    \begin{equation}
    	g(r_0,r_1) = \frac{1}{\sqrt{2}} + r_0 + (\sqrt{2}-1) r_1. 
    \end{equation}
    Similarly, if $(r_0,r_1)$ belongs to the octagon slice spanned by $(0,0)$, $(0,\zeta-\frac{1}{2})$, and $(-\zeta,-\zeta)$, then $r_1\le r_0\le 0$, thus 
    evaluating the maximum from \cref{Eq:g} gives 
    \begin{equation}
    	g(r_0,r_1) = \frac{1}{\sqrt{2}} - r_1 - (\sqrt{2}-1) r_0. 
    \end{equation}
    The local upper bound $g(r_0,r_1)$ of $I_{r_0,r_1}$ for $(r_0,r_1)$ belonging to the other six octagon slices can be easily deduced accordingly.

\section{Multipartite Considerations}
\label{App:Multipartite}    

\subsection{Purified, PPI multipartite quantum strategy for PPI correlation}

Here, we provide the PSQS alluded to in Proposition~\ref{Prop:SCor2SQS}. To this end, let $\Qstr'$ be an $N$-partite QS realizing a PPI $N$-partite correlation $\vecP$:
    \begin{equation}\label{Eq:MPQS}
	\begin{gathered}
    	\Qstr'=\left\{\ket{\psi}_{\A_1\cdots\A_N}, \{\Pi_{a_1|x_1}^{\A_1}\}_{a_1,x_1},\cdots,\{\Pi_{a_N|x_N}^{\A_N}\}_{a_N,x_N}\right\},\\
		(\Pi_{a_i|x_i}^{\A_i})^2=\Pi_{a_i|x_i}^{\A_i}\,\,\forall\,\,a_i,x_i,i.
	\end{gathered}	
    \end{equation} 
    
 Furthermore, let $S_N$ be the permutation group of $N$ objects and $\sigma$ be its elements.
Then, the analog of \cref{Eq:PSQS,eq:pure-symmetrization} for the $N$-partite scenario may be constructed as:
    \begin{equation}\label{Eq:NPSQS}
    	\tilde{\Qstr}:=\{\tilde{\ket{\phi}}_{\A_1\cdots \A_N}, \{\tilde{\Pi}_{a_1|x_1}\}_{a_1,x_1}, \cdots,\{\tilde{\Pi}_{a_N|x_N}\}_{a_N,x_N}\},
	\end{equation}
    where the symmetrized state is 
    \begin{equation}\label{eq:pure-symmetrization-Npartite}
    \begin{aligned}
        \tilde{\ket{\phi}}_{\A_1\cdots \A_N\A_1'\cdots \A_N'}     =
        \frac{1}{\sqrt{N!}}
        \sum_{\sigma\in S_N}
        &U_{\sigma}\ket{\psi}_{\A_1\cdots \A_N}\\
        \otimes &V_{\sigma}\ket{0, \cdots, N-1}_{\A'_{1}\cdots \A'_{N}},
    \end{aligned}
\end{equation}
and the local measurement projectors are
\begin{equation}
    \begin{aligned}
        \tilde{\Pi}_{a|x}
        = \sum_{i=0}^{N-1}\Pi^{A_{i+1}}_{a|x}\otimes\ketbra{i}{i}.
    \end{aligned}
\end{equation}
Here, $U_{\sigma}$ and $V_{\sigma}$ are unitary representations of $\sigma$ that, respectively, permute the $N$ Hilbert spaces of $\A_1, \A_2,\cdots, \A_N$ and $\A'_1, \A'_2,\cdots, \A'_N$. Moreover, the Hilbert spaces for $\A'_1$, $\A'_2$, $\cdots$, $\A'_N$ are all isomorphic to $\mathbb{C}^N$ and spanned by the orthonormal basis $\{\ket{i}\}_{i=0}^{N-1}$. Hence, if $\ket{\psi}$ is a vector of the Hilbert space $\mathbb{C}^{d^N}$, then $\tilde{\ket{\phi}}$ is an element of the Hilbert space $\mathbb{C}^{(dN)^N}$. Note that a similar construction to obtain a (mixed) QS that is translational-invariant has previously been presented in~\cite{Tura_2014}.

\subsubsection{The tripartite case}

For $N = 3$, the permutation group $S_3$ consists of six elements. Thus, the symmetrized state consists of an equal-weight superposition of six permuted terms:
\begin{equation}\label{eq:pure-symmetrization-tripartite}
    \begin{aligned}
        &\tilde{\ket{\phi}}_{\A\B\tC\A'\B'\tC'}
        =\\
        \frac{1}{\sqrt{6}}
        \Big[
        &\ket{\psi}_\text{ABC}\ket{012}_{\A'\B'\tC'}
        +\ket{\psi}_\text{ACB}\ket{021}_{\A'\B'\tC'}\\
        +&\ket{\psi}_\text{BAC}\ket{102}_{\A'\B'\tC'} 
        +\ket{\psi}_\text{BCA}\ket{120}_{\A'\B'\tC'}\\
        +&\ket{\psi}_\text{CAB}\ket{201}_{\A'\B'\tC'}
        +\ket{\psi}_\text{CBA}\ket{210}_{\A'\B'\tC'}
        \Big],
    \end{aligned}
\end{equation}
and the local measurement projectors are
\begin{equation}\label{eq:sym-tripartite-POVM}
    \begin{aligned}
        \tilde{\Pi}_{a|x}
        =
        \Pi^{\A}_{a|x}\otimes\ketbra{0}{0}
        +\Pi^{\B}_{a|x}\otimes\ketbra{1}{1}
        +\Pi^{\tC}_{a|x}\otimes\ketbra{2}{2}.
    \end{aligned}
\end{equation}

We may use \cref{eq:pure-symmetrization-tripartite,eq:sym-tripartite-POVM} to symmetrize, e.g., the three-qutrit asymmetric strategy presented in~\cite{BBGL11} for maximizing the quantum violation of $S_{3,3}$ discussed therein. However, a closer inspection shows that there is again no trade-off for this Bell-type inequality, i.e., its maximal quantum violation can already be attained by considering a PPI three-qutrit strategy.

\subsection{Examples of multipartite Bell inequalities where SQS in the minimal dimension can be maximizing}

For completeness, we list here several other multipartite (facet-defining) Bell inequalities that provably do not exhibit a trade-off between symmetry and dimension for maximizing their Bell violation.

\subsubsection{Inequalities in the $(3,2,2)$ Bell scenario}

The complete set of 46 facet-defining Bell inequalities for this scenario was first presented by Sliwa in~\cite{Sliwa03}. Apart from the positivity facet (inequality 1 in the list), we have found that only inequalities 2, 5, 7, 22, 26, 33, and 39 can be cast in a PPI form (after an appropriate relabeling). However, for all these inequalities cast in their PPI form, we have found a PPI three-qubit QS~\cite{LRXC16} that matches their maximal quantum violation (see also~\cite{Vallins2017}).

\subsubsection{Inequalities in the $(N,2,2)$ Bell scenario}

In~\cite{Liang15}, the 7th inequality presented by Sliwa~\cite{Sliwa03} for the $(3,2,2)$ Bell scenario has been generalized to the Bell scenarios $(N,2,2)$ for an arbitrary integer $N\ge 3$. Moreover, an explicit PPI $N$-qubit strategy matching the quantum maximum has also been presented.


\begin{thebibliography}{99}%
\makeatletter
\providecommand \@ifxundefined [1]{%
 \@ifx{#1\undefined}
}%
\providecommand \@ifnum [1]{%
 \ifnum #1\expandafter \@firstoftwo
 \else \expandafter \@secondoftwo
 \fi
}%
\providecommand \@ifx [1]{%
 \ifx #1\expandafter \@firstoftwo
 \else \expandafter \@secondoftwo
 \fi
}%
\providecommand \natexlab [1]{#1}%
\providecommand \enquote  [1]{``#1''}%
\providecommand \bibnamefont  [1]{#1}%
\providecommand \bibfnamefont [1]{#1}%
\providecommand \citenamefont [1]{#1}%
\providecommand \href@noop [0]{\@secondoftwo}%
\providecommand \href [0]{\begingroup \@sanitize@url \@href}%
\providecommand \@href[1]{\@@startlink{#1}\@@href}%
\providecommand \@@href[1]{\endgroup#1\@@endlink}%
\providecommand \@sanitize@url [0]{\catcode `\\12\catcode `\$12\catcode
  `\&12\catcode `\#12\catcode `\^12\catcode `\_12\catcode `\%12\relax}%
\providecommand \@@startlink[1]{}%
\providecommand \@@endlink[0]{}%
\providecommand \url  [0]{\begingroup\@sanitize@url \@url }%
\providecommand \@url [1]{\endgroup\@href {#1}{\urlprefix }}%
\providecommand \urlprefix  [0]{URL }%
\providecommand \Eprint [0]{\href }%
\providecommand \doibase [0]{https://doi.org/}%
\providecommand \selectlanguage [0]{\@gobble}%
\providecommand \bibinfo  [0]{\@secondoftwo}%
\providecommand \bibfield  [0]{\@secondoftwo}%
\providecommand \translation [1]{[#1]}%
\providecommand \BibitemOpen [0]{}%
\providecommand \bibitemStop [0]{}%
\providecommand \bibitemNoStop [0]{.\EOS\space}%
\providecommand \EOS [0]{\spacefactor3000\relax}%
\providecommand \BibitemShut  [1]{\csname bibitem#1\endcsname}%
\let\auto@bib@innerbib\@empty
\bibitem [{\citenamefont {Bell}(1964)}]{Bell64}%
  \BibitemOpen
  \bibfield  {author} {\bibinfo {author} {\bibfnamefont {J.~S.}\ \bibnamefont
  {Bell}},\ }\href {https://doi.org/10.1103/PhysicsPhysiqueFizika.1.195}
  {\bibfield  {journal} {\bibinfo  {journal} {Physics}\ }\textbf {\bibinfo
  {volume} {1}},\ \bibinfo {pages} {195} (\bibinfo {year} {1964})}\BibitemShut
  {NoStop}%
\bibitem [{\citenamefont {Bell}(2004)}]{Bell04}%
  \BibitemOpen
  \bibfield  {author} {\bibinfo {author} {\bibfnamefont {J.~S.}\ \bibnamefont
  {Bell}},\ }\href@noop {} {\emph {\bibinfo {title} {{S}peakable and
  {U}nspeakable in {Q}uantum {M}echanics: {C}ollected {P}apers on {Q}uantum
  {P}hilosophy}}},\ \bibinfo {edition} {2nd}\ ed.\ (\bibinfo  {publisher}
  {Cambridge University Press},\ \bibinfo {year} {2004})\BibitemShut {NoStop}%
\bibitem [{\citenamefont {Scarani}(2012)}]{Scarani12}%
  \BibitemOpen
  \bibfield  {author} {\bibinfo {author} {\bibfnamefont {V.}~\bibnamefont
  {Scarani}},\ }\bibfield  {title} {\bibinfo {title} {{T}he device-independent
  outlook on quantum physics},\ }\href@noop {} {\bibfield  {journal} {\bibinfo
  {journal} {Acta Phys. Slovaca}\ }\textbf {\bibinfo {volume} {62}},\ \bibinfo
  {pages} {347} (\bibinfo {year} {2012})},\ \bibinfo {note}
  {(\href{https://arxiv.org/abs/1303.3081}{arXiv:1303.3081})}\BibitemShut
  {NoStop}%
\bibitem [{\citenamefont {Brunner}\ \emph {et~al.}(2014)\citenamefont
  {Brunner}, \citenamefont {Cavalcanti}, \citenamefont {Pironio}, \citenamefont
  {Scarani},\ and\ \citenamefont {Wehner}}]{Bell-RMP}%
  \BibitemOpen
  \bibfield  {author} {\bibinfo {author} {\bibfnamefont {N.}~\bibnamefont
  {Brunner}}, \bibinfo {author} {\bibfnamefont {D.}~\bibnamefont {Cavalcanti}},
  \bibinfo {author} {\bibfnamefont {S.}~\bibnamefont {Pironio}}, \bibinfo
  {author} {\bibfnamefont {V.}~\bibnamefont {Scarani}},\ and\ \bibinfo {author}
  {\bibfnamefont {S.}~\bibnamefont {Wehner}},\ }\bibfield  {title} {\bibinfo
  {title} {{B}ell nonlocality},\ }\href
  {https://doi.org/10.1103/RevModPhys.86.419} {\bibfield  {journal} {\bibinfo
  {journal} {Rev. Mod. Phys.}\ }\textbf {\bibinfo {volume} {86}},\ \bibinfo
  {pages} {419} (\bibinfo {year} {2014})}\BibitemShut {NoStop}%
\bibitem [{\citenamefont {Ekert}(1991)}]{Ekert91}%
  \BibitemOpen
  \bibfield  {author} {\bibinfo {author} {\bibfnamefont {A.~K.}\ \bibnamefont
  {Ekert}},\ }\bibfield  {title} {\bibinfo {title} {{Q}uantum cryptography
  based on {B}ell's theorem},\ }\href
  {https://doi.org/10.1103/PhysRevLett.67.661} {\bibfield  {journal} {\bibinfo
  {journal} {Phys. Rev. Lett.}\ }\textbf {\bibinfo {volume} {67}},\ \bibinfo
  {pages} {661} (\bibinfo {year} {1991})}\BibitemShut {NoStop}%
\bibitem [{\citenamefont {Mayers}\ and\ \citenamefont {Yao}(2004)}]{Mayers04}%
  \BibitemOpen
  \bibfield  {author} {\bibinfo {author} {\bibfnamefont {D.}~\bibnamefont
  {Mayers}}\ and\ \bibinfo {author} {\bibfnamefont {A.}~\bibnamefont {Yao}},\
  }\bibfield  {title} {\bibinfo {title} {{S}elf {T}esting {Q}uantum
  {A}pparatus},\ }\href {http://dl.acm.org/citation.cfm?id=2011827.2011830}
  {\bibfield  {journal} {\bibinfo  {journal} {Quantum Info. Comput.}\ }\textbf
  {\bibinfo {volume} {4}},\ \bibinfo {pages} {273} (\bibinfo {year}
  {2004})}\BibitemShut {NoStop}%
\bibitem [{\citenamefont {Barrett}\ \emph
  {et~al.}(2005{\natexlab{a}})\citenamefont {Barrett}, \citenamefont {Hardy},\
  and\ \citenamefont {Kent}}]{Barrett05}%
  \BibitemOpen
  \bibfield  {author} {\bibinfo {author} {\bibfnamefont {J.}~\bibnamefont
  {Barrett}}, \bibinfo {author} {\bibfnamefont {L.}~\bibnamefont {Hardy}},\
  and\ \bibinfo {author} {\bibfnamefont {A.}~\bibnamefont {Kent}},\ }\bibfield
  {title} {\bibinfo {title} {{N}o {S}ignaling and {Q}uantum {K}ey
  {D}istribution},\ }\href {https://doi.org/10.1103/PhysRevLett.95.010503}
  {\bibfield  {journal} {\bibinfo  {journal} {Phys. Rev. Lett.}\ }\textbf
  {\bibinfo {volume} {95}},\ \bibinfo {pages} {010503} (\bibinfo {year}
  {2005}{\natexlab{a}})}\BibitemShut {NoStop}%
\bibitem [{\citenamefont {Vazirani}\ and\ \citenamefont
  {Vidick}(2014)}]{Vazirani14}%
  \BibitemOpen
  \bibfield  {author} {\bibinfo {author} {\bibfnamefont {U.}~\bibnamefont
  {Vazirani}}\ and\ \bibinfo {author} {\bibfnamefont {T.}~\bibnamefont
  {Vidick}},\ }\bibfield  {title} {\bibinfo {title} {{F}ully
  {D}evice-{I}ndependent {Q}uantum {K}ey {D}istribution},\ }\href
  {https://doi.org/10.1103/PhysRevLett.113.140501} {\bibfield  {journal}
  {\bibinfo  {journal} {Phys. Rev. Lett.}\ }\textbf {\bibinfo {volume} {113}},\
  \bibinfo {pages} {140501} (\bibinfo {year} {2014})}\BibitemShut {NoStop}%
\bibitem [{\citenamefont {Pironio}\ \emph {et~al.}(2010)\citenamefont
  {Pironio}, \citenamefont {Ac\'{\i}n}, \citenamefont {Massar}, \citenamefont
  {Giroday}, \citenamefont {Matsukevich}, \citenamefont {Maunz}, \citenamefont
  {Olmschenk}, \citenamefont {Hayes}, \citenamefont {Luo}, \citenamefont
  {Manning},\ and\ \citenamefont {Monroe}}]{Pironio10}%
  \BibitemOpen
  \bibfield  {author} {\bibinfo {author} {\bibfnamefont {S.}~\bibnamefont
  {Pironio}}, \bibinfo {author} {\bibfnamefont {A.}~\bibnamefont {Ac\'{\i}n}},
  \bibinfo {author} {\bibfnamefont {S.}~\bibnamefont {Massar}}, \bibinfo
  {author} {\bibfnamefont {A.~B. d.~l.}\ \bibnamefont {Giroday}}, \bibinfo
  {author} {\bibfnamefont {D.~N.}\ \bibnamefont {Matsukevich}}, \bibinfo
  {author} {\bibfnamefont {P.}~\bibnamefont {Maunz}}, \bibinfo {author}
  {\bibfnamefont {S.}~\bibnamefont {Olmschenk}}, \bibinfo {author}
  {\bibfnamefont {D.}~\bibnamefont {Hayes}}, \bibinfo {author} {\bibfnamefont
  {L.}~\bibnamefont {Luo}}, \bibinfo {author} {\bibfnamefont {T.~A.}\
  \bibnamefont {Manning}},\ and\ \bibinfo {author} {\bibfnamefont
  {C.}~\bibnamefont {Monroe}},\ }\bibfield  {title} {\bibinfo {title} {{R}andom
  numbers certified by {B}ell's theorem},\ }\href
  {https://doi.org/10.1038/nature09008} {\bibfield  {journal} {\bibinfo
  {journal} {Nature}\ }\textbf {\bibinfo {volume} {464}},\ \bibinfo {pages}
  {1021} (\bibinfo {year} {2010})}\BibitemShut {NoStop}%
\bibitem [{\citenamefont {Colbeck}\ and\ \citenamefont
  {Kent}(2011)}]{Colbeck11}%
  \BibitemOpen
  \bibfield  {author} {\bibinfo {author} {\bibfnamefont {R.}~\bibnamefont
  {Colbeck}}\ and\ \bibinfo {author} {\bibfnamefont {A.}~\bibnamefont {Kent}},\
  }\bibfield  {title} {\bibinfo {title} {{P}rivate randomness expansion with
  untrusted devices},\ }\href {http://stacks.iop.org/1751-8121/44/i=9/a=095305}
  {\bibfield  {journal} {\bibinfo  {journal} {J. Phys. A}\ }\textbf {\bibinfo
  {volume} {44}},\ \bibinfo {pages} {095305} (\bibinfo {year}
  {2011})}\BibitemShut {NoStop}%
\bibitem [{\citenamefont {Bancal}\ \emph {et~al.}(2015)\citenamefont {Bancal},
  \citenamefont {Navascu\'es}, \citenamefont {Scarani}, \citenamefont
  {V\'ertesi},\ and\ \citenamefont {Yang}}]{Bancal15}%
  \BibitemOpen
  \bibfield  {author} {\bibinfo {author} {\bibfnamefont {J.-D.}\ \bibnamefont
  {Bancal}}, \bibinfo {author} {\bibfnamefont {M.}~\bibnamefont {Navascu\'es}},
  \bibinfo {author} {\bibfnamefont {V.}~\bibnamefont {Scarani}}, \bibinfo
  {author} {\bibfnamefont {T.}~\bibnamefont {V\'ertesi}},\ and\ \bibinfo
  {author} {\bibfnamefont {T.~H.}\ \bibnamefont {Yang}},\ }\bibfield  {title}
  {\bibinfo {title} {{P}hysical characterization of devices from nonlocal
  correlations},\ }\href {http://link.aps.org/doi/10.1103/PhysRevA.91.022115}
  {\bibfield  {journal} {\bibinfo  {journal} {Phys. Rev. A}\ }\textbf {\bibinfo
  {volume} {91}},\ \bibinfo {pages} {022115} (\bibinfo {year}
  {2015})}\BibitemShut {NoStop}%
\bibitem [{\citenamefont {Liang}\ \emph {et~al.}(2015)\citenamefont {Liang},
  \citenamefont {Rosset}, \citenamefont {Bancal}, \citenamefont {P\"utz},
  \citenamefont {Barnea},\ and\ \citenamefont {Gisin}}]{Liang15}%
  \BibitemOpen
  \bibfield  {author} {\bibinfo {author} {\bibfnamefont {Y.-C.}\ \bibnamefont
  {Liang}}, \bibinfo {author} {\bibfnamefont {D.}~\bibnamefont {Rosset}},
  \bibinfo {author} {\bibfnamefont {J.-D.}\ \bibnamefont {Bancal}}, \bibinfo
  {author} {\bibfnamefont {G.}~\bibnamefont {P\"utz}}, \bibinfo {author}
  {\bibfnamefont {T.~J.}\ \bibnamefont {Barnea}},\ and\ \bibinfo {author}
  {\bibfnamefont {N.}~\bibnamefont {Gisin}},\ }\bibfield  {title} {\bibinfo
  {title} {{F}amily of {B}ell-like {I}nequalities as {D}evice-{I}ndependent
  {W}itnesses for {E}ntanglement {D}epth},\ }\href
  {https://doi.org/10.1103/PhysRevLett.114.190401} {\bibfield  {journal}
  {\bibinfo  {journal} {Phys. Rev. Lett.}\ }\textbf {\bibinfo {volume} {114}},\
  \bibinfo {pages} {190401} (\bibinfo {year} {2015})}\BibitemShut {NoStop}%
\bibitem [{\citenamefont {Chen}\ \emph {et~al.}(2016)\citenamefont {Chen},
  \citenamefont {Budroni}, \citenamefont {Liang},\ and\ \citenamefont
  {Chen}}]{SLChen16}%
  \BibitemOpen
  \bibfield  {author} {\bibinfo {author} {\bibfnamefont {S.-L.}\ \bibnamefont
  {Chen}}, \bibinfo {author} {\bibfnamefont {C.}~\bibnamefont {Budroni}},
  \bibinfo {author} {\bibfnamefont {Y.-C.}\ \bibnamefont {Liang}},\ and\
  \bibinfo {author} {\bibfnamefont {Y.-N.}\ \bibnamefont {Chen}},\ }\bibfield
  {title} {\bibinfo {title} {{N}atural {F}ramework for {D}evice-{I}ndependent
  {Q}uantification of {Q}uantum {S}teerability, {M}easurement
  {I}ncompatibility, and {S}elf-{T}esting},\ }\href
  {https://doi.org/10.1103/PhysRevLett.116.240401} {\bibfield  {journal}
  {\bibinfo  {journal} {Phys. Rev. Lett.}\ }\textbf {\bibinfo {volume} {116}},\
  \bibinfo {pages} {240401} (\bibinfo {year} {2016})}\BibitemShut {NoStop}%
\bibitem [{\citenamefont {Sekatski}\ \emph {et~al.}(2018)\citenamefont
  {Sekatski}, \citenamefont {Bancal}, \citenamefont {Wagner},\ and\
  \citenamefont {Sangouard}}]{Sekatski2018}%
  \BibitemOpen
  \bibfield  {author} {\bibinfo {author} {\bibfnamefont {P.}~\bibnamefont
  {Sekatski}}, \bibinfo {author} {\bibfnamefont {J.-D.}\ \bibnamefont
  {Bancal}}, \bibinfo {author} {\bibfnamefont {S.}~\bibnamefont {Wagner}},\
  and\ \bibinfo {author} {\bibfnamefont {N.}~\bibnamefont {Sangouard}},\
  }\bibfield  {title} {\bibinfo {title} {{C}ertifying the {B}uilding {B}locks
  of {Q}uantum {C}omputers from {B}ell's {T}heorem},\ }\href
  {https://doi.org/10.1103/PhysRevLett.121.180505} {\bibfield  {journal}
  {\bibinfo  {journal} {Phys. Rev. Lett.}\ }\textbf {\bibinfo {volume} {121}},\
  \bibinfo {pages} {180505} (\bibinfo {year} {2018})}\BibitemShut {NoStop}%
\bibitem [{\citenamefont {Wagner}\ \emph {et~al.}(2020)\citenamefont {Wagner},
  \citenamefont {Bancal}, \citenamefont {Sangouard},\ and\ \citenamefont
  {Sekatski}}]{Wagner2020}%
  \BibitemOpen
  \bibfield  {author} {\bibinfo {author} {\bibfnamefont {S.}~\bibnamefont
  {Wagner}}, \bibinfo {author} {\bibfnamefont {J.-D.}\ \bibnamefont {Bancal}},
  \bibinfo {author} {\bibfnamefont {N.}~\bibnamefont {Sangouard}},\ and\
  \bibinfo {author} {\bibfnamefont {P.}~\bibnamefont {Sekatski}},\ }\bibfield
  {title} {\bibinfo {title} {{D}evice-independent characterization of quantum
  instruments},\ }\href {https://doi.org/10.22331/q-2020-03-19-243} {\bibfield
  {journal} {\bibinfo  {journal} {{Quantum}}\ }\textbf {\bibinfo {volume}
  {4}},\ \bibinfo {pages} {243} (\bibinfo {year} {2020})}\BibitemShut {NoStop}%
\bibitem [{\citenamefont {Tavakoli}\ \emph {et~al.}(2021)\citenamefont
  {Tavakoli}, \citenamefont {Farkas}, \citenamefont {Rosset}, \citenamefont
  {Bancal},\ and\ \citenamefont {Kaniewski}}]{Tavakoli2021}%
  \BibitemOpen
  \bibfield  {author} {\bibinfo {author} {\bibfnamefont {A.}~\bibnamefont
  {Tavakoli}}, \bibinfo {author} {\bibfnamefont {M.}~\bibnamefont {Farkas}},
  \bibinfo {author} {\bibfnamefont {D.}~\bibnamefont {Rosset}}, \bibinfo
  {author} {\bibfnamefont {J.-D.}\ \bibnamefont {Bancal}},\ and\ \bibinfo
  {author} {\bibfnamefont {J.}~\bibnamefont {Kaniewski}},\ }\bibfield  {title}
  {\bibinfo {title} {{M}utually unbiased bases and symmetric informationally
  complete measurements in {Bell} experiments},\ }\href
  {https://doi.org/10.1126/sciadv.abc3847} {\bibfield  {journal} {\bibinfo
  {journal} {Sci. Adv.}\ }\textbf {\bibinfo {volume} {7}},\ \bibinfo {pages}
  {eabc3847} (\bibinfo {year} {2021})}\BibitemShut {NoStop}%
\bibitem [{\citenamefont {Chen}\ \emph {et~al.}(2021)\citenamefont {Chen},
  \citenamefont {Ku}, \citenamefont {Zhou}, \citenamefont {Tura},\ and\
  \citenamefont {Chen}}]{Chen2021robustselftestingof}%
  \BibitemOpen
  \bibfield  {author} {\bibinfo {author} {\bibfnamefont {S.-L.}\ \bibnamefont
  {Chen}}, \bibinfo {author} {\bibfnamefont {H.-Y.}\ \bibnamefont {Ku}},
  \bibinfo {author} {\bibfnamefont {W.}~\bibnamefont {Zhou}}, \bibinfo {author}
  {\bibfnamefont {J.}~\bibnamefont {Tura}},\ and\ \bibinfo {author}
  {\bibfnamefont {Y.-N.}\ \bibnamefont {Chen}},\ }\bibfield  {title} {\bibinfo
  {title} {{R}obust self-testing of steerable quantum assemblages and its
  applications on device-independent quantum certification},\ }\href
  {https://doi.org/10.22331/q-2021-09-28-552} {\bibfield  {journal} {\bibinfo
  {journal} {{Quantum}}\ }\textbf {\bibinfo {volume} {5}},\ \bibinfo {pages}
  {552} (\bibinfo {year} {2021})}\BibitemShut {NoStop}%
\bibitem [{\citenamefont {Kaszlikowski}\ \emph {et~al.}(2000)\citenamefont
  {Kaszlikowski}, \citenamefont {Gnaci\ifmmode~\acute{n}\else \'{n}\fi{}ski},
  \citenamefont {\ifmmode~\dot{Z}\else \.{Z}\fi{}ukowski}, \citenamefont
  {Miklaszewski},\ and\ \citenamefont {Zeilinger}}]{KGZ+00}%
  \BibitemOpen
  \bibfield  {author} {\bibinfo {author} {\bibfnamefont {D.}~\bibnamefont
  {Kaszlikowski}}, \bibinfo {author} {\bibfnamefont {P.}~\bibnamefont
  {Gnaci\ifmmode~\acute{n}\else \'{n}\fi{}ski}}, \bibinfo {author}
  {\bibfnamefont {M.}~\bibnamefont {\ifmmode~\dot{Z}\else \.{Z}\fi{}ukowski}},
  \bibinfo {author} {\bibfnamefont {W.}~\bibnamefont {Miklaszewski}},\ and\
  \bibinfo {author} {\bibfnamefont {A.}~\bibnamefont {Zeilinger}},\ }\bibfield
  {title} {\bibinfo {title} {Violations of local realism by two entangled
  $\mathit{N}$-dimensional systems are stronger than for two qubits},\ }\href
  {https://doi.org/10.1103/PhysRevLett.85.4418} {\bibfield  {journal} {\bibinfo
   {journal} {Phys. Rev. Lett.}\ }\textbf {\bibinfo {volume} {85}},\ \bibinfo
  {pages} {4418} (\bibinfo {year} {2000})}\BibitemShut {NoStop}%
\bibitem [{\citenamefont {Durt}\ \emph {et~al.}(2001)\citenamefont {Durt},
  \citenamefont {Kaszlikowski},\ and\ \citenamefont {\ifmmode~\dot{Z}\else
  \.{Z}\fi{}ukowski}}]{DKZ01}%
  \BibitemOpen
  \bibfield  {author} {\bibinfo {author} {\bibfnamefont {T.}~\bibnamefont
  {Durt}}, \bibinfo {author} {\bibfnamefont {D.}~\bibnamefont {Kaszlikowski}},\
  and\ \bibinfo {author} {\bibfnamefont {M.}~\bibnamefont
  {\ifmmode~\dot{Z}\else \.{Z}\fi{}ukowski}},\ }\bibfield  {title} {\bibinfo
  {title} {Violations of local realism with quantum systems described by
  {N}-dimensional {Hilbert} spaces up to ${N}=16$},\ }\href
  {https://doi.org/10.1103/PhysRevA.64.024101} {\bibfield  {journal} {\bibinfo
  {journal} {Phys. Rev. A}\ }\textbf {\bibinfo {volume} {64}},\ \bibinfo
  {pages} {024101} (\bibinfo {year} {2001})}\BibitemShut {NoStop}%
\bibitem [{\citenamefont {Collins}\ \emph {et~al.}(2002)\citenamefont
  {Collins}, \citenamefont {Gisin}, \citenamefont {Linden}, \citenamefont
  {Massar},\ and\ \citenamefont {Popescu}}]{CGLMP}%
  \BibitemOpen
  \bibfield  {author} {\bibinfo {author} {\bibfnamefont {D.}~\bibnamefont
  {Collins}}, \bibinfo {author} {\bibfnamefont {N.}~\bibnamefont {Gisin}},
  \bibinfo {author} {\bibfnamefont {N.}~\bibnamefont {Linden}}, \bibinfo
  {author} {\bibfnamefont {S.}~\bibnamefont {Massar}},\ and\ \bibinfo {author}
  {\bibfnamefont {S.}~\bibnamefont {Popescu}},\ }\bibfield  {title} {\bibinfo
  {title} {{B}ell {I}nequalities for {A}rbitrarily {H}igh-{D}imensional
  {S}ystems},\ }\href {https://doi.org/10.1103/PhysRevLett.88.040404}
  {\bibfield  {journal} {\bibinfo  {journal} {Phys. Rev. Lett.}\ }\textbf
  {\bibinfo {volume} {88}},\ \bibinfo {pages} {040404} (\bibinfo {year}
  {2002})}\BibitemShut {NoStop}%
\bibitem [{\citenamefont {Ac\'{\i}n}\ \emph {et~al.}(2002)\citenamefont
  {Ac\'{\i}n}, \citenamefont {Durt}, \citenamefont {Gisin},\ and\ \citenamefont
  {Latorre}}]{ADG+02}%
  \BibitemOpen
  \bibfield  {author} {\bibinfo {author} {\bibfnamefont {A.}~\bibnamefont
  {Ac\'{\i}n}}, \bibinfo {author} {\bibfnamefont {T.}~\bibnamefont {Durt}},
  \bibinfo {author} {\bibfnamefont {N.}~\bibnamefont {Gisin}},\ and\ \bibinfo
  {author} {\bibfnamefont {J.~I.}\ \bibnamefont {Latorre}},\ }\bibfield
  {title} {\bibinfo {title} {{Q}uantum nonlocality in two three-level
  systems},\ }\href {https://link.aps.org/doi/10.1103/PhysRevA.65.052325}
  {\bibfield  {journal} {\bibinfo  {journal} {Phys. Rev. A}\ }\textbf {\bibinfo
  {volume} {65}},\ \bibinfo {pages} {052325} (\bibinfo {year}
  {2002})}\BibitemShut {NoStop}%
\bibitem [{\citenamefont {Barrett}\ \emph {et~al.}(2006)\citenamefont
  {Barrett}, \citenamefont {Kent},\ and\ \citenamefont {Pironio}}]{BKP06}%
  \BibitemOpen
  \bibfield  {author} {\bibinfo {author} {\bibfnamefont {J.}~\bibnamefont
  {Barrett}}, \bibinfo {author} {\bibfnamefont {A.}~\bibnamefont {Kent}},\ and\
  \bibinfo {author} {\bibfnamefont {S.}~\bibnamefont {Pironio}},\ }\bibfield
  {title} {\bibinfo {title} {Maximally nonlocal and monogamous quantum
  correlations},\ }\href {https://doi.org/10.1103/PhysRevLett.97.170409}
  {\bibfield  {journal} {\bibinfo  {journal} {Phys. Rev. Lett.}\ }\textbf
  {\bibinfo {volume} {97}},\ \bibinfo {pages} {170409} (\bibinfo {year}
  {2006})}\BibitemShut {NoStop}%
\bibitem [{\citenamefont {Lee}\ \emph {et~al.}(2007)\citenamefont {Lee},
  \citenamefont {Cheong},\ and\ \citenamefont {Lee}}]{LCL07}%
  \BibitemOpen
  \bibfield  {author} {\bibinfo {author} {\bibfnamefont {S.-W.}\ \bibnamefont
  {Lee}}, \bibinfo {author} {\bibfnamefont {Y.~W.}\ \bibnamefont {Cheong}},\
  and\ \bibinfo {author} {\bibfnamefont {J.}~\bibnamefont {Lee}},\ }\bibfield
  {title} {\bibinfo {title} {Generalized structure of {Bell} inequalities for
  bipartite arbitrary-dimensional systems},\ }\href
  {https://doi.org/10.1103/PhysRevA.76.032108} {\bibfield  {journal} {\bibinfo
  {journal} {Phys. Rev. A}\ }\textbf {\bibinfo {volume} {76}},\ \bibinfo
  {pages} {032108} (\bibinfo {year} {2007})}\BibitemShut {NoStop}%
\bibitem [{\citenamefont {Liang}\ \emph {et~al.}(2009)\citenamefont {Liang},
  \citenamefont {Lim},\ and\ \citenamefont {Deng}}]{Liang09}%
  \BibitemOpen
  \bibfield  {author} {\bibinfo {author} {\bibfnamefont {Y.-C.}\ \bibnamefont
  {Liang}}, \bibinfo {author} {\bibfnamefont {C.-W.}\ \bibnamefont {Lim}},\
  and\ \bibinfo {author} {\bibfnamefont {D.-L.}\ \bibnamefont {Deng}},\
  }\bibfield  {title} {\bibinfo {title} {{R}eexamination of a multisetting
  {Bell} inequality for qudits},\ }\href
  {https://doi.org/10.1103/PhysRevA.80.052116} {\bibfield  {journal} {\bibinfo
  {journal} {Phys. Rev. A}\ }\textbf {\bibinfo {volume} {80}},\ \bibinfo
  {pages} {052116} (\bibinfo {year} {2009})}\BibitemShut {NoStop}%
\bibitem [{\citenamefont {Lim}\ \emph {et~al.}(2010)\citenamefont {Lim},
  \citenamefont {Ryu}, \citenamefont {Yoo}, \citenamefont {Lee}, \citenamefont
  {Bang},\ and\ \citenamefont {Lee}}]{Lim2010}%
  \BibitemOpen
  \bibfield  {author} {\bibinfo {author} {\bibfnamefont {J.}~\bibnamefont
  {Lim}}, \bibinfo {author} {\bibfnamefont {J.}~\bibnamefont {Ryu}}, \bibinfo
  {author} {\bibfnamefont {S.}~\bibnamefont {Yoo}}, \bibinfo {author}
  {\bibfnamefont {C.}~\bibnamefont {Lee}}, \bibinfo {author} {\bibfnamefont
  {J.}~\bibnamefont {Bang}},\ and\ \bibinfo {author} {\bibfnamefont
  {J.}~\bibnamefont {Lee}},\ }\bibfield  {title} {\bibinfo {title} {{G}enuinely
  high-dimensional nonlocality optimized by complementary measurements},\
  }\href {http://stacks.iop.org/1367-2630/12/i=10/a=103012} {\bibfield
  {journal} {\bibinfo  {journal} {New J. Phys.}\ }\textbf {\bibinfo {volume}
  {12}},\ \bibinfo {pages} {103012} (\bibinfo {year} {2010})}\BibitemShut
  {NoStop}%
\bibitem [{\citenamefont {V\'ertesi}\ \emph {et~al.}(2010)\citenamefont
  {V\'ertesi}, \citenamefont {Pironio},\ and\ \citenamefont
  {Brunner}}]{Vertesi2010PRL}%
  \BibitemOpen
  \bibfield  {author} {\bibinfo {author} {\bibfnamefont {T.}~\bibnamefont
  {V\'ertesi}}, \bibinfo {author} {\bibfnamefont {S.}~\bibnamefont {Pironio}},\
  and\ \bibinfo {author} {\bibfnamefont {N.}~\bibnamefont {Brunner}},\
  }\bibfield  {title} {\bibinfo {title} {Closing the detection loophole in
  {Bell} experiments using qudits},\ }\href
  {https://doi.org/10.1103/PhysRevLett.104.060401} {\bibfield  {journal}
  {\bibinfo  {journal} {Phys. Rev. Lett.}\ }\textbf {\bibinfo {volume} {104}},\
  \bibinfo {pages} {060401} (\bibinfo {year} {2010})}\BibitemShut {NoStop}%
\bibitem [{\citenamefont {Salavrakos}\ \emph {et~al.}(2017)\citenamefont
  {Salavrakos}, \citenamefont {Augusiak}, \citenamefont {Tura}, \citenamefont
  {Wittek}, \citenamefont {Ac\'{\i}n},\ and\ \citenamefont {Pironio}}]{SAT+17}%
  \BibitemOpen
  \bibfield  {author} {\bibinfo {author} {\bibfnamefont {A.}~\bibnamefont
  {Salavrakos}}, \bibinfo {author} {\bibfnamefont {R.}~\bibnamefont
  {Augusiak}}, \bibinfo {author} {\bibfnamefont {J.}~\bibnamefont {Tura}},
  \bibinfo {author} {\bibfnamefont {P.}~\bibnamefont {Wittek}}, \bibinfo
  {author} {\bibfnamefont {A.}~\bibnamefont {Ac\'{\i}n}},\ and\ \bibinfo
  {author} {\bibfnamefont {S.}~\bibnamefont {Pironio}},\ }\bibfield  {title}
  {\bibinfo {title} {Bell inequalities tailored to maximally entangled
  states},\ }\href {https://doi.org/10.1103/PhysRevLett.119.040402} {\bibfield
  {journal} {\bibinfo  {journal} {Phys. Rev. Lett.}\ }\textbf {\bibinfo
  {volume} {119}},\ \bibinfo {pages} {040402} (\bibinfo {year}
  {2017})}\BibitemShut {NoStop}%
\bibitem [{\citenamefont {Kaniewski}\ \emph {et~al.}(2019)\citenamefont
  {Kaniewski}, \citenamefont {{\v{S}}upi{\'{c}}}, \citenamefont {Tura},
  \citenamefont {Baccari}, \citenamefont {Salavrakos},\ and\ \citenamefont
  {Augusiak}}]{KST+19}%
  \BibitemOpen
  \bibfield  {author} {\bibinfo {author} {\bibfnamefont {J.}~\bibnamefont
  {Kaniewski}}, \bibinfo {author} {\bibfnamefont {I.}~\bibnamefont
  {{\v{S}}upi{\'{c}}}}, \bibinfo {author} {\bibfnamefont {J.}~\bibnamefont
  {Tura}}, \bibinfo {author} {\bibfnamefont {F.}~\bibnamefont {Baccari}},
  \bibinfo {author} {\bibfnamefont {A.}~\bibnamefont {Salavrakos}},\ and\
  \bibinfo {author} {\bibfnamefont {R.}~\bibnamefont {Augusiak}},\ }\bibfield
  {title} {\bibinfo {title} {Maximal nonlocality from maximal entanglement and
  mutually unbiased bases, and self-testing of two-qutrit quantum systems},\
  }\href {https://doi.org/10.22331/q-2019-10-24-198} {\bibfield  {journal}
  {\bibinfo  {journal} {{Quantum}}\ }\textbf {\bibinfo {volume} {3}},\ \bibinfo
  {pages} {198} (\bibinfo {year} {2019})}\BibitemShut {NoStop}%
\bibitem [{\citenamefont {Tabia}\ \emph {et~al.}(2022)\citenamefont {Tabia},
  \citenamefont {Bavana}, \citenamefont {Yang},\ and\ \citenamefont
  {Liang}}]{TBYL22}%
  \BibitemOpen
  \bibfield  {author} {\bibinfo {author} {\bibfnamefont {G.~N.~M.}\
  \bibnamefont {Tabia}}, \bibinfo {author} {\bibfnamefont {V.~S.~R.}\
  \bibnamefont {Bavana}}, \bibinfo {author} {\bibfnamefont {S.-X.}\
  \bibnamefont {Yang}},\ and\ \bibinfo {author} {\bibfnamefont {Y.-C.}\
  \bibnamefont {Liang}},\ }\bibfield  {title} {\bibinfo {title} {Bell
  inequality violations with random mutually unbiased bases},\ }\href
  {https://doi.org/10.1103/PhysRevA.106.012209} {\bibfield  {journal} {\bibinfo
   {journal} {Phys. Rev. A}\ }\textbf {\bibinfo {volume} {106}},\ \bibinfo
  {pages} {012209} (\bibinfo {year} {2022})}\BibitemShut {NoStop}%
\bibitem [{\citenamefont {Thew}\ \emph {et~al.}(2004)\citenamefont {Thew},
  \citenamefont {Ac\'{\i}n}, \citenamefont {Zbinden},\ and\ \citenamefont
  {Gisin}}]{TAZ+PRL04}%
  \BibitemOpen
  \bibfield  {author} {\bibinfo {author} {\bibfnamefont {R.~T.}\ \bibnamefont
  {Thew}}, \bibinfo {author} {\bibfnamefont {A.}~\bibnamefont {Ac\'{\i}n}},
  \bibinfo {author} {\bibfnamefont {H.}~\bibnamefont {Zbinden}},\ and\ \bibinfo
  {author} {\bibfnamefont {N.}~\bibnamefont {Gisin}},\ }\bibfield  {title}
  {\bibinfo {title} {Bell-type test of energy-time entangled qutrits},\ }\href
  {https://doi.org/10.1103/PhysRevLett.93.010503} {\bibfield  {journal}
  {\bibinfo  {journal} {Phys. Rev. Lett.}\ }\textbf {\bibinfo {volume} {93}},\
  \bibinfo {pages} {010503} (\bibinfo {year} {2004})}\BibitemShut {NoStop}%
\bibitem [{\citenamefont {Dada}\ \emph {et~al.}(2011)\citenamefont {Dada},
  \citenamefont {Leach}, \citenamefont {Buller}, \citenamefont {Padgett},\ and\
  \citenamefont {Andersson}}]{Dada2011Nphysics}%
  \BibitemOpen
  \bibfield  {author} {\bibinfo {author} {\bibfnamefont {A.~C.}\ \bibnamefont
  {Dada}}, \bibinfo {author} {\bibfnamefont {J.}~\bibnamefont {Leach}},
  \bibinfo {author} {\bibfnamefont {G.~S.}\ \bibnamefont {Buller}}, \bibinfo
  {author} {\bibfnamefont {M.~J.}\ \bibnamefont {Padgett}},\ and\ \bibinfo
  {author} {\bibfnamefont {E.}~\bibnamefont {Andersson}},\ }\bibfield  {title}
  {\bibinfo {title} {Experimental high-dimensional two-photon entanglement and
  violations of generalized {Bell} inequalities},\ }\href
  {https://doi.org/10.1038/nphys1996} {\bibfield  {journal} {\bibinfo
  {journal} {Nat. Phys.}\ }\textbf {\bibinfo {volume} {7}},\ \bibinfo {pages}
  {677} (\bibinfo {year} {2011})}\BibitemShut {NoStop}%
\bibitem [{\citenamefont {Schwarz}\ \emph {et~al.}(2016)\citenamefont
  {Schwarz}, \citenamefont {Bessire}, \citenamefont {Stefanov},\ and\
  \citenamefont {Liang}}]{Schwarz16}%
  \BibitemOpen
  \bibfield  {author} {\bibinfo {author} {\bibfnamefont {S.}~\bibnamefont
  {Schwarz}}, \bibinfo {author} {\bibfnamefont {B.}~\bibnamefont {Bessire}},
  \bibinfo {author} {\bibfnamefont {A.}~\bibnamefont {Stefanov}},\ and\
  \bibinfo {author} {\bibfnamefont {Y.-C.}\ \bibnamefont {Liang}},\ }\bibfield
  {title} {\bibinfo {title} {{B}ipartite {B}ell inequalities with three
  ternary-outcome measurements - from theory to experiments},\ }\href
  {http://stacks.iop.org/1367-2630/18/i=3/a=035001} {\bibfield  {journal}
  {\bibinfo  {journal} {New J. Phys.}\ }\textbf {\bibinfo {volume} {18}},\
  \bibinfo {pages} {035001} (\bibinfo {year} {2016})}\BibitemShut {NoStop}%
\bibitem [{\citenamefont {Lo}\ \emph {et~al.}(2016)\citenamefont {Lo},
  \citenamefont {Li}, \citenamefont {Yabushita}, \citenamefont {Chen},
  \citenamefont {Luo},\ and\ \citenamefont {Kobayashi}}]{LLY+16}%
  \BibitemOpen
  \bibfield  {author} {\bibinfo {author} {\bibfnamefont {H.-P.}\ \bibnamefont
  {Lo}}, \bibinfo {author} {\bibfnamefont {C.-M.}\ \bibnamefont {Li}}, \bibinfo
  {author} {\bibfnamefont {A.}~\bibnamefont {Yabushita}}, \bibinfo {author}
  {\bibfnamefont {Y.-N.}\ \bibnamefont {Chen}}, \bibinfo {author}
  {\bibfnamefont {C.-W.}\ \bibnamefont {Luo}},\ and\ \bibinfo {author}
  {\bibfnamefont {T.}~\bibnamefont {Kobayashi}},\ }\bibfield  {title} {\bibinfo
  {title} {Experimental violation of {Bell} inequalities for multi-dimensional
  systems},\ }\href {https://doi.org/10.1038/srep22088} {\bibfield  {journal}
  {\bibinfo  {journal} {Sci Rep.}\ }\textbf {\bibinfo {volume} {6}},\ \bibinfo
  {pages} {22088} (\bibinfo {year} {2016})}\BibitemShut {NoStop}%
\bibitem [{\citenamefont {Masanes}(2006)}]{Masanes06}%
  \BibitemOpen
  \bibfield  {author} {\bibinfo {author} {\bibfnamefont {L.}~\bibnamefont
  {Masanes}},\ }\bibfield  {title} {\bibinfo {title} {{A}symptotic {V}iolation
  of {Bell} inequalities and distillability},\ }\href
  {https://doi.org/10.1103/PhysRevLett.97.050503} {\bibfield  {journal}
  {\bibinfo  {journal} {Phys. Rev. Lett.}\ }\textbf {\bibinfo {volume} {97}},\
  \bibinfo {pages} {050503} (\bibinfo {year} {2006})}\BibitemShut {NoStop}%
\bibitem [{\citenamefont {P\'al}\ and\ \citenamefont
  {V\'ertesi}(2009)}]{Pal09QB}%
  \BibitemOpen
  \bibfield  {author} {\bibinfo {author} {\bibfnamefont {K.~F.}\ \bibnamefont
  {P\'al}}\ and\ \bibinfo {author} {\bibfnamefont {T.}~\bibnamefont
  {V\'ertesi}},\ }\bibfield  {title} {\bibinfo {title} {{Q}uantum bounds on
  {B}ell inequalities},\ }\href {https://doi.org/10.1103/PhysRevA.79.022120}
  {\bibfield  {journal} {\bibinfo  {journal} {Phys. Rev. A}\ }\textbf {\bibinfo
  {volume} {79}},\ \bibinfo {pages} {022120} (\bibinfo {year}
  {2009})}\BibitemShut {NoStop}%
\bibitem [{\citenamefont {P\'al}\ and\ \citenamefont {V\'ertesi}(2010)}]{PV10}%
  \BibitemOpen
  \bibfield  {author} {\bibinfo {author} {\bibfnamefont {K.~F.}\ \bibnamefont
  {P\'al}}\ and\ \bibinfo {author} {\bibfnamefont {T.}~\bibnamefont
  {V\'ertesi}},\ }\bibfield  {title} {\bibinfo {title} {{M}aximal violation of
  a bipartite three-setting, two-outcome {B}ell inequality using
  infinite-dimensional quantum systems},\ }\href
  {https://doi.org/10.1103/PhysRevA.82.022116} {\bibfield  {journal} {\bibinfo
  {journal} {Phys. Rev. A}\ }\textbf {\bibinfo {volume} {82}},\ \bibinfo
  {pages} {022116} (\bibinfo {year} {2010})}\BibitemShut {NoStop}%
\bibitem [{\citenamefont {Panahi}\ \emph {et~al.}(2025)\citenamefont {Panahi},
  \citenamefont {Ala{\~n}{\'o}n}, \citenamefont {Centeno}, \citenamefont
  {Costales}, \citenamefont {Mrini}, \citenamefont {Bhattacharyya},\ and\
  \citenamefont {Wolfe}}]{PAC+2025}%
  \BibitemOpen
  \bibfield  {author} {\bibinfo {author} {\bibfnamefont {Y.}~\bibnamefont
  {Panahi}}, \bibinfo {author} {\bibfnamefont {M.~C.}\ \bibnamefont
  {Ala{\~n}{\'o}n}}, \bibinfo {author} {\bibfnamefont {D.}~\bibnamefont
  {Centeno}}, \bibinfo {author} {\bibfnamefont {R.~J.}\ \bibnamefont
  {Costales}}, \bibinfo {author} {\bibfnamefont {L.}~\bibnamefont {Mrini}},
  \bibinfo {author} {\bibfnamefont {S.}~\bibnamefont {Bhattacharyya}},\ and\
  \bibinfo {author} {\bibfnamefont {E.}~\bibnamefont {Wolfe}},\ }\href@noop {}
  {\bibinfo {title} {Upper bounding {H}ilbert space dimensions which can
  realize all the quantum correlations}} (\bibinfo {year} {2025}),\ \Eprint
  {https://arxiv.org/abs/2505.20519} {arXiv:2505.20519} \BibitemShut {NoStop}%
\bibitem [{\citenamefont {Bancal}\ \emph {et~al.}(2010)\citenamefont {Bancal},
  \citenamefont {Gisin},\ and\ \citenamefont {Pironio}}]{Bancal_2010}%
  \BibitemOpen
  \bibfield  {author} {\bibinfo {author} {\bibfnamefont {J.-D.}\ \bibnamefont
  {Bancal}}, \bibinfo {author} {\bibfnamefont {N.}~\bibnamefont {Gisin}},\ and\
  \bibinfo {author} {\bibfnamefont {S.}~\bibnamefont {Pironio}},\ }\bibfield
  {title} {\bibinfo {title} {{L}ooking for symmetric {B}ell inequalities},\
  }\href {https://doi.org/10.1088/1751-8113/43/38/385303} {\bibfield  {journal}
  {\bibinfo  {journal} {J. Phys. A}\ }\textbf {\bibinfo {volume} {43}},\
  \bibinfo {pages} {385303} (\bibinfo {year} {2010})}\BibitemShut {NoStop}%
\bibitem [{\citenamefont {Bancal}\ \emph {et~al.}(2012)\citenamefont {Bancal},
  \citenamefont {Branciard}, \citenamefont {Brunner}, \citenamefont {Gisin},\
  and\ \citenamefont {Liang}}]{Bancal12}%
  \BibitemOpen
  \bibfield  {author} {\bibinfo {author} {\bibfnamefont {J.-D.}\ \bibnamefont
  {Bancal}}, \bibinfo {author} {\bibfnamefont {C.}~\bibnamefont {Branciard}},
  \bibinfo {author} {\bibfnamefont {N.}~\bibnamefont {Brunner}}, \bibinfo
  {author} {\bibfnamefont {N.}~\bibnamefont {Gisin}},\ and\ \bibinfo {author}
  {\bibfnamefont {Y.-C.}\ \bibnamefont {Liang}},\ }\bibfield  {title} {\bibinfo
  {title} {{A} framework for the study of symmetric full-correlation
  {B}ell-like inequalities},\ }\href
  {http://stacks.iop.org/1751-8121/45/i=12/a=125301} {\bibfield  {journal}
  {\bibinfo  {journal} {J. Phys. A}\ }\textbf {\bibinfo {volume} {45}},\
  \bibinfo {pages} {125301} (\bibinfo {year} {2012})}\BibitemShut {NoStop}%
\bibitem [{\citenamefont {Fadel}\ and\ \citenamefont {Tura}(2017)}]{FT17}%
  \BibitemOpen
  \bibfield  {author} {\bibinfo {author} {\bibfnamefont {M.}~\bibnamefont
  {Fadel}}\ and\ \bibinfo {author} {\bibfnamefont {J.}~\bibnamefont {Tura}},\
  }\bibfield  {title} {\bibinfo {title} {Bounding the set of classical
  correlations of a many-body system},\ }\href
  {https://doi.org/10.1103/PhysRevLett.119.230402} {\bibfield  {journal}
  {\bibinfo  {journal} {Phys. Rev. Lett.}\ }\textbf {\bibinfo {volume} {119}},\
  \bibinfo {pages} {230402} (\bibinfo {year} {2017})}\BibitemShut {NoStop}%
\bibitem [{\citenamefont {Aloy}\ \emph {et~al.}(2024)\citenamefont {Aloy},
  \citenamefont {M{\"u}ller-Rigat}, \citenamefont {Tura},\ and\ \citenamefont
  {Fadel}}]{Aloy2024}%
  \BibitemOpen
  \bibfield  {author} {\bibinfo {author} {\bibfnamefont {A.}~\bibnamefont
  {Aloy}}, \bibinfo {author} {\bibfnamefont {G.}~\bibnamefont
  {M{\"u}ller-Rigat}}, \bibinfo {author} {\bibfnamefont {J.}~\bibnamefont
  {Tura}},\ and\ \bibinfo {author} {\bibfnamefont {M.}~\bibnamefont {Fadel}},\
  }\bibfield  {title} {\bibinfo {title} {{D}eriving {T}hree-{O}utcome
  {P}ermutationally {I}nvariant {B}ell {I}nequalities},\ }\href
  {https://doi.org/10.3390/e26100816} {\bibfield  {journal} {\bibinfo
  {journal} {Entropy}\ }\textbf {\bibinfo {volume} {26}},\ \bibinfo {pages}
  {816} (\bibinfo {year} {2024})}\BibitemShut {NoStop}%
\bibitem [{\citenamefont {Moroder}\ \emph {et~al.}(2013)\citenamefont
  {Moroder}, \citenamefont {Bancal}, \citenamefont {Liang}, \citenamefont
  {Hofmann},\ and\ \citenamefont {G\"uhne}}]{Moroder13}%
  \BibitemOpen
  \bibfield  {author} {\bibinfo {author} {\bibfnamefont {T.}~\bibnamefont
  {Moroder}}, \bibinfo {author} {\bibfnamefont {J.-D.}\ \bibnamefont {Bancal}},
  \bibinfo {author} {\bibfnamefont {Y.-C.}\ \bibnamefont {Liang}}, \bibinfo
  {author} {\bibfnamefont {M.}~\bibnamefont {Hofmann}},\ and\ \bibinfo {author}
  {\bibfnamefont {O.}~\bibnamefont {G\"uhne}},\ }\bibfield  {title} {\bibinfo
  {title} {{D}evice-{I}ndependent {E}ntanglement {Q}uantification and {R}elated
  {A}pplications},\ }\href
  {https://link.aps.org/doi/10.1103/PhysRevLett.111.030501} {\bibfield
  {journal} {\bibinfo  {journal} {Phys. Rev. Lett.}\ }\textbf {\bibinfo
  {volume} {111}},\ \bibinfo {pages} {030501} (\bibinfo {year}
  {2013})}\BibitemShut {NoStop}%
\bibitem [{\citenamefont {Tavakoli}\ \emph {et~al.}(2024)\citenamefont
  {Tavakoli}, \citenamefont {Pozas-Kerstjens}, \citenamefont {Brown},\ and\
  \citenamefont {Ara\'ujo}}]{TPBA24}%
  \BibitemOpen
  \bibfield  {author} {\bibinfo {author} {\bibfnamefont {A.}~\bibnamefont
  {Tavakoli}}, \bibinfo {author} {\bibfnamefont {A.}~\bibnamefont
  {Pozas-Kerstjens}}, \bibinfo {author} {\bibfnamefont {P.}~\bibnamefont
  {Brown}},\ and\ \bibinfo {author} {\bibfnamefont {M.}~\bibnamefont
  {Ara\'ujo}},\ }\bibfield  {title} {\bibinfo {title} {Semidefinite programming
  relaxations for quantum correlations},\ }\href
  {https://doi.org/10.1103/RevModPhys.96.045006} {\bibfield  {journal}
  {\bibinfo  {journal} {Rev. Mod. Phys.}\ }\textbf {\bibinfo {volume} {96}},\
  \bibinfo {pages} {045006} (\bibinfo {year} {2024})}\BibitemShut {NoStop}%
\bibitem [{\citenamefont {Brunner}\ \emph {et~al.}(2008)\citenamefont
  {Brunner}, \citenamefont {Pironio}, \citenamefont {Acin}, \citenamefont
  {Gisin}, \citenamefont {M\'ethot},\ and\ \citenamefont
  {Scarani}}]{Brunner08}%
  \BibitemOpen
  \bibfield  {author} {\bibinfo {author} {\bibfnamefont {N.}~\bibnamefont
  {Brunner}}, \bibinfo {author} {\bibfnamefont {S.}~\bibnamefont {Pironio}},
  \bibinfo {author} {\bibfnamefont {A.}~\bibnamefont {Acin}}, \bibinfo {author}
  {\bibfnamefont {N.}~\bibnamefont {Gisin}}, \bibinfo {author} {\bibfnamefont
  {A.~A.}\ \bibnamefont {M\'ethot}},\ and\ \bibinfo {author} {\bibfnamefont
  {V.}~\bibnamefont {Scarani}},\ }\bibfield  {title} {\bibinfo {title}
  {{T}esting the {D}imension of {H}ilbert {S}paces},\ }\href
  {https://doi.org/10.1103/PhysRevLett.100.210503} {\bibfield  {journal}
  {\bibinfo  {journal} {Phys. Rev. Lett.}\ }\textbf {\bibinfo {volume} {100}},\
  \bibinfo {pages} {210503} (\bibinfo {year} {2008})}\BibitemShut {NoStop}%
\bibitem [{\citenamefont {Navascu\'es}\ \emph {et~al.}(2014)\citenamefont
  {Navascu\'es}, \citenamefont {de~la Torre},\ and\ \citenamefont
  {V\'ertesi}}]{DNPA14}%
  \BibitemOpen
  \bibfield  {author} {\bibinfo {author} {\bibfnamefont {M.}~\bibnamefont
  {Navascu\'es}}, \bibinfo {author} {\bibfnamefont {G.}~\bibnamefont {de~la
  Torre}},\ and\ \bibinfo {author} {\bibfnamefont {T.}~\bibnamefont
  {V\'ertesi}},\ }\bibfield  {title} {\bibinfo {title} {{C}haracterization of
  {Q}uantum {C}orrelations with {L}ocal {D}imension {C}onstraints and {I}ts
  {D}evice-{I}ndependent {A}pplications},\ }\href
  {https://doi.org/10.1103/PhysRevX.4.011011} {\bibfield  {journal} {\bibinfo
  {journal} {Phys. Rev. X}\ }\textbf {\bibinfo {volume} {4}},\ \bibinfo {pages}
  {011011} (\bibinfo {year} {2014})}\BibitemShut {NoStop}%
\bibitem [{\citenamefont {Navascu\'es}\ and\ \citenamefont
  {V\'ertesi}(2015)}]{DNPA15}%
  \BibitemOpen
  \bibfield  {author} {\bibinfo {author} {\bibfnamefont {M.}~\bibnamefont
  {Navascu\'es}}\ and\ \bibinfo {author} {\bibfnamefont {T.}~\bibnamefont
  {V\'ertesi}},\ }\bibfield  {title} {\bibinfo {title} {{B}ounding the {S}et of
  {F}inite {D}imensional {Q}uantum {C}orrelations},\ }\href
  {https://doi.org/10.1103/PhysRevLett.115.020501} {\bibfield  {journal}
  {\bibinfo  {journal} {Phys. Rev. Lett.}\ }\textbf {\bibinfo {volume} {115}},\
  \bibinfo {pages} {020501} (\bibinfo {year} {2015})}\BibitemShut {NoStop}%
\bibitem [{\citenamefont {Navascu\'es}\ \emph {et~al.}(2015)\citenamefont
  {Navascu\'es}, \citenamefont {Feix}, \citenamefont {Ara\'ujo},\ and\
  \citenamefont {V\'ertesi}}]{NFA15}%
  \BibitemOpen
  \bibfield  {author} {\bibinfo {author} {\bibfnamefont {M.}~\bibnamefont
  {Navascu\'es}}, \bibinfo {author} {\bibfnamefont {A.}~\bibnamefont {Feix}},
  \bibinfo {author} {\bibfnamefont {M.}~\bibnamefont {Ara\'ujo}},\ and\
  \bibinfo {author} {\bibfnamefont {T.}~\bibnamefont {V\'ertesi}},\ }\bibfield
  {title} {\bibinfo {title} {{C}haracterizing finite-dimensional quantum
  behavior},\ }\href {https://doi.org/10.1103/PhysRevA.92.042117} {\bibfield
  {journal} {\bibinfo  {journal} {Phys. Rev. A}\ }\textbf {\bibinfo {volume}
  {92}},\ \bibinfo {pages} {042117} (\bibinfo {year} {2015})}\BibitemShut
  {NoStop}%
\bibitem [{\citenamefont {Navascu\'es}\ \emph {et~al.}(2007)\citenamefont
  {Navascu\'es}, \citenamefont {Pironio},\ and\ \citenamefont
  {Ac\'{\i}n}}]{NPA}%
  \BibitemOpen
  \bibfield  {author} {\bibinfo {author} {\bibfnamefont {M.}~\bibnamefont
  {Navascu\'es}}, \bibinfo {author} {\bibfnamefont {S.}~\bibnamefont
  {Pironio}},\ and\ \bibinfo {author} {\bibfnamefont {A.}~\bibnamefont
  {Ac\'{\i}n}},\ }\bibfield  {title} {\bibinfo {title} {{B}ounding the {S}et of
  {Q}uantum {C}orrelations},\ }\href
  {https://doi.org/10.1103/PhysRevLett.98.010401} {\bibfield  {journal}
  {\bibinfo  {journal} {Phys. Rev. Lett.}\ }\textbf {\bibinfo {volume} {98}},\
  \bibinfo {pages} {010401} (\bibinfo {year} {2007})}\BibitemShut {NoStop}%
\bibitem [{\citenamefont {Navascu{\'e}s}\ \emph {et~al.}(2008)\citenamefont
  {Navascu{\'e}s}, \citenamefont {Pironio},\ and\ \citenamefont
  {Ac{\'\i}n}}]{NPA2008}%
  \BibitemOpen
  \bibfield  {author} {\bibinfo {author} {\bibfnamefont {M.}~\bibnamefont
  {Navascu{\'e}s}}, \bibinfo {author} {\bibfnamefont {S.}~\bibnamefont
  {Pironio}},\ and\ \bibinfo {author} {\bibfnamefont {A.}~\bibnamefont
  {Ac{\'\i}n}},\ }\bibfield  {title} {\bibinfo {title} {{A} convergent
  hierarchy of semidefinite programs characterizing the set of quantum
  correlations},\ }\href {http://stacks.iop.org/1367-2630/10/i=7/a=073013}
  {\bibfield  {journal} {\bibinfo  {journal} {New J. Phys.}\ }\textbf {\bibinfo
  {volume} {10}},\ \bibinfo {pages} {073013} (\bibinfo {year}
  {2008})}\BibitemShut {NoStop}%
\bibitem [{\citenamefont {Doherty}\ \emph {et~al.}(2008)\citenamefont
  {Doherty}, \citenamefont {Liang}, \citenamefont {Toner},\ and\ \citenamefont
  {Wehner}}]{Doherty08}%
  \BibitemOpen
  \bibfield  {author} {\bibinfo {author} {\bibfnamefont {A.~C.}\ \bibnamefont
  {Doherty}}, \bibinfo {author} {\bibfnamefont {Y.-C.}\ \bibnamefont {Liang}},
  \bibinfo {author} {\bibfnamefont {B.}~\bibnamefont {Toner}},\ and\ \bibinfo
  {author} {\bibfnamefont {S.}~\bibnamefont {Wehner}},\ }\bibfield  {title}
  {\bibinfo {title} {{T}he {Q}uantum {M}oment {P}roblem and {B}ounds on
  {E}ntangled {M}ulti-prover {G}ames},\ }in\ \href
  {https://doi.org/10.1109/CCC.2008.26} {\emph {\bibinfo {booktitle} {23rd
  {Annu. {I}{E}{E}{E} {C}onf. on {C}omput. {C}omp, 2008, {C}{C}{C}'08}}}}\
  (\bibinfo {address} {Los Alamitos, CA},\ \bibinfo {year} {2008})\ pp.\
  \bibinfo {pages} {199--210}\BibitemShut {NoStop}%
\bibitem [{\citenamefont {Rosset}(2018)}]{rosset2018symdpoly}%
  \BibitemOpen
  \bibfield  {author} {\bibinfo {author} {\bibfnamefont {D.}~\bibnamefont
  {Rosset}},\ }\href@noop {} {\bibinfo {title} {{S}ymdpoly: symmetry-adapted
  moment relaxations for noncommutative polynomial optimization}} (\bibinfo
  {year} {2018}),\ \Eprint {https://arxiv.org/abs/1808.09598}
  {arXiv:1808.09598} \BibitemShut {NoStop}%
\bibitem [{\citenamefont {Tavakoli}\ \emph {et~al.}(2019)\citenamefont
  {Tavakoli}, \citenamefont {Rosset},\ and\ \citenamefont {Renou}}]{TRR19}%
  \BibitemOpen
  \bibfield  {author} {\bibinfo {author} {\bibfnamefont {A.}~\bibnamefont
  {Tavakoli}}, \bibinfo {author} {\bibfnamefont {D.}~\bibnamefont {Rosset}},\
  and\ \bibinfo {author} {\bibfnamefont {M.-O.}\ \bibnamefont {Renou}},\
  }\bibfield  {title} {\bibinfo {title} {{E}nabling {C}omputation of
  {C}orrelation {B}ounds for {F}inite-{D}imensional {Q}uantum {S}ystems via
  {S}ymmetrization},\ }\href {https://doi.org/10.1103/PhysRevLett.122.070501}
  {\bibfield  {journal} {\bibinfo  {journal} {Phys. Rev. Lett.}\ }\textbf
  {\bibinfo {volume} {122}},\ \bibinfo {pages} {070501} (\bibinfo {year}
  {2019})}\BibitemShut {NoStop}%
\bibitem [{\citenamefont {Ioannou}\ and\ \citenamefont
  {Rosset}(2021)}]{Ioannou:2021qmh}%
  \BibitemOpen
  \bibfield  {author} {\bibinfo {author} {\bibfnamefont {M.}~\bibnamefont
  {Ioannou}}\ and\ \bibinfo {author} {\bibfnamefont {D.}~\bibnamefont
  {Rosset}},\ }\href@noop {} {\bibinfo {title} {{N}oncommutative polynomial
  optimization under symmetry}} (\bibinfo {year} {2021}),\ \Eprint
  {https://arxiv.org/abs/2112.10803} {arXiv:2112.10803} \BibitemShut {NoStop}%
\bibitem [{\citenamefont {Clauser}\ \emph {et~al.}(1969)\citenamefont
  {Clauser}, \citenamefont {Horne}, \citenamefont {Shimony},\ and\
  \citenamefont {Holt}}]{Clauser69}%
  \BibitemOpen
  \bibfield  {author} {\bibinfo {author} {\bibfnamefont {J.~F.}\ \bibnamefont
  {Clauser}}, \bibinfo {author} {\bibfnamefont {M.~A.}\ \bibnamefont {Horne}},
  \bibinfo {author} {\bibfnamefont {A.}~\bibnamefont {Shimony}},\ and\ \bibinfo
  {author} {\bibfnamefont {R.~A.}\ \bibnamefont {Holt}},\ }\bibfield  {title}
  {\bibinfo {title} {{P}roposed {E}xperiment to {T}est {L}ocal
  {H}idden-{V}ariable {T}heories},\ }\href
  {https://doi.org/10.1103/PhysRevLett.23.880} {\bibfield  {journal} {\bibinfo
  {journal} {Phys. Rev. Lett.}\ }\textbf {\bibinfo {volume} {23}},\ \bibinfo
  {pages} {880} (\bibinfo {year} {1969})}\BibitemShut {NoStop}%
\bibitem [{\citenamefont {Vidal}\ and\ \citenamefont {Werner}(2002)}]{VW02}%
  \BibitemOpen
  \bibfield  {author} {\bibinfo {author} {\bibfnamefont {G.}~\bibnamefont
  {Vidal}}\ and\ \bibinfo {author} {\bibfnamefont {R.~F.}\ \bibnamefont
  {Werner}},\ }\bibfield  {title} {\bibinfo {title} {Computable measure of
  entanglement},\ }\href {https://doi.org/10.1103/PhysRevA.65.032314}
  {\bibfield  {journal} {\bibinfo  {journal} {Phys. Rev. A}\ }\textbf {\bibinfo
  {volume} {65}},\ \bibinfo {pages} {032314} (\bibinfo {year}
  {2002})}\BibitemShut {NoStop}%
\bibitem [{\citenamefont {Collins}\ and\ \citenamefont
  {Gisin}(2004)}]{Collins04}%
  \BibitemOpen
  \bibfield  {author} {\bibinfo {author} {\bibfnamefont {D.}~\bibnamefont
  {Collins}}\ and\ \bibinfo {author} {\bibfnamefont {N.}~\bibnamefont
  {Gisin}},\ }\bibfield  {title} {\bibinfo {title} {{A} relevant two qubit
  {Bell} inequality inequivalent to the {CHSH} inequality},\ }\href
  {https://doi.org/10.1088/0305-4470/37/5/021} {\bibfield  {journal} {\bibinfo
  {journal} {J. Phys. A}\ }\textbf {\bibinfo {volume} {37}},\ \bibinfo {pages}
  {1775} (\bibinfo {year} {2004})}\BibitemShut {NoStop}%
\bibitem [{\citenamefont {Cruzeiro}\ and\ \citenamefont
  {Gisin}(2019)}]{Zambrini19_4422}%
  \BibitemOpen
  \bibfield  {author} {\bibinfo {author} {\bibfnamefont {E.~Z.}\ \bibnamefont
  {Cruzeiro}}\ and\ \bibinfo {author} {\bibfnamefont {N.}~\bibnamefont
  {Gisin}},\ }\bibfield  {title} {\bibinfo {title} {{C}omplete list of tight
  {B}ell inequalities for two parties with four binary settings},\ }\href
  {https://doi.org/10.1103/PhysRevA.99.022104} {\bibfield  {journal} {\bibinfo
  {journal} {Phys. Rev. A}\ }\textbf {\bibinfo {volume} {99}},\ \bibinfo
  {pages} {022104} (\bibinfo {year} {2019})}\BibitemShut {NoStop}%
\bibitem [{\citenamefont {Grandjean}\ \emph {et~al.}(2012)\citenamefont
  {Grandjean}, \citenamefont {Liang}, \citenamefont {Bancal}, \citenamefont
  {Brunner},\ and\ \citenamefont {Gisin}}]{Grandjean2012}%
  \BibitemOpen
  \bibfield  {author} {\bibinfo {author} {\bibfnamefont {B.}~\bibnamefont
  {Grandjean}}, \bibinfo {author} {\bibfnamefont {Y.-C.}\ \bibnamefont
  {Liang}}, \bibinfo {author} {\bibfnamefont {J.-D.}\ \bibnamefont {Bancal}},
  \bibinfo {author} {\bibfnamefont {N.}~\bibnamefont {Brunner}},\ and\ \bibinfo
  {author} {\bibfnamefont {N.}~\bibnamefont {Gisin}},\ }\bibfield  {title}
  {\bibinfo {title} {{B}ell inequalities for three systems and arbitrarily many
  measurement outcomes},\ }\href {https://doi.org/10.1103/PhysRevA.85.052113}
  {\bibfield  {journal} {\bibinfo  {journal} {Phys. Rev. A}\ }\textbf {\bibinfo
  {volume} {85}},\ \bibinfo {pages} {052113} (\bibinfo {year}
  {2012})}\BibitemShut {NoStop}%
\bibitem [{\citenamefont {Paulsen}\ \emph {et~al.}(2016)\citenamefont
  {Paulsen}, \citenamefont {Severini}, \citenamefont {Stahlke}, \citenamefont
  {Todorov},\ and\ \citenamefont {Winter}}]{Paulsen16_ChromaticNumber}%
  \BibitemOpen
  \bibfield  {author} {\bibinfo {author} {\bibfnamefont {V.~I.}\ \bibnamefont
  {Paulsen}}, \bibinfo {author} {\bibfnamefont {S.}~\bibnamefont {Severini}},
  \bibinfo {author} {\bibfnamefont {D.}~\bibnamefont {Stahlke}}, \bibinfo
  {author} {\bibfnamefont {I.~G.}\ \bibnamefont {Todorov}},\ and\ \bibinfo
  {author} {\bibfnamefont {A.}~\bibnamefont {Winter}},\ }\bibfield  {title}
  {\bibinfo {title} {{E}stimating quantum chromatic numbers},\ }\href
  {https://doi.org/https://doi.org/10.1016/j.jfa.2016.01.010} {\bibfield
  {journal} {\bibinfo  {journal} {J. Funct. Anal.}\ }\textbf {\bibinfo {volume}
  {270}},\ \bibinfo {pages} {2188} (\bibinfo {year} {2016})}\BibitemShut
  {NoStop}%
\bibitem [{\citenamefont {Rodrigues}\ and\ \citenamefont
  {Lackey}(2017)}]{RL17}%
  \BibitemOpen
  \bibfield  {author} {\bibinfo {author} {\bibfnamefont {N.}~\bibnamefont
  {Rodrigues}}\ and\ \bibinfo {author} {\bibfnamefont {B.}~\bibnamefont
  {Lackey}},\ }\href@noop {} {\bibinfo {title} {Nonlocal games, synchronous
  correlations, and {Bell} inequalities}} (\bibinfo {year} {2017}),\ \Eprint
  {https://arxiv.org/abs/1707.06200} {arXiv:1707.06200} \BibitemShut {NoStop}%
\bibitem [{\citenamefont {Cleve}\ \emph {et~al.}(2004)\citenamefont {Cleve},
  \citenamefont {Hoyer}, \citenamefont {Toner},\ and\ \citenamefont
  {Watrous}}]{CLeve2004}%
  \BibitemOpen
  \bibfield  {author} {\bibinfo {author} {\bibfnamefont {R.}~\bibnamefont
  {Cleve}}, \bibinfo {author} {\bibfnamefont {P.}~\bibnamefont {Hoyer}},
  \bibinfo {author} {\bibfnamefont {B.}~\bibnamefont {Toner}},\ and\ \bibinfo
  {author} {\bibfnamefont {J.}~\bibnamefont {Watrous}},\ }\bibfield  {title}
  {\bibinfo {title} {{C}onsequences and limits of nonlocal strategies},\ }in\
  \href@noop {} {\emph {\bibinfo {booktitle} {{P}roceedings. 19th {I}{E}{E}{E}
  {A}nnual {C}onference on {C}omputational {C}omplexity, 2004.}}}\ (\bibinfo
  {year} {2004})\ pp.\ \bibinfo {pages} {236--249}\BibitemShut {NoStop}%
\bibitem [{\citenamefont {Cirel'son}(1980)}]{Tsirelson1980}%
  \BibitemOpen
  \bibfield  {author} {\bibinfo {author} {\bibfnamefont {B.~S.}\ \bibnamefont
  {Cirel'son}},\ }\bibfield  {title} {\bibinfo {title} {{Q}uantum
  generalizations of {B}ell's inequality},\ }\href
  {https://doi.org/10.1007/BF00417500} {\bibfield  {journal} {\bibinfo
  {journal} {Lett. Math. Phys.}\ }\textbf {\bibinfo {volume} {4}},\ \bibinfo
  {pages} {93} (\bibinfo {year} {1980})}\BibitemShut {NoStop}%
\bibitem [{\citenamefont {Harris}\ and\ \citenamefont
  {Pandey}(2016)}]{HP16Notes}%
  \BibitemOpen
  \bibfield  {author} {\bibinfo {author} {\bibfnamefont {S.~J.}\ \bibnamefont
  {Harris}}\ and\ \bibinfo {author} {\bibfnamefont {S.~K.}\ \bibnamefont
  {Pandey}},\ }\href
  {https://www.math.uwaterloo.ca/~vpaulsen/EntanglementAndNonlocality_LectureNotes_7.pdf#page=72.45}
  {\bibinfo {title} {{E}ntanglement and {N}on-locality {P}{M}{A}{T}{H}
  990/{Q}{I}{C} 890}} (\bibinfo {year} {2016})\BibitemShut {NoStop}%
\bibitem [{\citenamefont {Renner}(2008)}]{Renner06_SecurityQKD}%
  \BibitemOpen
  \bibfield  {author} {\bibinfo {author} {\bibfnamefont {R.}~\bibnamefont
  {Renner}},\ }\bibfield  {title} {\bibinfo {title} {{S}ecurity of {Q}uantum
  {K}ey {D}istribution},\ }\href {https://doi.org/10.1142/S0219749908003256}
  {\bibfield  {journal} {\bibinfo  {journal} {Int. J. Quantum Inf.}\ }\textbf
  {\bibinfo {volume} {6}},\ \bibinfo {pages} {1} (\bibinfo {year}
  {2008})}\BibitemShut {NoStop}%
\bibitem [{\citenamefont {Tura}\ \emph {et~al.}(2014)\citenamefont {Tura},
  \citenamefont {B~Sainz}, \citenamefont {V{\'e}rtesi}, \citenamefont
  {Ac{\'\i}n}, \citenamefont {Lewenstein},\ and\ \citenamefont
  {Augusiak}}]{Tura_2014}%
  \BibitemOpen
  \bibfield  {author} {\bibinfo {author} {\bibfnamefont {J.}~\bibnamefont
  {Tura}}, \bibinfo {author} {\bibfnamefont {A.}~\bibnamefont {B~Sainz}},
  \bibinfo {author} {\bibfnamefont {T.}~\bibnamefont {V{\'e}rtesi}}, \bibinfo
  {author} {\bibfnamefont {A.}~\bibnamefont {Ac{\'\i}n}}, \bibinfo {author}
  {\bibfnamefont {M.}~\bibnamefont {Lewenstein}},\ and\ \bibinfo {author}
  {\bibfnamefont {R.}~\bibnamefont {Augusiak}},\ }\bibfield  {title} {\bibinfo
  {title} {{T}ranslationally invariant multipartite {B}ell inequalities
  involving only two-body correlators},\ }\href
  {https://doi.org/10.1088/1751-8113/47/42/424024} {\bibfield  {journal}
  {\bibinfo  {journal} {J. Phys. A}\ }\textbf {\bibinfo {volume} {47}},\
  \bibinfo {pages} {424024} (\bibinfo {year} {2014})}\BibitemShut {NoStop}%
\bibitem [{\citenamefont {Fine}(1982)}]{Fine:PRL:1982}%
  \BibitemOpen
  \bibfield  {author} {\bibinfo {author} {\bibfnamefont {A.}~\bibnamefont
  {Fine}},\ }\bibfield  {title} {\bibinfo {title} {{H}idden {V}ariables,
  {J}oint {P}robability, and the {Bell} inequalities},\ }\href
  {https://doi.org/10.1103/PhysRevLett.48.291} {\bibfield  {journal} {\bibinfo
  {journal} {Phys. Rev. Lett.}\ }\textbf {\bibinfo {volume} {48}},\ \bibinfo
  {pages} {291} (\bibinfo {year} {1982})}\BibitemShut {NoStop}%
\bibitem [{\citenamefont {Wu}\ \emph {et~al.}(2016)\citenamefont {Wu},
  \citenamefont {Bancal}, \citenamefont {McKague},\ and\ \citenamefont
  {Scarani}}]{Xingyao16}%
  \BibitemOpen
  \bibfield  {author} {\bibinfo {author} {\bibfnamefont {X.}~\bibnamefont
  {Wu}}, \bibinfo {author} {\bibfnamefont {J.-D.}\ \bibnamefont {Bancal}},
  \bibinfo {author} {\bibfnamefont {M.}~\bibnamefont {McKague}},\ and\ \bibinfo
  {author} {\bibfnamefont {V.}~\bibnamefont {Scarani}},\ }\bibfield  {title}
  {\bibinfo {title} {{D}evice-independent parallel self-testing of two
  singlets},\ }\href {https://doi.org/10.1103/PhysRevA.93.062121} {\bibfield
  {journal} {\bibinfo  {journal} {Phys. Rev. A}\ }\textbf {\bibinfo {volume}
  {93}},\ \bibinfo {pages} {062121} (\bibinfo {year} {2016})}\BibitemShut
  {NoStop}%
\bibitem [{\citenamefont {Kaszlikowski}\ \emph {et~al.}(2002)\citenamefont
  {Kaszlikowski}, \citenamefont {Kwek}, \citenamefont {Chen}, \citenamefont
  {\ifmmode~\dot{Z}\else \.{Z}\fi{}ukowski},\ and\ \citenamefont
  {Oh}}]{Kaszlikowski02}%
  \BibitemOpen
  \bibfield  {author} {\bibinfo {author} {\bibfnamefont {D.}~\bibnamefont
  {Kaszlikowski}}, \bibinfo {author} {\bibfnamefont {L.~C.}\ \bibnamefont
  {Kwek}}, \bibinfo {author} {\bibfnamefont {J.-L.}\ \bibnamefont {Chen}},
  \bibinfo {author} {\bibfnamefont {M.}~\bibnamefont {\ifmmode~\dot{Z}\else
  \.{Z}\fi{}ukowski}},\ and\ \bibinfo {author} {\bibfnamefont {C.~H.}\
  \bibnamefont {Oh}},\ }\bibfield  {title} {\bibinfo {title} {{C}lauser-{H}orne
  inequality for three-state systems},\ }\href
  {https://doi.org/10.1103/PhysRevA.65.032118} {\bibfield  {journal} {\bibinfo
  {journal} {Phys. Rev. A}\ }\textbf {\bibinfo {volume} {65}},\ \bibinfo
  {pages} {032118} (\bibinfo {year} {2002})}\BibitemShut {NoStop}%
\bibitem [{\citenamefont {Masanes}(2003)}]{Masanes02_Tight_CGLMP}%
  \BibitemOpen
  \bibfield  {author} {\bibinfo {author} {\bibfnamefont {L.}~\bibnamefont
  {Masanes}},\ }\bibfield  {title} {\bibinfo {title} {{T}ight {B}ell inequality
  for d-outcome measurements correlations},\ }\href
  {https://doi.org/https://doi.org/10.26421/QIC3.4-4} {\bibfield  {journal}
  {\bibinfo  {journal} {Quantum Info. Comput.}\ }\textbf {\bibinfo {volume}
  {3}},\ \bibinfo {pages} {345} (\bibinfo {year} {2003})}\BibitemShut {NoStop}%
\bibitem [{\citenamefont {Pironio}(2005)}]{Pironio_Lifting_05}%
  \BibitemOpen
  \bibfield  {author} {\bibinfo {author} {\bibfnamefont {S.}~\bibnamefont
  {Pironio}},\ }\bibfield  {title} {\bibinfo {title} {{L}ifting {B}ell
  inequalities},\ }\href {https://doi.org/10.1063/1.1928727} {\bibfield
  {journal} {\bibinfo  {journal} {J. Math. Phys.}\ }\textbf {\bibinfo {volume}
  {46}},\ \bibinfo {pages} {062112} (\bibinfo {year} {2005})}\BibitemShut
  {NoStop}%
\bibitem [{\citenamefont {Jebarathinam}\ \emph {et~al.}(2019)\citenamefont
  {Jebarathinam}, \citenamefont {Hung}, \citenamefont {Chen},\ and\
  \citenamefont {Liang}}]{Jeba:PRR:2019}%
  \BibitemOpen
  \bibfield  {author} {\bibinfo {author} {\bibfnamefont {C.}~\bibnamefont
  {Jebarathinam}}, \bibinfo {author} {\bibfnamefont {J.-C.}\ \bibnamefont
  {Hung}}, \bibinfo {author} {\bibfnamefont {S.-L.}\ \bibnamefont {Chen}},\
  and\ \bibinfo {author} {\bibfnamefont {Y.-C.}\ \bibnamefont {Liang}},\
  }\bibfield  {title} {\bibinfo {title} {{M}aximal violation of a broad class
  of {Bell} inequalities and its implication on self-testing},\ }\href
  {https://doi.org/10.1103/PhysRevResearch.1.033073} {\bibfield  {journal}
  {\bibinfo  {journal} {Phys. Rev. Research}\ }\textbf {\bibinfo {volume}
  {1}},\ \bibinfo {pages} {033073} (\bibinfo {year} {2019})}\BibitemShut
  {NoStop}%
\bibitem [{\citenamefont {Liang}(2008)}]{Liang:PhDthesis}%
  \BibitemOpen
  \bibfield  {author} {\bibinfo {author} {\bibfnamefont {Y.-C.}\ \bibnamefont
  {Liang}},\ }\emph {\bibinfo {title} {{C}orrelations, {B}ell {I}nequality
  {V}iolation \& {Q}uantum {E}ntanglement}},\ \href
  {http://arxiv.org/abs/0810.5400} {Ph.D. thesis},\ \bibinfo  {school}
  {University of Queensland} (\bibinfo {year} {2008})\BibitemShut {NoStop}%
\bibitem [{\citenamefont {Popescu}\ and\ \citenamefont
  {Rohrlich}(1994)}]{Popescu:FP:1994}%
  \BibitemOpen
  \bibfield  {author} {\bibinfo {author} {\bibfnamefont {S.}~\bibnamefont
  {Popescu}}\ and\ \bibinfo {author} {\bibfnamefont {D.}~\bibnamefont
  {Rohrlich}},\ }\bibfield  {title} {\bibinfo {title} {{Q}uantum nonlocality as
  an axiom},\ }\href {https://doi.org/10.1007/BF02058098} {\bibfield  {journal}
  {\bibinfo  {journal} {Found. Phys.}\ }\textbf {\bibinfo {volume} {24}},\
  \bibinfo {pages} {379} (\bibinfo {year} {1994})}\BibitemShut {NoStop}%
\bibitem [{\citenamefont {Barrett}\ \emph
  {et~al.}(2005{\natexlab{b}})\citenamefont {Barrett}, \citenamefont {Linden},
  \citenamefont {Massar}, \citenamefont {Pironio}, \citenamefont {Popescu},\
  and\ \citenamefont {Roberts}}]{Barrett_05}%
  \BibitemOpen
  \bibfield  {author} {\bibinfo {author} {\bibfnamefont {J.}~\bibnamefont
  {Barrett}}, \bibinfo {author} {\bibfnamefont {N.}~\bibnamefont {Linden}},
  \bibinfo {author} {\bibfnamefont {S.}~\bibnamefont {Massar}}, \bibinfo
  {author} {\bibfnamefont {S.}~\bibnamefont {Pironio}}, \bibinfo {author}
  {\bibfnamefont {S.}~\bibnamefont {Popescu}},\ and\ \bibinfo {author}
  {\bibfnamefont {D.}~\bibnamefont {Roberts}},\ }\bibfield  {title} {\bibinfo
  {title} {{N}onlocal correlations as an information-theoretic resource},\
  }\href {https://doi.org/10.1103/PhysRevA.71.022101} {\bibfield  {journal}
  {\bibinfo  {journal} {Phys. Rev. A}\ }\textbf {\bibinfo {volume} {71}},\
  \bibinfo {pages} {022101} (\bibinfo {year} {2005}{\natexlab{b}})}\BibitemShut
  {NoStop}%
\bibitem [{\citenamefont {Braunstein}\ \emph {et~al.}(1992)\citenamefont
  {Braunstein}, \citenamefont {Mann},\ and\ \citenamefont
  {Revzen}}]{Braunstein1992}%
  \BibitemOpen
  \bibfield  {author} {\bibinfo {author} {\bibfnamefont {S.~L.}\ \bibnamefont
  {Braunstein}}, \bibinfo {author} {\bibfnamefont {A.}~\bibnamefont {Mann}},\
  and\ \bibinfo {author} {\bibfnamefont {M.}~\bibnamefont {Revzen}},\
  }\bibfield  {title} {\bibinfo {title} {{M}aximal violation of {Bell}
  inequalities for mixed states},\ }\href
  {https://doi.org/10.1103/PhysRevLett.68.3259} {\bibfield  {journal} {\bibinfo
   {journal} {Phys. Rev. Lett.}\ }\textbf {\bibinfo {volume} {68}},\ \bibinfo
  {pages} {3259} (\bibinfo {year} {1992})}\BibitemShut {NoStop}%
\bibitem [{\citenamefont {Sarkar}\ \emph {et~al.}(2021)\citenamefont {Sarkar},
  \citenamefont {Saha}, \citenamefont {Kaniewski},\ and\ \citenamefont
  {Augusiak}}]{Sarkar:2021aa}%
  \BibitemOpen
  \bibfield  {author} {\bibinfo {author} {\bibfnamefont {S.}~\bibnamefont
  {Sarkar}}, \bibinfo {author} {\bibfnamefont {D.}~\bibnamefont {Saha}},
  \bibinfo {author} {\bibfnamefont {J.}~\bibnamefont {Kaniewski}},\ and\
  \bibinfo {author} {\bibfnamefont {R.}~\bibnamefont {Augusiak}},\ }\bibfield
  {title} {\bibinfo {title} {Self-testing quantum systems of arbitrary local
  dimension with minimal number of measurements},\ }\href
  {https://doi.org/10.1038/s41534-021-00490-3} {\bibfield  {journal} {\bibinfo
  {journal} {Npj Quantum Inf.}\ }\textbf {\bibinfo {volume} {7}},\ \bibinfo
  {pages} {151} (\bibinfo {year} {2021})}\BibitemShut {NoStop}%
\bibitem [{\citenamefont {Froissart}(1981)}]{Froissart1981}%
  \BibitemOpen
  \bibfield  {author} {\bibinfo {author} {\bibfnamefont {M.}~\bibnamefont
  {Froissart}},\ }\bibfield  {title} {\bibinfo {title} {{C}onstructive
  {G}eneralization of {B}ell's {I}nequalities},\ }\href
  {https://doi.org/10.1007/BF02903286} {\bibfield  {journal} {\bibinfo
  {journal} {Nuov. Cim. B}\ }\textbf {\bibinfo {volume} {64}},\ \bibinfo
  {pages} {241} (\bibinfo {year} {1981})}\BibitemShut {NoStop}%
\bibitem [{\citenamefont {Pitowsky}\ and\ \citenamefont {Svozil}(2001)}]{PS01}%
  \BibitemOpen
  \bibfield  {author} {\bibinfo {author} {\bibfnamefont {I.}~\bibnamefont
  {Pitowsky}}\ and\ \bibinfo {author} {\bibfnamefont {K.}~\bibnamefont
  {Svozil}},\ }\bibfield  {title} {\bibinfo {title} {Optimal tests of quantum
  nonlocality},\ }\href {https://doi.org/10.1103/PhysRevA.64.014102} {\bibfield
   {journal} {\bibinfo  {journal} {Phys. Rev. A}\ }\textbf {\bibinfo {volume}
  {64}},\ \bibinfo {pages} {014102} (\bibinfo {year} {2001})}\BibitemShut
  {NoStop}%
\bibitem [{\citenamefont {{\'S}liwa}(2003)}]{Sliwa03}%
  \BibitemOpen
  \bibfield  {author} {\bibinfo {author} {\bibfnamefont {C.}~\bibnamefont
  {{\'S}liwa}},\ }\bibfield  {title} {\bibinfo {title} {{S}ymmetries of the
  {B}ell correlation inequalities},\ }\href
  {https://doi.org/https://doi.org/10.1016/S0375-9601(03)01115-0} {\bibfield
  {journal} {\bibinfo  {journal} {Phys. Lett. A}\ }\textbf {\bibinfo {volume}
  {317}},\ \bibinfo {pages} {165} (\bibinfo {year} {2003})}\BibitemShut
  {NoStop}%
\bibitem [{\citenamefont {Brunner}\ and\ \citenamefont
  {Gisin}(2008)}]{Brunner2008}%
  \BibitemOpen
  \bibfield  {author} {\bibinfo {author} {\bibfnamefont {N.}~\bibnamefont
  {Brunner}}\ and\ \bibinfo {author} {\bibfnamefont {N.}~\bibnamefont
  {Gisin}},\ }\bibfield  {title} {\bibinfo {title} {{P}artial list of bipartite
  {B}ell inequalities with four binary settings},\ }\href
  {https://doi.org/https://doi.org/10.1016/j.physleta.2008.01.052} {\bibfield
  {journal} {\bibinfo  {journal} {Phys. Lett. A}\ }\textbf {\bibinfo {volume}
  {372}},\ \bibinfo {pages} {3162} (\bibinfo {year} {2008})}\BibitemShut
  {NoStop}%
\bibitem [{\citenamefont {Goh}\ \emph {et~al.}(2018)\citenamefont {Goh},
  \citenamefont {Kaniewski}, \citenamefont {Wolfe}, \citenamefont {V\'ertesi},
  \citenamefont {Wu}, \citenamefont {Cai}, \citenamefont {Liang},\ and\
  \citenamefont {Scarani}}]{Goh2018}%
  \BibitemOpen
  \bibfield  {author} {\bibinfo {author} {\bibfnamefont {K.~T.}\ \bibnamefont
  {Goh}}, \bibinfo {author} {\bibfnamefont {J.}~\bibnamefont {Kaniewski}},
  \bibinfo {author} {\bibfnamefont {E.}~\bibnamefont {Wolfe}}, \bibinfo
  {author} {\bibfnamefont {T.}~\bibnamefont {V\'ertesi}}, \bibinfo {author}
  {\bibfnamefont {X.}~\bibnamefont {Wu}}, \bibinfo {author} {\bibfnamefont
  {Y.}~\bibnamefont {Cai}}, \bibinfo {author} {\bibfnamefont {Y.-C.}\
  \bibnamefont {Liang}},\ and\ \bibinfo {author} {\bibfnamefont
  {V.}~\bibnamefont {Scarani}},\ }\bibfield  {title} {\bibinfo {title}
  {{G}eometry of the set of quantum correlations},\ }\href
  {https://doi.org/10.1103/PhysRevA.97.022104} {\bibfield  {journal} {\bibinfo
  {journal} {Phys. Rev. A}\ }\textbf {\bibinfo {volume} {97}},\ \bibinfo
  {pages} {022104} (\bibinfo {year} {2018})}\BibitemShut {NoStop}%
\bibitem [{\citenamefont {Liang}\ and\ \citenamefont
  {Doherty}(2007)}]{Liang:PRA:2007}%
  \BibitemOpen
  \bibfield  {author} {\bibinfo {author} {\bibfnamefont {Y.-C.}\ \bibnamefont
  {Liang}}\ and\ \bibinfo {author} {\bibfnamefont {A.~C.}\ \bibnamefont
  {Doherty}},\ }\bibfield  {title} {\bibinfo {title} {{B}ounds on quantum
  correlations in {Bell}-inequality experiments},\ }\href
  {https://doi.org/10.1103/PhysRevA.75.042103} {\bibfield  {journal} {\bibinfo
  {journal} {Phys. Rev. A}\ }\textbf {\bibinfo {volume} {75}},\ \bibinfo
  {pages} {042103} (\bibinfo {year} {2007})}\BibitemShut {NoStop}%
\bibitem [{\citenamefont {Deza}\ and\ \citenamefont
  {Dutour~Sikiri{\'c}}(2016)}]{Deza:2016aa}%
  \BibitemOpen
  \bibfield  {author} {\bibinfo {author} {\bibfnamefont {M.}~\bibnamefont
  {Deza}}\ and\ \bibinfo {author} {\bibfnamefont {M.}~\bibnamefont
  {Dutour~Sikiri{\'c}}},\ }\bibfield  {title} {\bibinfo {title} {Enumeration of
  the facets of cut polytopes over some highly symmetric graphs},\ }\href
  {https://doi.org/https://doi.org/10.1111/itor.12194} {\bibfield  {journal}
  {\bibinfo  {journal} {Intl. Trans. in Op. Res.}\ }\textbf {\bibinfo {volume}
  {23}},\ \bibinfo {pages} {853} (\bibinfo {year} {2016})}\BibitemShut
  {NoStop}%
\bibitem [{\citenamefont {Oudot}\ \emph {et~al.}(2019)\citenamefont {Oudot},
  \citenamefont {Bancal}, \citenamefont {Sekatski},\ and\ \citenamefont
  {Sangouard}}]{Oudot_2019}%
  \BibitemOpen
  \bibfield  {author} {\bibinfo {author} {\bibfnamefont {E.}~\bibnamefont
  {Oudot}}, \bibinfo {author} {\bibfnamefont {J.-D.}\ \bibnamefont {Bancal}},
  \bibinfo {author} {\bibfnamefont {P.}~\bibnamefont {Sekatski}},\ and\
  \bibinfo {author} {\bibfnamefont {N.}~\bibnamefont {Sangouard}},\ }\bibfield
  {title} {\bibinfo {title} {{B}ipartite nonlocality with a many-body system},\
  }\href {https://doi.org/10.1088/1367-2630/ab4c7c} {\bibfield  {journal}
  {\bibinfo  {journal} {New J. Phys.}\ }\textbf {\bibinfo {volume} {21}},\
  \bibinfo {pages} {103043} (\bibinfo {year} {2019})}\BibitemShut {NoStop}%
\bibitem [{\citenamefont {\v{S}upi\'c}\ and\ \citenamefont
  {Bowles}(2020)}]{Supic19}%
  \BibitemOpen
  \bibfield  {author} {\bibinfo {author} {\bibfnamefont {I.}~\bibnamefont
  {\v{S}upi\'c}}\ and\ \bibinfo {author} {\bibfnamefont {J.}~\bibnamefont
  {Bowles}},\ }\bibfield  {title} {\bibinfo {title} {{Self-testing of quantum
  systems: a review}},\ }\href {https://doi.org/10.22331/q-2020-09-30-337}
  {\bibfield  {journal} {\bibinfo  {journal} {{Quantum}}\ }\textbf {\bibinfo
  {volume} {4}},\ \bibinfo {pages} {337} (\bibinfo {year} {2020})}\BibitemShut
  {NoStop}%
\bibitem [{\citenamefont {Kaniewski}(2020)}]{Kaniewski:PRR:2020}%
  \BibitemOpen
  \bibfield  {author} {\bibinfo {author} {\bibfnamefont {J.}~\bibnamefont
  {Kaniewski}},\ }\bibfield  {title} {\bibinfo {title} {{W}eak form of
  self-testing},\ }\href {https://doi.org/10.1103/PhysRevResearch.2.033420}
  {\bibfield  {journal} {\bibinfo  {journal} {Phys. Rev. Research}\ }\textbf
  {\bibinfo {volume} {2}},\ \bibinfo {pages} {033420} (\bibinfo {year}
  {2020})}\BibitemShut {NoStop}%
\bibitem [{\citenamefont {Gigena}\ and\ \citenamefont
  {Kaniewski}(2022)}]{Gigena:PRA:2022}%
  \BibitemOpen
  \bibfield  {author} {\bibinfo {author} {\bibfnamefont {N.}~\bibnamefont
  {Gigena}}\ and\ \bibinfo {author} {\bibfnamefont {J.}~\bibnamefont
  {Kaniewski}},\ }\bibfield  {title} {\bibinfo {title} {{Q}uantum value for a
  family of ${I}_{3322}$-like {Bell} functionals},\ }\href
  {https://doi.org/10.1103/PhysRevA.106.012401} {\bibfield  {journal} {\bibinfo
   {journal} {Phys. Rev. A}\ }\textbf {\bibinfo {volume} {106}},\ \bibinfo
  {pages} {012401} (\bibinfo {year} {2022})}\BibitemShut {NoStop}%
\bibitem [{\citenamefont {Barizien}\ and\ \citenamefont
  {Bancal}(2024)}]{Barizien2024}%
  \BibitemOpen
  \bibfield  {author} {\bibinfo {author} {\bibfnamefont {V.}~\bibnamefont
  {Barizien}}\ and\ \bibinfo {author} {\bibfnamefont {J.-D.}\ \bibnamefont
  {Bancal}},\ }\bibfield  {title} {\bibinfo {title} {{E}xtremal {T}sirelson
  {I}nequalities},\ }\href {https://doi.org/10.1103/PhysRevLett.133.010201}
  {\bibfield  {journal} {\bibinfo  {journal} {Phys. Rev. Lett.}\ }\textbf
  {\bibinfo {volume} {133}},\ \bibinfo {pages} {010201} (\bibinfo {year}
  {2024})}\BibitemShut {NoStop}%
\bibitem [{\citenamefont {Liang}\ \emph {et~al.}(2010)\citenamefont {Liang},
  \citenamefont {Harrigan}, \citenamefont {Bartlett},\ and\ \citenamefont
  {Rudolph}}]{LHB+10}%
  \BibitemOpen
  \bibfield  {author} {\bibinfo {author} {\bibfnamefont {Y.-C.}\ \bibnamefont
  {Liang}}, \bibinfo {author} {\bibfnamefont {N.}~\bibnamefont {Harrigan}},
  \bibinfo {author} {\bibfnamefont {S.~D.}\ \bibnamefont {Bartlett}},\ and\
  \bibinfo {author} {\bibfnamefont {T.}~\bibnamefont {Rudolph}},\ }\bibfield
  {title} {\bibinfo {title} {Nonclassical correlations from randomly chosen
  local measurements},\ }\href {https://doi.org/10.1103/PhysRevLett.104.050401}
  {\bibfield  {journal} {\bibinfo  {journal} {Phys. Rev. Lett.}\ }\textbf
  {\bibinfo {volume} {104}},\ \bibinfo {pages} {050401} (\bibinfo {year}
  {2010})}\BibitemShut {NoStop}%
\bibitem [{\citenamefont {Liang}\ \emph {et~al.}(2011)\citenamefont {Liang},
  \citenamefont {V\'ertesi},\ and\ \citenamefont {Brunner}}]{Liang2011}%
  \BibitemOpen
  \bibfield  {author} {\bibinfo {author} {\bibfnamefont {Y.-C.}\ \bibnamefont
  {Liang}}, \bibinfo {author} {\bibfnamefont {T.}~\bibnamefont {V\'ertesi}},\
  and\ \bibinfo {author} {\bibfnamefont {N.}~\bibnamefont {Brunner}},\
  }\bibfield  {title} {\bibinfo {title} {{S}emi-device-independent bounds on
  entanglement},\ }\href {https://doi.org/10.1103/PhysRevA.83.022108}
  {\bibfield  {journal} {\bibinfo  {journal} {Phys. Rev. A}\ }\textbf {\bibinfo
  {volume} {83}},\ \bibinfo {pages} {022108} (\bibinfo {year}
  {2011})}\BibitemShut {NoStop}%
\bibitem [{\citenamefont {Cope}\ and\ \citenamefont {Colbeck}(2019)}]{Cope19}%
  \BibitemOpen
  \bibfield  {author} {\bibinfo {author} {\bibfnamefont {T.}~\bibnamefont
  {Cope}}\ and\ \bibinfo {author} {\bibfnamefont {R.}~\bibnamefont {Colbeck}},\
  }\bibfield  {title} {\bibinfo {title} {Bell inequalities from no-signaling
  distributions},\ }\href {https://doi.org/10.1103/PhysRevA.100.022114}
  {\bibfield  {journal} {\bibinfo  {journal} {Phys. Rev. A}\ }\textbf {\bibinfo
  {volume} {100}},\ \bibinfo {pages} {022114} (\bibinfo {year}
  {2019})}\BibitemShut {NoStop}%
\bibitem [{\citenamefont {William}\ \emph {et~al.}(2026)\citenamefont
  {William}, \citenamefont {Remy}, \citenamefont {Bancal}, \citenamefont {Cai},
  \citenamefont {Brunner},\ and\ \citenamefont {Pozas-Kerstjens}}]{WRB+2025}%
  \BibitemOpen
  \bibfield  {author} {\bibinfo {author} {\bibfnamefont {C.}~\bibnamefont
  {William}}, \bibinfo {author} {\bibfnamefont {P.}~\bibnamefont {Remy}},
  \bibinfo {author} {\bibfnamefont {J.-D.}\ \bibnamefont {Bancal}}, \bibinfo
  {author} {\bibfnamefont {Y.}~\bibnamefont {Cai}}, \bibinfo {author}
  {\bibfnamefont {N.}~\bibnamefont {Brunner}},\ and\ \bibinfo {author}
  {\bibfnamefont {A.}~\bibnamefont {Pozas-Kerstjens}},\ }\bibfield  {title}
  {\bibinfo {title} {Symmetric observations without symmetric causal
  explanations},\ }\href {https://doi.org/10.1103/y4tn-cx43} {\bibfield
  {journal} {\bibinfo  {journal} {Phys. Rev. A}\ }\textbf {\bibinfo {volume}
  {113}},\ \bibinfo {pages} {032219} (\bibinfo {year} {2026})}\BibitemShut
  {NoStop}%
\bibitem [{\citenamefont {Wang}\ \emph {et~al.}(2017)\citenamefont {Wang},
  \citenamefont {Singh},\ and\ \citenamefont {Navascu\'es}}]{Wang17}%
  \BibitemOpen
  \bibfield  {author} {\bibinfo {author} {\bibfnamefont {Z.}~\bibnamefont
  {Wang}}, \bibinfo {author} {\bibfnamefont {S.}~\bibnamefont {Singh}},\ and\
  \bibinfo {author} {\bibfnamefont {M.}~\bibnamefont {Navascu\'es}},\
  }\bibfield  {title} {\bibinfo {title} {{E}ntanglement and {N}onlocality in
  {I}nfinite 1{D} {S}ystems},\ }\href
  {https://doi.org/10.1103/PhysRevLett.118.230401} {\bibfield  {journal}
  {\bibinfo  {journal} {Phys. Rev. Lett.}\ }\textbf {\bibinfo {volume} {118}},\
  \bibinfo {pages} {230401} (\bibinfo {year} {2017})}\BibitemShut {NoStop}%
\bibitem [{\citenamefont {Machnes}(2007)}]{Machnes2007qlib}%
  \BibitemOpen
  \bibfield  {author} {\bibinfo {author} {\bibfnamefont {S.}~\bibnamefont
  {Machnes}},\ }\href@noop {} {\bibinfo {title} {{Q}{L}ib-{A} {M}atlab package
  for quantum information theory calculations with applications}} (\bibinfo
  {year} {2007}),\ \Eprint {https://arxiv.org/abs/0708.0478} {arXiv:0708.0478}
  \BibitemShut {NoStop}%
\bibitem [{\citenamefont {Bowles}\ \emph {et~al.}(2018)\citenamefont {Bowles},
  \citenamefont {\ifmmode \check{S}\else \v{S}\fi{}upi\ifmmode~\acute{c}\else
  \'{c}\fi{}}, \citenamefont {Cavalcanti},\ and\ \citenamefont
  {Ac\'{\i}n}}]{Bowles18}%
  \BibitemOpen
  \bibfield  {author} {\bibinfo {author} {\bibfnamefont {J.}~\bibnamefont
  {Bowles}}, \bibinfo {author} {\bibfnamefont {I.}~\bibnamefont {\ifmmode
  \check{S}\else \v{S}\fi{}upi\ifmmode~\acute{c}\else \'{c}\fi{}}}, \bibinfo
  {author} {\bibfnamefont {D.}~\bibnamefont {Cavalcanti}},\ and\ \bibinfo
  {author} {\bibfnamefont {A.}~\bibnamefont {Ac\'{\i}n}},\ }\bibfield  {title}
  {\bibinfo {title} {Self-testing of pauli observables for device-independent
  entanglement certification},\ }\href
  {https://doi.org/10.1103/PhysRevA.98.042336} {\bibfield  {journal} {\bibinfo
  {journal} {Phys. Rev. A}\ }\textbf {\bibinfo {volume} {98}},\ \bibinfo
  {pages} {042336} (\bibinfo {year} {2018})}\BibitemShut {NoStop}%
\bibitem [{\citenamefont {Ac\'{\i}n}\ \emph {et~al.}(2016)\citenamefont
  {Ac\'{\i}n}, \citenamefont {Pironio}, \citenamefont {V\'ertesi},\ and\
  \citenamefont {Wittek}}]{APVW16}%
  \BibitemOpen
  \bibfield  {author} {\bibinfo {author} {\bibfnamefont {A.}~\bibnamefont
  {Ac\'{\i}n}}, \bibinfo {author} {\bibfnamefont {S.}~\bibnamefont {Pironio}},
  \bibinfo {author} {\bibfnamefont {T.}~\bibnamefont {V\'ertesi}},\ and\
  \bibinfo {author} {\bibfnamefont {P.}~\bibnamefont {Wittek}},\ }\bibfield
  {title} {\bibinfo {title} {Optimal randomness certification from one
  entangled bit},\ }\href {https://doi.org/10.1103/PhysRevA.93.040102}
  {\bibfield  {journal} {\bibinfo  {journal} {Phys. Rev. A}\ }\textbf {\bibinfo
  {volume} {93}},\ \bibinfo {pages} {040102} (\bibinfo {year}
  {2016})}\BibitemShut {NoStop}%
\bibitem [{\citenamefont {Bancal}\ \emph {et~al.}(2011)\citenamefont {Bancal},
  \citenamefont {Brunner}, \citenamefont {Gisin},\ and\ \citenamefont
  {Liang}}]{BBGL11}%
  \BibitemOpen
  \bibfield  {author} {\bibinfo {author} {\bibfnamefont {J.-D.}\ \bibnamefont
  {Bancal}}, \bibinfo {author} {\bibfnamefont {N.}~\bibnamefont {Brunner}},
  \bibinfo {author} {\bibfnamefont {N.}~\bibnamefont {Gisin}},\ and\ \bibinfo
  {author} {\bibfnamefont {Y.-C.}\ \bibnamefont {Liang}},\ }\bibfield  {title}
  {\bibinfo {title} {Detecting genuine multipartite quantum nonlocality: A
  simple approach and generalization to arbitrary dimensions},\ }\href
  {https://doi.org/10.1103/PhysRevLett.106.020405} {\bibfield  {journal}
  {\bibinfo  {journal} {Phys. Rev. Lett.}\ }\textbf {\bibinfo {volume} {106}},\
  \bibinfo {pages} {020405} (\bibinfo {year} {2011})}\BibitemShut {NoStop}%
\bibitem [{\citenamefont {L\'opez-Rosa}\ \emph {et~al.}(2016)\citenamefont
  {L\'opez-Rosa}, \citenamefont {Xu},\ and\ \citenamefont {Cabello}}]{LRXC16}%
  \BibitemOpen
  \bibfield  {author} {\bibinfo {author} {\bibfnamefont {S.}~\bibnamefont
  {L\'opez-Rosa}}, \bibinfo {author} {\bibfnamefont {Z.-P.}\ \bibnamefont
  {Xu}},\ and\ \bibinfo {author} {\bibfnamefont {A.}~\bibnamefont {Cabello}},\
  }\bibfield  {title} {\bibinfo {title} {Maximum nonlocality in the (3,2,2)
  scenario},\ }\href {https://doi.org/10.1103/PhysRevA.94.062121} {\bibfield
  {journal} {\bibinfo  {journal} {Phys. Rev. A}\ }\textbf {\bibinfo {volume}
  {94}},\ \bibinfo {pages} {062121} (\bibinfo {year} {2016})}\BibitemShut
  {NoStop}%
\bibitem [{\citenamefont {Vallins}\ \emph {et~al.}(2017)\citenamefont
  {Vallins}, \citenamefont {Sainz},\ and\ \citenamefont {Liang}}]{Vallins2017}%
  \BibitemOpen
  \bibfield  {author} {\bibinfo {author} {\bibfnamefont {J.}~\bibnamefont
  {Vallins}}, \bibinfo {author} {\bibfnamefont {A.~B.}\ \bibnamefont {Sainz}},\
  and\ \bibinfo {author} {\bibfnamefont {Y.-C.}\ \bibnamefont {Liang}},\
  }\bibfield  {title} {\bibinfo {title} {{A}lmost-quantum correlations and
  their refinements in a tripartite {B}ell scenario},\ }\href
  {https://doi.org/10.1103/PhysRevA.95.022111} {\bibfield  {journal} {\bibinfo
  {journal} {Phys. Rev. A}\ }\textbf {\bibinfo {volume} {95}},\ \bibinfo
  {pages} {022111} (\bibinfo {year} {2017})}\BibitemShut {NoStop}%
\end{thebibliography}
\end{document}